\def\BState{\State\hskip-\ALG@thistlm}
\newcommand{\REMOVE}[1]{}
\renewcommand{\phi}{\varphi}
\newcommand{\leaf}{\text{leaf}}
\newcommand{\down}{\text{down}}
\newcommand{\skel}{\text{skeleton}}
\newcommand{\sw}{\textsf{switch}}
\newtheorem{theorem}{Theorem}
\newtheorem{lemma}[theorem]{Lemma}
\newtheorem{observation}[theorem]{Observation}
\newtheorem{corollary}[theorem]{Corollary}
\newtheorem{claim}[theorem]{Claim}
\newcommand{\ShoLong}[2]{#2}
\newcommand{\FullSODA}[2]{#2}
\renewcommand{\emph}{\textbf}
\title{
Reconfiguration of Connected Graph Partitions\thanks{Research supported in part by NSF CCF-1422311, CCF-1423615, and OIA-1937095. }}
\author{
 Hugo A. Akitaya\thanks{Department of Computer Science, University of Massachusetts Lowell, Lowell, MA, USA.}
\and
Matthew D. Jones\thanks{Khoury College of Computer Sciences, Northeastern University, Boston, MA, USA.}
\and
Matias Korman\thanks{Siemens Electronic Design Automation, Wilsonville, OR, USA.}
\and
Oliver Korten\thanks{Department of Computer Science, Columbia University, New York, NY, USA.}
\and
Christopher Meierfrankenfeld\thanks{Department of Computer Science, Tufts University, Medford, MA, USA.}
\and
Michael J. Munje\thanks{Department of Computer Science, California State University Northridge, Los Angeles, CA, USA.}
\and
Diane L. Souvaine\footnotemark[6]
\and
Michael Thramann\footnotemark[6]
\and
Csaba D. T\'oth\thanks{Department of Mathematics, California State University Northridge, Los Angeles, CA, USA.}~~\footnotemark[6]
}
\begin{document}

\graphicspath{{figs/}}

\newcommand{\contract}{shrink}
\newcommand{\contracted}{shrunk}
\newcommand{\Contract}{Shrink}
\newcommand{\contractible}{shrinkable}
\newcommand{\Contractible}{Shrinkable}
\newcommand{\contractibility}{shrinkability}
\newcommand{\Contractibility}{Shrinkability}
\newcommand{\incontractibility}{unshrinkability}
\newcommand{\Incontractibility}{Unshrinkability}
\newcommand{\incontractible}{unshrinkable}
\newcommand{\Incontractible}{Unshrinkable}

\maketitle
\thispagestyle{empty}

\begin{abstract}
Motivated by recent computational models for redistricting and detection of gerrymandering, we study the following problem on graph partitions. Given a graph $G$ and an integer $k\geq 1$, a \emph{$k$-district map} of $G$ is a partition of $V(G)$ into $k$ nonempty subsets, called \emph{districts}, each of which induces a connected subgraph of $G$. A \emph{switch} is an operation that modifies a $k$-district map by reassigning a subset of vertices from one district to an adjacent district; a \emph{1-switch} is a switch that moves a single vertex.
We study the connectivity of the configuration space of all $k$-district maps of a graph $G$ under 1-switch operations. We give a combinatorial characterization for the connectedness of this space that can be tested efficiently. We prove that it is PSPACE-complete to decide whether there exists a sequence of 1-switches that takes a given $k$-district map into another; and NP-hard to find the shortest such sequence (even if a sequence of polynomial length is known to exist).
We also present efficient algorithms for computing a sequence of 1-switches that takes a given $k$-district map into another when the space is connected, and show that these algorithms perform a worst-case optimal number of switches up to constant factors.
\end{abstract}


\section{Introduction}
\label{sec:intro}
An \emph{electoral district} is a subdivision of territory used in the election of members to a legislative body. \emph{Gerrymandering} is the practice of drawing district boundaries with the intent to give political advantage to a particular group; it 
tends to occur in electoral systems that elect one representative per district. 
Detecting whether gerrymandering has been employed in designing a given district map 
and producing unbiased district maps are important problems to ensure fairness in the outcome of elections. Numerous quality measures have been proposed for the comparison of district maps~\cite{Moon18+,Moon18}, but none of them is known to eliminate bias. Research has focused on exploring the space of all possible district maps that meet certain basic criteria.
Since this space is computationally intractable, even for relatively small instances, randomized algorithms play an important role in finding ``average'' district maps under suitable distributions~\cite{BGH+17}. Being an outlier may indicate that gerrymandering
has been applied in the drawing of a given map~\cite{HRM17}.

Fifield et al.~\cite{MCMC20} model a district map as a vertex partition on an adjacency graph of census tracts or voting precincts. A \emph{census tract} is a small territorial subdivision used as a geographic unit in a census. Each district corresponds to a set of census tracts in the partition and must induce a connected subgraph. The graphs currently used in practice are the dual graphs of a terrain partition, where two districts are adjacent if and only if their boundaries intersect in at least one point. Because of degeneracies, five ``wedge-like'' districts may meet at a single point and induce a $K_5$ in the dual graph.\footnote{Similar phenomenon occurs around a lake, where all districts adjacent to the water are pairwise adjacent.} 
In particular, the district maps are not necessarily planar.

Starting from a given district map, one can obtain another map by switching a subset of census tracts from one district to another. The goal is to apply a sequecne of such operations randomly, and arrive at a uniformly random sample of the space of all possible district maps that meet the desired criteria.
Under some assumptions, Fifield et al.~\cite{MCMC20} proves that the Markov chain produced by their experiments is ergodic\footnote{A Markov chain is {\em ergodic} if it is aperiodic and positive recurrent (that is, each state has a positive probability to be revisited, see~\cite{MCMC20} for more details).}. 
More interestingly, if the assumptions hold, it will have a unique stationary distribution, which is approximately uniform on the space of all $k$-district maps.
One of the assumptions is that the underlying sample space is connected under the switch operation.
However, connectedness is only assumed and remains unproven in~\cite{MCMC20}.

In this paper, we provide a rigorous graph-theoretic background for studying the space of district maps with a given number of districts. We focus on the 1-switch operation that moves precisely one vertex from one district to an adjacent one.
The remainder of the paper will call such an operation simply a ``switch.''
Other than requiring connectedness of districts, we do not impose any other restrictions on the district maps.
In particular, the size of a district can be any integer in the range $[1, n-k+1]$ where $n$ is the number of census tracts. 

The fact that the space of all $k$-district maps is connected in our model implies that any aperiodic Markov chain is also ergodic on the subset of $k$-district maps that meet the desired criteria. 
Thus, our results have implications for models with additional desired criteria (other than connectivity). A natural criterion relevant for the gerrymandering setting is that district maps remain \emph{balanced} (that is, all sets have roughly the same size). 
Besides being an important step in showing theoretical soundness of a Markov-chain-based sampling approach, our results demonstrate how the connectivity of the space relates to how well the adjacency graph is connected. This in turn helps design new operations to traverse the space, and provides a framework for comparing them.

\smallskip\noindent\textbf{Our Results.}
We consider the graph-theoretic model from~\cite{MCMC20}. For an $n$-vertex graph $G$ (the adjacency graph of precincts or census tracts) and an integer $1\leq k\le n$, we consider the \emph{switch graph} $\Gamma_k(G)$ in which each node corresponds to a partition of $V(G)$ into $k$ nonempty subsets (districts), each of which induces a connected subgraph of $G$, and an edge corresponds to switching one vertex from one district to an adjacent district (see Section~\ref{sec:pre} for a definition). 
We do not assume planarity of $G$ unless noted otherwise.

\begin{enumerate}
\item \textbf{Connectedness.} We prove that  $\Gamma_k(G)$ is connected if $G$ is biconnected (Theorem~\ref{thm:2conn-alg}), and give a combinatorial characterization of the connectedness of  $\Gamma_k(G)$ that can be tested in $O(n+m)$ time, where $n=|V(G)|$ and $m=|E(G)|$ (Theorem~\ref{thm:conn-test}).
    In general, however, it is PSPACE-complete to decide whether two given nodes of $\Gamma_k(G)$ are in the same connected component even when $G$ is planar (Theorem~\ref{thm:planar pspace}), or $G$ is nonplanar and $k=2$ (Theorem~\ref{thm:two district pspace}).
\item \textbf{\Contractible\ Districts.} One of our key methods to modify a district map is to \contract\ a district into a single vertex by a sequence of switch operations. If this is feasible, we call the district \emph{\contractible}; if all districts are \contractible, we call the district map \emph{\contractible}.
We prove that the subgraph $\Gamma_k'(G)$ of $\Gamma_k(G)$  induced by \contractible\ district maps is connected if $G$ is connected (Theorem~\ref{thm:general-graphs}).
\item \textbf{Diameter.} When $G$ is biconnected, the diameter of $\Gamma_k(G)$ is in $O(kn)$, 
where $n=|V(G)|$ (Theorem~\ref{thm:general-graphs}), and this bound is the best possible  (Theorem~\ref{thm:contractible-LB+}).
When $\Gamma_k(G)$ is disconnected, the diameter of a component may be as large as $2^{\Omega(n)}$ even for planar graphs (Corollary~\ref{corollary:exp switch diameter}).
\item \textbf{Shortest Path.} Finding the distance between two nodes of $\Gamma_k(G)$ is NP-hard, even if $\Gamma_k(G)$ is connected  (Theorem~\ref{thm:biconnected-hardness}).
\end{enumerate}

\ShoLong{Due to lack of space, portions of the document have been moved to the Appendix.}{}
\smallskip

\noindent\textbf{Related Previous Work.}
Graph partitions and graph clustering algorithms~\cite{GraphPartition} are widely used in divide-and-conquer strategies. These algorithms, however, do not explore the space of all partitions into $k$ connected subgraphs. Evolutionary algorithms~\cite{Sanders2012}, in this context, modify a partition by random ``mutations,'' which are successive coarsening and uncoarsening operations, rather than moving vertices from one subgraph to another.

While the adjacency graph model for district maps has been used for decades in combinatorial optimization and operations research~\cite{RiccaSS13}, the objective was finding optimal district maps under one or more criteria. Since exhaustive search is infeasible and most variants of the optimization problem are intractable~\cite{PuppeT09}, local search heuristics were suggested~\cite{RiccaS08}.
Several combinatorial results restrict $G$ to be a square grid~\cite{ApollonioBLRS09,PuppeT08}.
Heuristic and intractability results are also available for geometric variants of the optimization problem,
where districts are polygons in the plane~\cite{FleinerNT17,KMV-hardness-19,RiccaSS08}.

Elementary graph operations similar to our 1-switch operation have also been studied.
Motivated by the classical ``Fifteen'' puzzle, Wilson~\cite{WILSON1974} studied the configuration space of $t$, $t< n$, indistinguishable pebbles (a.k.a.~tokens~\cite{MonroyFHHUW12}) on the vertices of a graph $G$ with $n$ vertices, where each pebble occupies a unique vertex of $G$, and can move to any adjacent unoccupied vertex.
The occupied and unoccupied vertices partition $V(G)$ into two subsets. Crucially, the number of pebbles is fixed, and the occupied vertices need not induce a connected subgraph. Results include a combinatorial characterization of the configuration space (a.k.a.~token graph)~\cite{WILSON1974}, NP-hardness for deciding connectedness~\cite{KornhauserMS84}, finding the shortest path between two configurations~\cite{Goldreich11,RatnerW90}, and bounds on the diameter and the connectivity of the configuration space~\cite{LeanosT18,MonroyFHHUW12}. Demaine et al.~\cite{DemaineDFHIOOUY15}
considered a subgraph of the token graph, where the tokens are located at an independent set.
The diameter and shortest path in the configuration space can often be computed efficiently when
the underlying graph $G$ is a tree~\cite{AulettaMPP99,DemaineDFHIOOUY15}, or chordal~\cite{BonamyB17}.
There are a few results that require the occupied vertices to induce a connected subgraph, but they are limited to the case where $G$ is a grid~\cite{DumitrescuP06,KomuravelliSB09}, and the number of pebbles is still fixed.

Goraly et al.~\cite{GoralyH10} later considered colored pebbles (tokens). Each color class consists of indistinguishable pebbles, unoccupied vertices are considered as one of the color classes ~\cite{FujitaNS12,YamanakaDIKKOSS15,YamanakaHKKOSUU18}: Hence all vertices in $V(G)$ are occupied and a move can swap the pebbles on two adjacent vertices. The color classes (including the ``unoccupied'' color) partition $V(G)$ into subsets. However, the cardinality of each color class remains fixed and the color classes need not induce  connected subgraphs. Results, again, include combinatorial characterizations to connected configuration space~\cite{FoersterGHKSW17}, NP-completeness for the connectedness of the configuration space for $k\geq 3$ colors, and a polynomial-time algorithm for finding the shortest path for $k=2$ colors. See~\cite{BonnetMR18,MiltzowNORTU16} for recent results on the parametric complexity of these problems.

The problem of partitioning a graph $G$ into $k$ connected subgraphs with equal (or almost equal) number of vertices is known as the \emph{Balanced Connected $k$-Partition Problem} (BCP$_k$), which is NP-hard already for $k=2$~\cite{DyerF85}, for grids in general~\cite{BerengerNP18}, and also hard to approximate within an absolute error of $n^{1-\delta}$~\cite{Chlebikova96}. 

\smallskip\noindent\textbf{Organization.}
Section~\ref{sec:pre} defines the reconfiguration problem formally, and describes some important properties of \contractible\ districts. Section~\ref{sec:alg} shows that $\Gamma_k(G)$ is connected if $G$ is biconnected, and $\Gamma_k'(G)$ is connected if $G$ is connected. Section~\ref{ssec:hardness-connect} presents our PSPACE-completeness proof and lower bounds for the diameter of $\Gamma_k(G)$, and Section~\ref{ssec:hardnes-shortest} continues with our NP-hardness results for the shortest path problem.
\ShoLong{}{We conclude in Section~\ref{sec:con} with open problems.}

\section{Preliminaries}
\label{sec:pre}


Let $G=(V,E)$ be a connected graph. A \emph{$k$-district map} $\Pi$ of $G$ is a partition of $V(G)$ into disjoint nonempty subsets $\{V_1,\ldots,V_k\}$ such that the subgraph induced by $V_i$ is connected for all $i\in\{1,\ldots,k\}$.
Each subgraph induced by $V_i$ is called a \emph{district}.
We abuse the notation by writing $\Pi(v)$ for the subset in $\Pi$ that contains vertex $v$. We now formally define the switch operation. Our definition matches the previous informal description. Given a $k$-district map $\Pi=\{V_1,\ldots,V_k\}$, and a path $(u,v,w)$ in $G$ such that $\Pi(u)=\Pi(v)\neq \Pi(w)$, a \emph{switch} (denoted \sw$_\Pi(u,v,w)$) is an operation that returns a $k$-district map obtained from $\Pi$ by removing $v$ from the subset $\Pi(u)$ and adding it to $\Pi(w)$.
More formally, 
\[\sw_\Pi(u,v,w)=\Pi'=(\Pi\setminus\{\Pi(u),\Pi(w)\})\cup\{\Pi(u)\setminus\{v\},\Pi(w)\cup\{v\}\}\] if $ \Pi'$ is a $k$-district map.
Note that \sw$_\Pi(u,v,w)$ is not defined if $\Pi(v)\setminus\{v\}$ induces a disconnected subgraph.
A switch is always reversible since if \sw$_\Pi(u,v,w)=\Pi'$, then \sw$_{\Pi'}(w,v,u)=\Pi$.
We may omit the subscript when the map in which the switch is applied is clear from context.
For every graph $G$ and integer $k$, the \emph{switch graph} $\Gamma_{k}(G)$
is the graph whose vertex set is the  set of all $k$-district maps of $G$, and $\Pi_1, \Pi_2\in V(\Gamma_{k}(G))$ are connected by an edge if there exist $u,v,w\in V(G)$ such that \sw$_{\Pi_1}(u,v,w)=\Pi_2$.

\global\def \preliminarysec {
\label{ssec:2con}
Biconnectivity plays an important role in our proofs. In particular, we rely on the concept of a \emph{block tree}, which represents the containment relation between the blocks (maximal biconnected components) and the cut vertices of a connected graph, and a \emph{SPQR tree}, which is a recursive decomposition of a biconnected graph. We review both concepts here.

\begin{figure}[htpb]
\begin{center}
\includegraphics[width=\textwidth]{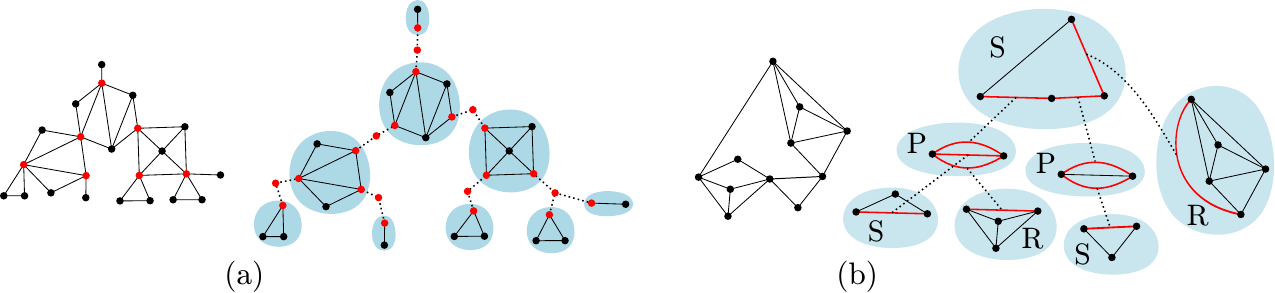}
\caption{(a) A connected graph and its block tree. Cut vertices are shown in red, and dotted lines connect two occurrences of the same vertex in adjacent nodes of the block tree. (b) A 2-connected graph and its SPQR tree. Virtual edges are shown in red. Corresponding pairs of virtual edges are connected with dotted lines.}
\label{fig:block-spqr}
\end{center}
\end{figure}

\smallskip\noindent\textbf{Block Trees.} Let $G$ be a connected graph. Let $B(G)$ be the set of \emph{blocks} of $G$. (Two adjacent vertices induce a 2-connected subgraph, so a block may be a subgraph with a single edge.) Let $C(G)$ be the set of cut vertices in $G$. Then the \textbf{block tree} $T=T(G)$ is a bipartite graph, whose vertex set is $V(T)=B(G)\cup C(G)$, and $T$ contains an edge $(W,c)\in B(G)\times C(G)$ if and only if $c\in W$ (i.e., block $W$ contains vertex $c$). The definition immediately implies that a leaf and its unique neighbor induce a block $W\in B(G)$ (and never a cut vertex in $C(G)$). The block tree can be computed in $O(|E(G)|)$ time and space~\cite{HopcroftT73}.
For convenience, we label every biconnected component by its vertex set (i.e., for a block $W\in B(G)$, we denote by $W$ the set of vertices in the block).

\smallskip\noindent\textbf{SPQR Trees.}  Let $G$ be a biconnected graph. A deletion of a (vertex) 2-cut $\{u,v\}$ disconnects $G$ into two or more components $C_1,\ldots ,C_i$, $i\geq 2$. A \textbf{split component} of $\{u,v\}$ is the subgraphs of $G$ induced by $V(C_j)\cup\{u,v\}$ for $j=1,\ldots, i$, or the graph induced by $\{u,v\}$ if $uv\in E(G)$.
The SPQR-tree $T_G$ of $G$ represents a recursive decomposition of $G$ defined by its 2-cuts.
A node $\mu$ of $T_G$ is associated with a multigraph called \emph{$\skel(\mu)$} on a subset of $V(G)$ obtained by adding a \textbf{virtual} edge $uv$ to a split component of the 2-cut $\{u,v\}$, or by creating a virtual (parallel) edge $uv$ for each split component of $\{u,v\}$. 
Hence, an edge in $\skel(\mu)$ is \textbf{real} if it is an edge in $G$, or virtual otherwise.
A node $\mu$ has a \textbf{type} in $\{$S,P,R$\}$.
If the type of $\mu$ is S, then $\skel(\mu)$ is a cycle of 3 or more vertices.
If the type of $\mu$ is P, then $\skel(\mu)$ consists of 3 or more parallel edges between a pair of vertices.
If the type of $\mu$ is R, then $\skel(\mu)$ is a 3-connected graph on 4 or more vertices.
Two nodes $\mu_1$ and $\mu_2$ of $T_G$ are adjacent if $\skel(\mu_1)$ and $\skel(\mu_2)$ share exactly two vertices, $u$ and $v$, that form a 2-cut in $G$.
Each virtual edge in $\skel(\mu)$
has a corresponding pair in $\skel(\mu')$ for some adjacent node $\mu'$; see Figure~\ref{fig:block-spqr}(b).
The graph $G$ can be reconstructed from the skeletons of the nodes in $T_G$ by identifying every pair of corresponding virtual edges and then deleting all virtual edges. No two S nodes (resp., no two P nodes) are adjacent. Therefore, $T_G$ is uniquely defined by $G$.
If $\mu$ is a leaf in $T_G$, then $\skel(\mu)$ has a unique virtual edge; in particular the type of every leaf is S or R.
The SPQR tree $T_G$ has $O(|E(G)|)$ nodes and can be computed in $O(|E(G)|)$ time~\cite{BattistaT96}. 

}

\subsection{Block Trees and SPQR Trees}
\preliminarysec

\subsection{\Contractibility}

Consider a graph $G$ and a $k$-district map $\Pi$. We say that the operation \sw$_\Pi(u,v,w)$
\emph{\contract s} $\Pi(u)$ to $\Pi(u)\setminus\{v\}$, and \emph{expands} $\Pi(w)$ to $\Pi(w)\cup\{v\}$.
%
%
A sequence of switches \emph{\contract s} (resp., \emph{expands}) $V_i$ to $V_i'$ if there exists a sequence of consecutive switches that jointly \contract\ (resp., expand) $V_i$ to $V_i'$. A subset $V_i\in\Pi$ (and its induced district) is \emph{\contractible} if it can be \contracted\ to a singleton (district of size one)
by a sequence of $|V_i|-1$ switches; otherwise it is \emph{\incontractible}.
A $k$-district map is \emph{\contractible} if each of its districts is \contractible.
A district $V_i$ is said to \emph{contain} a block $W\in B(G)$ if it contains all vertices in $W$.

In the remainder of this section we state some simple properties that will be used later. 
\begin{lemma}\label{lem:leaves}
A switch operation cannot move a leaf of $G$ from one district to another.
\end{lemma}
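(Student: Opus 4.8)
The plan is to argue directly from the definition of a switch, with no case analysis needed. Recall that a switch $\sw_\Pi(u,v,w)$ is defined on a \emph{path} $(u,v,w)$ in $G$ with $\Pi(u)=\Pi(v)\neq\Pi(w)$, and the single vertex it relocates is the middle vertex $v$ (it is removed from $\Pi(u)$ and added to $\Pi(w)$). So to prove the lemma it suffices to show that the relocated vertex $v$ can never be a leaf of $G$, i.e.\ that $\deg_G(v)\geq 2$ whenever the switch is defined.

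First I would unpack what ``$(u,v,w)$ is a path in $G$'' gives us: the edges $uv$ and $vw$ both belong to $E(G)$, so $u$ and $w$ are neighbors of $v$. Next I would observe that $u\neq w$: this is forced by the requirement $\Pi(u)\neq\Pi(w)$, since $u$ and $w$ lie in different districts and hence cannot be the same vertex. Combining these two facts, $v$ has (at least) two distinct neighbors $u$ and $w$, whence $\deg_G(v)\geq 2$. A leaf of $G$ has degree exactly $1$, so $v$ is not a leaf, and therefore no switch operation moves a leaf from its district to another.

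There is essentially no hard step here; the only point requiring any care is making explicit that the vertex a switch acts on is the middle vertex $v$ of the path, and that the constraint $\Pi(u)\neq\Pi(w)$ already guarantees $u\neq w$ (so the two neighbors are genuinely distinct and the degree bound is not double-counting a single neighbor). Given those observations, the conclusion is immediate.
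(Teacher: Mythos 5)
Your proof is correct and takes essentially the same approach as the paper: both arguments reduce to the observation that the switched vertex is the middle vertex of a path $(u,v,w)$ and a leaf, having a single neighbor, cannot occupy that position (you merely make explicit, via $\Pi(u)\neq\Pi(w)$ forcing $u\neq w$, why the two neighbors are distinct, where the paper phrases the same fact as a contradiction).
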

\begin{proof}
Let $v\in V(G)$ be a leaf in $G$, and let $u\in V(G)$ be its unique neighbor.
Since $v$ is a leaf there is no path $(u,v,w)$ and hence there is no valid \sw$_\Pi(u,v,w)$ moving $v$ to another district.
\end{proof}

\begin{lemma}\label{lem:con1}
Let $T$ be the block tree of a graph $G$, and let $\Pi$ be a $k$-district map on $G$. If a district $V_\ell$ contains two leaves of $T$, say $W_i, W_j \in B(G)$, then a switch operation cannot move any vertex from $W_i\cup W_j$ to another district. Consequently, $V_\ell$ is \incontractible.
\end{lemma}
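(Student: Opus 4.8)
The plan is to classify every vertex of $w_i\cup w_j$ by its role in the block tree and to exhibit, for each class, a reason why no switch can move it. Because $w_i$ is a \emph{leaf} of $T$, it has a single neighbor in $T$, which by the definition of the block tree is the unique cut vertex of $G$ lying in $w_i$; denote it $c_i$, and let $c_j$ be the analogous cut vertex of $w_j$ (the two may coincide). The structural facts I would record first are: (i) every vertex $v\in w_i\setminus\{c_i\}$ belongs to no block other than $w_i$ and hence has all of its neighbors inside $w_i$; and (ii) $w_i\setminus\{c_i\}$ is an entire connected component of $G-c_i$. Fact (i) holds because a non-cut vertex lies in a single block and every edge at $v$ lies in that block, and (ii) follows from (i) together with the biconnectivity of $w_i$ (the only route out of $w_i$ in $G$ passes through $c_i$). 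The symmetric statements hold for $w_j$. Throughout I read ``$V_\ell$ contains the leaf $w_i$'' as $w_i\subseteq V_\ell$.

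With these facts in hand I would dispose of the two kinds of vertices separately. For a \emph{private} vertex $v\in (w_i\setminus\{c_i\})\cup(w_j\setminus\{c_j\})$, fact (i) says every neighbor of $v$ lies in $w_i\cup w_j\subseteq V_\ell$; since $\sw_\Pi(u,v,w)$ requires a neighbor $w$ of $v$ with $\Pi(w)\neq\Pi(v)=V_\ell$, no switch can move $v$. For the cut vertex $c_i$ I would invoke fact (ii): the sets $w_i\setminus\{c_i\}$ and $w_j\setminus\{c_j\}$ are both nonempty, are disjoint (two distinct blocks meet in at most a common cut vertex), avoid $c_i$, and therefore lie in different components of $G-c_i$. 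As both are contained in $V_\ell$, the subgraph induced by $V_\ell\setminus\{c_i\}$ is disconnected, so removing $c_i$ cannot produce a valid district and no switch can move $c_i$; the same reasoning applies to $c_j$. These two cases exhaust $w_i\cup w_j$, which establishes the first assertion.

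For incontractibility I would argue that this obstruction is preserved along any attempted contraction. Contracting $V_\ell$ to a singleton uses exactly $|V_\ell|-1$ switches, each of which removes one vertex from the current district, so the district only ever loses vertices. The previous paragraph shows that no vertex of $w_i\cup w_j$ can be removed as long as the current district still contains all of $w_i\cup w_j$; a short induction on the number of switches performed then shows that $w_i\cup w_j$ remains inside the district at every step. Since $|w_i\cup w_j|\geq 3>1$, the district can never be reduced to a single vertex, so $V_\ell$ is incontractible.

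The step I expect to be the main obstacle is the cut-vertex case: one must verify carefully that $w_i\setminus\{c_i\}$ and $w_j\setminus\{c_j\}$ really do fall into distinct components of $G-c_i$, and in particular handle the degenerate case $c_i=c_j$ in which both leaf blocks hang off the same cut vertex, so that deleting that single vertex must still separate them. By comparison, the private-vertex case and the monotonicity argument underpinning incontractibility are routine.
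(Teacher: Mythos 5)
Your proof is correct and follows essentially the same route as the paper's: non-cut vertices of a leaf block cannot move because all their neighbors lie inside the block (hence inside $V_\ell$), the cut vertex cannot move because its removal disconnects $w_i\setminus\{c_i\}$ from $w_j\setminus\{c_j\}$ within $V_\ell$, and incontractibility follows since $w_i\cup w_j$ persists in the district under any switch sequence. Your write-up is somewhat more explicit than the paper's (notably the component structure of $G-c_i$, the degenerate case $c_i=c_j$, and the induction underlying persistence), but these are elaborations of the same argument rather than a different approach.
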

\begin{proof}
Suppose, for the sake of contradiction, that $W_i\cup W_j\subseteq V_\ell$ and a switch moves some vertex $v\in W_i\cup W_j$ to another district. 
Since $W_i$ and $W_j$ are leaves in $T$, only their cut vertices can be adjacent to vertices outside of $W_i\cup W_j$. 
Then, $v$ is a cut vertex in $\Pi(v)$ and $\Pi(v)\setminus \{v\}$ does not induce a connected subgraph in $G$, a contradiction.

Since $W_i \ne W_j$, there are at least two vertices, one from each block, that remain in $V_\ell$ after any sequence of switch operation. Consequently, $V_\ell$ cannot become a singleton.
\end{proof}

\begin{lemma}\label{lem:con2}
Let $\Pi$ be a $k$-district map on $G$ for some $k\geq 2$, and let $V_i\in \Pi$ such that $V_i$ contains at most one leaf of the block tree $T$ of $G$. Then $V_i$ is \contractible. Furthermore,
\begin{itemize}
	\item if $V_i$ does not contain any leaf of the block tree, then $V_i$ can be \contracted\ to any of its vertices;
	\item if $V_i$ contains a leaf $W_j\in B(G)$ of the block tree, then $V_i$ can be \contracted\ to a vertex $v$ if and only if $v\in W_j$ and $v$ is not the parent cut vertex of $W_j$.
\end{itemize}
In both cases, a sequence of $|V_i|-1$ switches that \contract\ $V_i$ can be computed in $O(|E(G)|)$ time.
\end{lemma}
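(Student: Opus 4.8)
The plan is to establish both directions by analysing how vertices can be peeled off $V_i$ one at a time, using the block--cut tree of the induced subgraph $H=G[V_i]$ and relating its leaf blocks to leaf blocks of $G$. Throughout, call $v\in V_i$ \emph{removable} if it has a neighbour outside $V_i$ and $G[V_i\setminus\{v\}]$ is connected; these are exactly the vertices that a single switch can move out of $V_i$ while keeping the map valid.

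First I would dispatch the necessity (the ``only if'' of the characterisation) by the argument behind Lemma~\ref{lem:con1}. If $V_i$ contains a leaf block $w_j$, its articulation point $c_j$ is its unique cut vertex, so $w_j\cap C(G)=\{c_j\}$, and every vertex of $w_j\setminus\{c_j\}$ has all of its $G$-neighbours inside $w_j$. Consider the first vertex of $w_j\setminus\{c_j\}$ to be switched out in any sequence: it needs an external neighbour, which can only be $c_j$, so $c_j$ must already have left. But $c_j$ separates $w_j\setminus\{c_j\}$ from the rest of $G$, so removing $c_j$ while $w_j\setminus\{c_j\}$ is intact and $V_i$ still meets $V(G)\setminus w_j$ disconnects $V_i$ --- impossible. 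This rules out terminating at $c_j$ or at any vertex outside $w_j$, leaving only $w_j\setminus C(G)$; when $V_i$ contains no leaf block there is no restriction to prove.

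The crux is the structural claim powering sufficiency: if $|V_i|\ge 2$ and $V_i\subsetneq V(G)$, then $V_i$ contains at least two removable vertices when it has no leaf block of $G$, and at least one removable vertex different from any prescribed target $v^*\in w_j\setminus C(G)$ when it contains the single leaf block $w_j$. I would prove this by contrapositive through the block--cut tree $T_H$ of $H=G[V_i]$. Each leaf block $\beta$ of $T_H$, with articulation point $c_\beta$, either contains a boundary vertex in $\beta\setminus\{c_\beta\}$ --- which is non-cut in $H$ and hence removable --- or has all of $\beta\setminus\{c_\beta\}$ interior to $V_i$. In this ``bad'' case every vertex of $\beta\setminus\{c_\beta\}$ has all its $G$-neighbours in $\beta$, so any two internally vertex-disjoint paths (Menger) from such a vertex to a vertex of $V(G)\setminus\beta$ would both be forced through $c_\beta$; this contradiction shows $\beta$ is a full block of $G$ whose only cut vertex is $c_\beta$, i.e.\ a leaf block of $G$ contained in $V_i$. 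Distinct bad leaf blocks of $H$ have distinct vertex sets and thus give distinct leaf blocks of $G$, so there is at most one bad leaf block. Since a block--cut tree containing a cut vertex has at least two leaf blocks (and a single-block $H$ with at most one boundary vertex would itself be a leaf block of $G$), the surviving ``good'' leaf blocks supply the claimed removable vertices, disjoint from $w_j$ in the one-leaf case.

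With the structural claim in hand, sufficiency and the positive half of the characterisation follow by induction on $|V_i|$: maintain the invariant that the current district is connected, contains at most one leaf block of $G$, and has $v^*$ as a valid target (in the contained leaf block minus its cut vertex, or unconstrained otherwise); at each step remove a removable vertex other than $v^*$, which exists whenever two vertices remain. Removing a vertex cannot create a newly contained leaf block (containment is monotone under deletion), so the invariant is preserved, and the process ends at $\{v^*\}$ after exactly $|V_i|-1$ switches. For the running time, the block tree is computed once in $O(|E(G)|)$, identifying $C(G)$ and any contained leaf block and hence the admissible targets; the $|V_i|-1$ switches are then realised in $O(|E(G)|)$ total by fixing a spanning tree of $H$ rooted at $v^*$ and repeatedly removing an exposed leaf of the current tree, keeping for each vertex a counter of how many neighbours already lie outside the district so that ``exposed'' is tested in $O(1)$ and updated in $O(\deg)$ when a vertex leaves. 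I expect the main obstacle to be precisely the structural claim --- in particular the identification of each bad leaf block of $H$ with a genuine leaf block of $G$, which is what couples the easily tested hypothesis on $T(G)$ to the existence of a peeling order; the necessity direction and the linear-time bookkeeping are comparatively routine.
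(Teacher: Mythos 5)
Your characterization and existence arguments are correct and essentially the paper's own proof: the paper likewise proves necessity by observing that no vertex of $w_j\setminus\{c_j\}$ can leave until $c_j$ and everything outside $w_j$ has left, and proves sufficiency by taking the block tree $T'$ of $G'=G[V_i]$ rooted at the block containing the target, picking a non-root leaf block $w'$ with articulation vertex $c'$, and arguing that $w'$ cannot be a full leaf block of $G$, whence some $v\in w'\setminus\{c'\}$ has a neighbour outside $V_i$. Your ``bad leaf block of $H$ is a genuine leaf block of $G$'' claim is exactly the contrapositive of that step, and your monotonicity argument for maintaining the invariant is also the paper's.

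The genuine gap is in the $O(|E(G)|)$-time realization, which is part of the lemma's statement. Fixing a spanning tree of $H$ rooted at $v^*$ and repeatedly deleting an \emph{exposed} leaf of the current tree can deadlock even though the district is contractible. Concretely: let $G$ be the 4-cycle $(a,b,c,d)$ plus a pendant vertex $e$ adjacent to $b$, let $V_i=\{a,b,c,d\}$ with $\{e\}$ a second district, and let $v^*=a$. Here $V_i$ contains exactly one leaf block of $G$ (the 4-cycle, with cut vertex $b$), and $a$ is a valid target. The spanning tree given by the path $a,b,c,d$ rooted at $a$ --- a perfectly possible DFS tree --- has the single non-root leaf $d$, which has no neighbour outside $V_i$; the only exposed vertex is $b$, which is interior to the tree. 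So no exposed tree-leaf exists and your peeling stalls, although the contraction order $b,c,d$ is available. A spanning tree that certifies the peeling does exist, but it must be constructed \emph{from} a valid elimination order (attach each removed vertex to a neighbour that leaves later), so ``fix a spanning tree and peel exposed leaves'' is circular as an algorithm unless you say which tree, and your write-up gives no way to find one in linear time. The paper sidesteps this by running the peeling at the level of $T'$ itself: while $T'$ has two or more blocks, it empties an entire leaf block $w'$ not containing $t$ (inside the biconnected $w'$, once the boundary vertex leaves, the rest can be ordered so that each removed vertex is adjacent to an already-removed one), so $T'$ never needs recomputation and the total cost is $O(|E(G)|)$. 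Your existence proof converts to a linear-time algorithm along those lines; the fixed-spanning-tree shortcut, as written, does not.
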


\global\def \proofLemConTwo {
\begin{proof}
We first prove a necessary condition for \contract ing a district to a target vertex. Assume that $V_i$ can be \contracted\ to a vertex $t\in V_i$, and $V_i$ contains exactly one leaf $W_j\in B(G)$ of the block tree. Let $c_j$ be the parent cut vertex of $W_j$. Since every path between $W_j\setminus \{c_j\}$ and $V_i\setminus W_j$ contains $c_j$, no vertex in $W_j\setminus \{c_j\}$ can change districts until $c_j$ and all vertices of $V_i$ outside of $W_j$ have switched to some other districts. At this point, we have $V_i=W_j\setminus \{c_j\}$, consequently $t\in W_j\setminus \{c_j\}$, as required.

We next show that the above conditions are sufficient. Assume that $V_i$ and a target vertex $t\in V_i$ satisfy the above restrictions. It is enough to show that if $V_i\neq \{t\}$, there exists a vertex $v\in V_i\setminus \{t\}$, such that $v$ can be switched to another district; and $t$ and $V_i\setminus \{v\}$ satisfy the conditions above. Then we can successively switch all vertices in $V_i\setminus \{t\}$ to other districts until $V_i=\{t\}$.
 
Let $G'$ be the subgraph induced by $V_i$. Compute the block tree of $G'$, and denote it by $T'$. Root $T'$ at the block vertex in the tree that contains $t$. We distinguish between cases.

\begin{itemize}
\item If $G'$ is not biconnected, then $G'$ contains two or more leaf blocks. Let $W'\in B(G')$ be a leaf block in $T'$ other than the root, and let $c'\in C(G')$ be its parent cut vertex. Note that $W'$ is not a leaf block in $T$, otherwise $V_i$ would contain this leaf block, 
contradicting our assumptions.
Then, it is either a subset of a nonleaf block of $T$ or a proper subset of a leaf block of $T$. In either case, there exists a vertex $v\in W'\setminus \{c'\}$ adjacent to some vertex $u\not\in V_i$. Since $W'$ is biconnected, $W'\setminus \{v\}$ induces a connected subgraph in $G$; consequently $V_i\setminus \{v\}$ induces a connected subgraph, as well. Therefore, $v$ can be switched to the district of $u$.
\item If $G'$ is biconnected, then $G'$ is a subgraph of some block $W\in B(G)$.
    We claim that there exists a vertex $v\in V_i\setminus \{t\}$ adjacent to some vertex $u\not\in V_i$. To prove the claim, suppose the contrary. Then every path between $V_i\setminus \{t\}$ and $V(G)\setminus V_i$ goes through $t$. This implies that $t$ is a cut vertex, and $V_i$ is a leaf block in $T$, which contradicts our assumption. Now again, $v$ can be switched to the district of $u$.
\end{itemize}
First, note that the switch operation maintain the property that $V_i$ contains at most one leaf block of $T$. Indeed, since we  \contract\ $V_i$, the number of leaf blocks contained in $V_i$ monotonically decreases. 
Second, we show that $t$ remains a valid choice for the target vertex. If $V_i\setminus \{v\}$ contains the same leaf blocks as $V_i$, then $t$ remains a valid target. Otherwise $V_i$ contains a leaf block, say $W_j$, and $V_i\setminus \{v\}$ does not, then $v$ is the parent cut vertex of $W_j$. In this case, $t\in V_i\setminus \{v\}$, and any vertex in $V_i\setminus \{v\}$ is a valid choice for $t$. 
This proves that $V_i$ is \contractible, as required.

Our proof is constructive and leads to an efficient algorithm that successively switches every vertex in $V_i\setminus \{t\}$ to some other districts until $V_i=\{t\}$. The block trees $T$ and $T'$ can be computed in $O(|E(G)|)$ time~\cite{HopcroftT73}. While $V_i$ is \contracted, we maintain the induced subgraph $G'$, and the set of edges between $V_i$ and $V(G)\setminus V_i$ in $O(|E(G)|)$ total time. While $T'$ contains two or more blocks, we can successively switch all vertices of a leaf block $W'$ that does not contain $t$ to other districts; eliminating the need for recomputing $T'$. Then, the total running time is $O(|E(G)|)$.
\end{proof}
}
\ShoLong{
\begin{proof}[Proof sketch]
The full proof can be found in Section~\ref{sec:proofLemConTwo}.
Assume there exists a sequence of switches that \contract s $V_i$ into $\{t\}$ where $t$ is a vertex in $V_i$.
If $V_i$ contains a leaf block $W_j$ with cut vertex $c_j$, then the vertices in  
$W_j\setminus\{c_j\}$ are not adjacent to any other district. Therefore, district $V_i$ cannot switch out of a vertex $v\in W_j\setminus\{c_j\}$ until $V_i$ is contained in $W_j\setminus\{c_j\}$, otherwise $c_j$ would be a cut vertex in the subgraph induced by $V_i$. Hence, $t\in W_j\setminus\{c_j\}$.

If $t$ is in $W_j\setminus\{c_j\}$ or $V_i$ does not contain any leaf block, then there always exists some $v\in V_i\setminus\{t\}$ such that $v$ can be switched out of $V_i$  until $V_i=\{t\}$. We can find $v$ by computing the block tree of the subgraph $G'$ induced by $V_i$. Assume such a tree has a leaf block that does not contain $t$. Then, any vertex in this leaf block, other than the cut vertex, that is adjacent to some other district can be switched out of $V_i$. If $G'$ is biconnected (i.e., its block tree is a singleton), then there exists a vertex $v\in V_i\setminus\{t\}$ adjacent to another district since $G'$ is a subgraph of a block of $G$ and $t$ is not a cut vertex. Then, we can switch $v$ out of $V_i$.

\end{proof}
}
{\proofLemConTwo}

\begin{lemma}\label{lem:invariant}
The \contractibility\ (resp., \incontractibility) of a $k$-district map on a graph $G$ is invariant under switch operations.
\end{lemma}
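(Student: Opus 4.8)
The plan is to reduce the two-sided invariance to a single implication and then apply the structural characterization of contractibility provided by Lemmas~\ref{lem:con1} and~\ref{lem:con2}. Since a switch is reversible---if \sw$_\Pi(u,v,w)=\Pi'$ then \sw$_{\Pi'}(w,v,u)=\Pi$---it suffices to prove that one switch preserves contractibility: if $\Pi$ is contractible and $\Pi'=$\,\sw$_\Pi(u,v,w)$, then $\Pi'$ is contractible. Applying this implication to the reverse switch yields the converse, so contractibility is invariant; and since incontractibility is the logical negation of contractibility, its invariance follows immediately.

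Lemmas~\ref{lem:con1} and~\ref{lem:con2} together characterize contractibility through the leaf blocks of $G$ (the leaves of the block tree $T$, which depend only on $G$): a district is contractible if and only if it fully contains at most one leaf block, and hence $\Pi$ is contractible if and only if every district of $\Pi$ contains at most one leaf block. Let $A=\Pi(v)$ be the source district and $B=\Pi(w)$ the target; the switch replaces them by $A'=A\setminus\{v\}$ and $B'=B\cup\{v\}$ and leaves all other districts unchanged. The source is immediate: as $A'\subseteq A$, every leaf block contained in $A'$ is also contained in $A$, so $A'$ still contains at most one leaf block.

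The crux is the target $B'$. A leaf block $\beta$ satisfies $\beta\subseteq B'$ either because $\beta\subseteq B$ already, or because it is \emph{newly completed}, i.e.\ $v\in\beta$ and $\beta\setminus\{v\}\subseteq B$. First I would show a block cannot be completed at a non-cut vertex: if $v$ were a non-cut vertex of $\beta$, it would lie in no other block and all its neighbors would lie in $\beta$; then $\beta\setminus\{v\}\subseteq B$ would leave $v$ with no neighbor in $A$, so no valid path $(u,v,w)$ with $u\in A$ exists and the switch is undefined. Hence any newly completed $\beta$ is completed at its cut vertex $c_\beta=v$. For such $\beta$, the vertices of $\beta\setminus\{c_\beta\}$ reach the rest of $G$ only through $c_\beta=v\notin B$; since $B$ is connected and contains $\beta\setminus\{v\}$, it can contain no vertex outside $\beta$, forcing $B=\beta\setminus\{v\}$. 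This determines $B$ uniquely, so at most one leaf block can be newly completed (two would give $\beta\setminus\{v\}=\beta'\setminus\{v\}$, hence $\beta=\beta'$); moreover $B=\beta\setminus\{v\}$ contains no leaf block at all, because any other block inside $\beta$ would share at least two vertices with $\beta$, contradicting that distinct blocks meet in at most one vertex. Thus if a block is newly completed then $B'=\beta$ contains exactly one leaf block, and otherwise $B'$ and $B$ contain the same leaf blocks; either way $B'$ contains at most one, so $\Pi'$ is contractible.

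I expect this target analysis to be the main obstacle: a priori a single relocated vertex might simultaneously complete two previously split leaf blocks and so turn a contractible district into an incontractible one. The argument above rules this out, its leverage being that validity of the switch forces any completed leaf block to be completed at its cut vertex, and that this cut vertex is the sole gateway between the block and the rest of $G$, which rigidly fixes the target district.
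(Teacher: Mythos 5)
Your proof is correct, but it routes through reversibility in the opposite direction from the paper, and the substantive direct argument is different. The paper proves the easy direction directly: by Lemma~\ref{lem:con1}, a district containing two leaf blocks $w_i,w_j$ can never lose any vertex of $w_i\cup w_j$, so incontractibility persists under arbitrary sequences of switches; contractibility-preservation is then obtained indirectly, by contradiction via the reverse sequence. You instead prove contractibility-preservation directly for a single switch --- which is the harder direction, since the cited lemmas say nothing about a district \emph{acquiring} a second leaf block --- and your ``newly completed block'' analysis supplies exactly the missing structural fact: validity of the switch forces any completed leaf block $\beta$ to be completed at its unique cut vertex, and connectivity of the target district then pins it down to $B=\beta\setminus\{v\}$, so $B'=\beta$ ends with exactly one fully contained leaf block. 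All the steps check out (a non-cut vertex of a leaf block has all neighbors inside the block, so no valid witness $u$ exists; a leaf block contains exactly one cut vertex of $G$; distinct blocks share at most one vertex, so $\beta$ contains no other full block), including the case of two leaf blocks sharing the cut vertex $v$, which your forced equality $B=\beta\setminus\{v\}=\beta'\setminus\{v\}$ rules out. What each approach buys: the paper's proof is shorter because Lemma~\ref{lem:con1} already delivers a multi-switch invariant for free, whereas your argument is self-contained at the single-switch level and yields a finer local picture --- a contractible district can gain a full leaf block only via a switch that expands the target district to exactly that block --- information the paper's indirect contradiction argument never exposes.
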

\begin{proof}
Every \incontractible\ $k$-district map contains some \incontractible\ district $V_\ell$. Lemmas~\ref{lem:con1}--\ref{lem:con2} show that a subset $V_\ell \in \Pi$ is \incontractible\ if and only if $V_\ell$ contains at least two leaves of the block tree, say $W_i,W_j\subset V_\ell$.
By Lemma~\ref{lem:con2}, $W_i\cup W_j\subseteq V_\ell$ after any sequence of switches, so $V_\ell$ remains \incontractible.
The rest of the proof is implied by the reversibility of switches.
\end{proof}

\section{Connectedness}
\label{sec:alg}

\ShoLong{}{In this section we characterize graphs $G$ for which the switch graph $\Gamma_k(G)$ is connected. We give two results depending on the connectivity of $G$. 

}

\subsection{Biconnected Graphs}
\label{ssec:2conn-alg}


\begin{theorem}\label{thm:2conn-alg}
For every biconnected graph $G$ with $n$ vertices, and for every integer $1\leq k\le n$,
the switch graph $\Gamma_k(G)$ is connected and its diameter is bounded by $O(kn)$.
\end{theorem}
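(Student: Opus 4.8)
The plan is to fix a single canonical $k$-district map $\Pi_0$ determined by $G$, prove that every $k$-district map can be transported to $\Pi_0$ by $O(kn)$ switches, and then invoke reversibility: since every switch is reversible, any two maps $\Pi_1,\Pi_2$ are joined through $\Pi_0$ by a walk of length $O(kn)$, which simultaneously yields connectivity of $\Gamma_k(G)$ and the diameter bound. The degenerate cases $k=1$ (a single node) and $k>n$ (no valid map) are trivial, so I assume $2\le k\le n$.

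To define $\Pi_0$, I would use an $st$-numbering $u_1,\dots,u_n$ of the biconnected graph $G$, i.e., a linear order in which every prefix $\{u_1,\dots,u_i\}$ and every suffix $\{u_i,\dots,u_n\}$ induces a connected subgraph (such an order exists for every biconnected graph). Let $\Pi_0$ consist of the \emph{reservoir} district $D_0=\{u_1,\dots,u_{n-k+1}\}$ together with the $k-1$ singletons $\{u_{n-k+2}\},\dots,\{u_n\}$. This is a valid $k$-district map: $D_0$ is a connected prefix and each singleton is trivially connected.

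The engine is contractibility. Because $G$ is biconnected its block tree is a single node, so by Lemmas~\ref{lem:con1}--\ref{lem:con2} every proper district is contractible, and in fact can be contracted to \emph{any} one of its vertices using at most $n-1$ switches. The transport then runs in $k-1$ phases, for $i=n,n-1,\dots,n-k+2$: in phase $i$ it contracts the district currently containing $u_i$ down to the singleton $\{u_i\}$, pushing the remaining vertices of that district into the other, as-yet-unfixed districts. The invariant is that after phase $i$ the vertices $u_i,\dots,u_n$ each form their own singleton while $u_1,\dots,u_{i-1}$ are partitioned among the remaining districts; after the last phase exactly $\Pi_0$ remains. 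Since each phase is a single contraction costing at most $n-1$ switches, the whole transport uses at most $(k-1)(n-1)=O(kn)$ switches, matching the claimed diameter.

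The step I expect to be the main obstacle is executing each phase without violating the connectivity constraint on \emph{any} district and without disturbing the singletons $\{u_{i+1}\},\dots,\{u_n\}$ already fixed in earlier phases. Contracting $\Pi(u_i)$ confines all activity to the prefix subgraph $H=G[\{u_1,\dots,u_i\}]$, but $H$, though connected, need not be biconnected, so Lemma~\ref{lem:con2} no longer guarantees contractibility inside the frozen region: a district straddling two leaf blocks of $H$ is incontractible there. The crux is therefore a routing argument. I would exploit the $st$-order — the lowest-indexed boundary vertex of the contracting district always has a strictly lower-indexed neighbor, which lies on the reservoir side rather than in a fixed singleton — to steer the contraction so that every evicted vertex lands in an unfixed district while all districts stay connected, and, where a straddling district would otherwise stall, route a bounded amount of mass transiently so that the per-phase cost remains $O(n)$. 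Verifying that this routing keeps all $k$ induced subgraphs connected at every intermediate step (using Lemma~\ref{lem:invariant} to certify that contractibility is never lost) is the technical heart of the proof; the phase count and the reversibility step are then routine.
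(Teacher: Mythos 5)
Your high-level strategy (transport every map to a canonical $\Pi_0$ with $O(kn)$ switches, then use reversibility) is exactly the paper's, but your execution has a genuine gap, and you have correctly located it yourself: the per-phase contraction is exactly the hard part, and your proposal does not actually prove it. Once the singletons $\{u_{i+1}\},\dots,\{u_n\}$ are frozen, all further switches must take place inside the prefix graph $H=G[\{u_1,\dots,u_i\}]$, which the $st$-numbering only guarantees to be \emph{connected}. If the district containing $u_i$ straddles two leaf blocks of $H$, it is incontractible within $H$ (Lemma~\ref{lem:con1} applied to $H$), and such configurations genuinely arise since the districts entering phase $i$ are arbitrary. Your $st$-order observation (the lowest-indexed vertex of the district has a lower-indexed neighbor in an unfrozen district) does not rescue this: that vertex may be a cut vertex of the induced district, so it cannot be switched out, while the removable (non-cut) vertices of a leaf block may have all their outside neighbors among the frozen singletons. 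At that point you must expand a frozen singleton and repair it later; such repairs can cascade, and nothing in the proposal bounds the per-phase cost by $O(n)$ or even shows the process terminates in the stuck configurations. The appeal to Lemma~\ref{lem:invariant} does not help either: contractibility of the map in $G$ is preserved, but what you need is contractibility \emph{relative to the shrinking residual graph}, which is a different and unproven property.

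The paper resolves precisely this difficulty by choosing the freezing order \emph{adaptively} rather than fixing it a priori: Algorithm~\ref{algo:2conn} picks the next vertex to isolate from a leaf node of the SPQR tree of the current residual graph (an interior vertex of the cycle of an S node, or a non-virtual-edge vertex of an R node). Claims~\ref{cl:AAA} and~\ref{cl:BBB} then show the residual graph always remains biconnected, or biconnected with a single dangling path---hence its block tree has at most one leaf block---so Lemma~\ref{lem:con2} guarantees that \emph{every} district in the residual graph is contractible using switches that never touch the already-frozen singletons, at $O(n)$ switches per phase. In other words, the SPQR-guided vertex choice is not an implementation detail but the mechanism that eliminates the straddling-district obstruction your fixed suffix order runs into. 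To salvage your approach you would need either to prove that a suitable $st$-numbering keeps every prefix graph's block structure benign with respect to the current districts (false in general for a fixed numbering), or to supply the full routing-and-repair argument with an $O(n)$ amortized bound---which is the missing heart of the proof, not a routine verification.
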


\begin{proof}
We may assume that $1<k<n$, otherwise $\Gamma_k(G)$ is trivially connected. We present an algorithm (Algorithm~\ref{algo:2conn}) that performs a sequence of switches that transform $\Pi$ into a canonical $k$-district map of $G$, that we denote by $\Pi_0$. 
We show that $\Pi_0$ depends only on $G$ and $k$ (but not on $\Pi$). 
Consequently, any two $k$-district maps can be transformed to $\Pi_0$, and $\Gamma_k(G)$ is connected.

\begin{algorithm}[htbp]
\caption{Canonical Algorithm for Biconnected Graphs}\label{algo:2conn}
\begin{algorithmic}[1]
\Procedure{Canonical}{$G,k,\Pi$}
\While {$k>1$}
\State Compute the SPQR tree $T_G$ of $G$; order the leaves by DFS; let $\mu$ be the first leaf.
\If{$\mu$ is an S node (and $\skel(\mu)$ is a cycle with one virtual edge)}
\State Let $\skel(\mu)=(v_1,\ldots, v_t)$, where $v_1v_t$ is the virtual edge; set $i=2$.
    \While {$i<t$ and $k>1$}
    \State \Contract\ $\Pi(v_i)$ to $\{v_i\}$.
    \State Delete vertex $v_i$ from $G$, and put $i:=i+1$ and $k:=k-1$.
    \EndWhile
\Else{ $\mu$ is an R node (and $\skel(\mu)$ is triconnected)}
\State Let $v$ be an arbitrary vertex that is not incident to the (unique) virtual edge.
\State \Contract\ $\Pi(v)$ to $\{v\}$.
\State Delete vertex $v$ from $G$, and put $k:=k-1$.
\EndIf
\EndWhile
\EndProcedure
\end{algorithmic}
\end{algorithm}

\smallskip\noindent\textbf{Proof of Correctness.}
Algorithm~\ref{algo:2conn} successively \contract s a district into a single vertex,
and then deletes this vertex from the graph, and the corresponding district from $\Pi$,
until the number of districts drops to 1. We need to show that each district that the algorithm
\contract s into a singleton is \contractible. We prove an invariant that imply this property:

\global\def \pfclaimAAA {
\begin{proof}
Let $\mu$ be the leaf node in line~3 of the algorithm.
If $\mu$ is an R node, then the $G$ remains biconnected after the deletion of a vertex $v$.
Assume that $\mu$ is an S node, corresponding to a cycle $(v_1,\ldots , v_t)$, $t\geq 3$, where $v_1v_t$ is the only edge that corresponds to a virtual edge. Then the deletion of all vertices in $\{v_1,\ldots, v_{t-1}\}$ produces a biconnected graph; and the deletion of a $\{v_2,\ldots ,v_i\}$, $2\leq i<t-1$, produces a biconnected graph with a ``dangling'' path $(v_{i+1},\ldots, v_t)$.
\end{proof}
}
\global\def \claimAAA {
The graph $G$ remains connected during Algorithm~\ref{algo:2conn}.
}

\begin{claim}\label{cl:BBB}
$G$ remains connected and the district map $\Pi$ remains \contractible\ during Algorithm~\ref{algo:2conn}.
\end{claim}

\global\def \pfclaimBBB {
\begin{proof}
In a biconnected graph, every district is \contractible\ by Lemma~\ref{lem:con2}.
Let $\mu$ be the leaf node in line~3 of the algorithm.
If $\mu$ is a R node, then the graph $G$ remains biconnected after the deletion of a vertex, and so the $(k-1)$-district map of the remaining graph is \contractible.
If $\mu$ is an S node, then $G$ obtained by deleting vertex $v_i$ is either biconnected or a biconnected graph with a ``dangling'' path $(v_{i+1},\ldots, v_t)$. In both cases, $G$ has at most one leaf block (namely, a 1-edge block at the end of the dangling path). By Lemma~\ref{lem:con2}, every district that contains at most one leaf block is \contractible, and so the district map remains \contractible.
\end{proof}
}
\ShoLong{}{\pfclaimBBB}


The following claim establishes that the switch graph $\Gamma_k(G)$ is connected since it contains a path from any district map to the district map produced by Algorithm~\ref{algo:2conn}.

\begin{claim} \label{cl:CCC}
The district map $\Pi_0$ depends only on $G$ and $k$.
\end{claim}

\global\def \pfclaimCCC {
\begin{proof}
The map $\Pi_0$ contains the deleted singleton districts and one larger district.
Since each vertex deleted from the graph $G$ was selected based on the current graph $G$, its SPQR tree, and the DFS order of its leaves, the sequence of deleted vertices depends only on $G$ and $k$. 
\end{proof}
}
\ShoLong{}{\pfclaimCCC}


\smallskip\noindent\textbf{Analysis.}
Algorithm~\ref{algo:2conn} successively \contract s $k-1$ districts into singletons. By Lemma~\ref{lem:con2}, for each district this is done by a sequence of $O(n)$ switches that can be computed in $O(|E(G)|)$ time.
Overall Algorithm~\ref{algo:2conn} runs in $O(k|E(G)|)$ time and performs $O(kn)$ switch operations.
For any two $k$-district maps, $\Pi_1$ and $\Pi_2$, there exists a sequence of $O(kn)$ switches that takes $\Pi_1$ to $\Pi_0$ and then to $\Pi_2$. Therefore, the diameter of $\Gamma_k(G)$ is $O(kn)$.
\end{proof}

The following theorem shows that the upper bound in Theorem~\ref{thm:2conn-alg} is asymptotically tight.

\begin{theorem}\label{thm:contractible-LB+}
For all integers $1\leq k\leq n$, there exists a biconnected graph $G$ with $n$ vertices
such that the diameter of  $\Gamma_k(G)$ is $\Omega(k(n-k))$.
\end{theorem}

\begin{proof}
Let $G=C_n$ be the cycle with $n$ vertices $(v_1,\ldots, v_n)$.
We construct two $k$-district maps, $\Pi_1$ and $\Pi_2$. Let $\Pi_1$ consist of $V_i=\{v_i\}$ for $i=1,\ldots , k-1$, and $V_k=\{v_k,\ldots, v_n\}$. The partition $\Pi_2$ is the copy of $\Pi_1$ rotated by $\lfloor n/2\rfloor$, that is, $V_i'=\{v_{i+\lfloor n/2\rfloor}\}$ for $i=1,\ldots , k-1$, and $V_k'=\{v_{k+\lfloor n/2\rfloor},\ldots, v_{n+\lfloor n/2\rfloor}\}$, where we use arithmetic modulo $n$ on the indices.

Assume that a sequence of switch operations takes $\Pi_1$ to $\Pi_2$. Note that the cyclic order of the district cannot change,
and so there is an integer $r\in \{0,\ldots , k-1\}$ such that $V_i$ is transformed to $V_{i+r\mod k}'$ for all $i\in \{1,\ldots ,k\}$.
For any $r$, at least $k-2$ districts are singletons in both $\Pi_1$ and $\Pi_2$. The sum of the shortest distances
along $C_n$ between the initial and target positions is a lower bound for the number of switches.

If $r\leq \lfloor k/2\rfloor$, then the shortest distance between the initial and target positions is at least $\lfloor n/2\rfloor -r\in \Omega(n-k)$ for the districts $V_i$, $i=1\ldots, k-1-r$; which sums to $\Omega(k(n-k))$.
If $\lfloor k/2\rfloor <r<k$, then shortest distance is at least $\lfloor n/2\rfloor -(k-r)\in \Omega(n-k)$ for $V_i$, $i=r,\ldots, k-1$; which also sums to $\Omega(k(n-k))$.
\end{proof}

\subsection{Algorithm for General Connected Graphs}
\label{ssec:general-graphs}

Recall that $\Gamma_k'(G)$ is the subgraph of $\Gamma_k(G)$ induced by \contractible\ district maps.
If $G$ is a biconnected graph, then every district map is \contractible\ by Lemma~\ref{lem:con2}, and so $\Gamma_k(G)=\Gamma_k'(G)$.
In this section, we extend this result to a larger family of graphs, showing that if $G$ is connected, then $\Gamma_k'(G)$ is connected. That is, any \contractible\ $k$-district map can be carried to any other shrinkable $k$-district by a sequence of switch operations.

\begin{theorem}\label{thm:general-graphs}
For every connected graph $G$ with $n$ vertices, and for every integer $1\leq k\le n$,
the switch graph $\Gamma_k'(G)$ over \contractible\ $k$-district maps
is connected and its diameter is $O(kn)$.
\end{theorem}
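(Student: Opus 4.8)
The plan is to mirror the proof of Theorem~\ref{thm:2conn-alg}: I will exhibit a single canonical contractible $k$-district map $\Pi_0$ that depends only on $G$ and $k$, and show that every contractible $\Pi$ can be carried to $\Pi_0$ by a sequence of $O(kn)$ switches. Since switches are reversible and, by Lemma~\ref{lem:invariant}, every map reachable from a contractible map is itself contractible, any two nodes of $\Gamma_k'(G)$ are then joined through $\Pi_0$ by a walk of length $O(kn)$, giving both connectivity and the diameter bound. In contrast to the biconnected case, I do not need to verify contractibility along the way: Lemma~\ref{lem:invariant} grants it for free, so the only real task is reachability of a canonical target.

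The algorithm generalizes Algorithm~\ref{algo:2conn} from the SPQR tree to the block tree. While more than one district remains, I compute the block tree $T$ of the current graph $G$, order its leaf blocks canonically (say by DFS), and take the first leaf block $w$, with cut vertex $c$. Since $w$ is biconnected, I select a canonical non-cut vertex $v\in w\setminus\{c\}$, contract the district occupying $v$ to the singleton $\{v\}$, record $\{v\}$, delete $v$ from $G$, and decrease $k$. Deleting a non-cut vertex keeps $G$ connected (the analogue of Claim~\ref{cl:AAA}), so the process runs until a single district is left; the recorded singletons together with the final district constitute $\Pi_0$. As in Claim~\ref{cl:CCC}, each deleted vertex is chosen from the current graph and its block tree alone, so the deletion sequence, and hence $\Pi_0$, depends only on $G$ and $k$.

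The main obstacle is that the analogue of the contractibility invariant (Claim~\ref{cl:BBB}) fails once $G$ is merely connected: contracting a district can force its surviving vertices to be absorbed through a cut vertex into a neighbour that already owns a leaf block, turning that neighbour incontractible (this already happens for a path of triangles, where every reduction of $k$ destroys global contractibility). I therefore cannot guarantee a priori that the district occupying the canonical vertex $v$ is contractible to $v$. The remedy is to first rearrange the districts \emph{inside} the biconnected block $w$—using a pinned-boundary version of Theorem~\ref{thm:2conn-alg}, in which $c$ is held in its district so that the remainder of $G$ stays connected—until the district occupying $v$ is contractible to $v$; only then do I contract it and delete $v$. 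Because this rearrangement is internal to $w$ and the choice of $v$ uses only the block tree, the deletion stays canonical. Throughout, Lemma~\ref{lem:invariant} keeps the real map on $G$ contractible, so every district contains at most one leaf block of $G$ (Lemma~\ref{lem:con2}); I route every absorbed remnant into one designated growing \emph{accumulator} district, maintaining the invariant that at most one district of the current graph is incontractible and that it is contracted last. Proving that the pinned-boundary rearrangement always succeeds and that incontractibility stays confined to the accumulator is the technical heart of the argument.

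The accounting then matches the biconnected case: the algorithm performs exactly $k-1$ contractions, each of which, by Lemma~\ref{lem:con2}, costs $O(n)$ switches. Hence $\Pi$ reaches $\Pi_0$ in $O(kn)$ switches, and the diameter of $\Gamma_k'(G)$ is $O(kn)$.
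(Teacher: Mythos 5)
There is a genuine gap, and it sits exactly where you placed your ``technical heart.'' Your step ``contract the district occupying $v$ to $\{v\}$'' can be outright impossible, not merely in need of rearrangement inside $w$. By Lemma~\ref{lem:con2}, a district containing a leaf block $w_j$ of the block tree is contractible \emph{only} to non-cut vertices of $w_j$. Now take a contractible map in which a district $D$ fully contains some other leaf block $w_2$ but also reaches across cut vertices into the first DFS leaf block $w$, covering your canonical vertex $v$ (this is what the paper calls an \emph{elbow}). Then $D$ is contractible, but only into $w_2$; it can never be contracted to $v$. Your proposed remedy --- a ``pinned-boundary version of Theorem~\ref{thm:2conn-alg}'' acting inside $w$ --- cannot help, because $D$ is not internal to $w$: freeing $v$ requires pushing $D$ out of $w$ through a cut vertex, an inter-block move for which your proposal has no tool. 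A second problem: Lemma~\ref{lem:invariant} does not grant contractibility ``for free'' as you claim, since it concerns switches on the \emph{fixed} graph $G$, while your algorithm deletes vertices; contractibility of the induced map on the shrinking graph is precisely what can fail (deleting vertices of a leaf block creates new leaves and new leaf blocks). Your accumulator invariant does not rescue this: if the accumulator ever acquires two leaf blocks of the \emph{current} graph, one of them being the current first DFS leaf block $w$, then by Lemma~\ref{lem:con1} no vertex of $w$ can ever leave the accumulator, so $v$ is trapped in an incontractible district and the algorithm deadlocks --- and your canonical choice of $v$, which sees only the block tree and not the map, cannot route around this. So both halves of your fix (the pinned rearrangement always succeeding, and incontractibility staying harmlessly confined) are unproven and, as stated, false in these scenarios.

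The paper's proof takes a structurally different route that supplies exactly the two missing mechanisms. It abandons both the single canonical target $\Pi_0$ and vertex deletion. First, Lemma~\ref{lem:pushing} shows how to move a district from one leaf block to another along a corridor of singleton districts using $O(n)$ switches --- this is the inter-block transport your scheme lacks. Second, it defines a \emph{pseudo-canonical form} (a family of maps, not one map) and proves in Lemma~\ref{lem:general1} that any contractible map reaches this form in $O(kn)$ switches via Algorithm~\ref{algo:1conn}, whose Phase~1 eliminates precisely the elbows that break your construction, Phase~2 confines leaf districts to their leaf blocks, and Phase~3 contracts the rest. Finally, Lemma~\ref{lem:general2} connects any two pseudo-canonical maps: it compares the per-block district counts $d_\Pi(w)$, uses the pigeonhole principle to find a surplus block and a deficit block, and repeatedly applies Lemma~\ref{lem:pushing} to equalize the counts, finishing inside type-(ii) blocks with Theorem~\ref{thm:2conn-alg}. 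Your instinct that the biconnected canonicalization cannot be transplanted directly was correct; the resolution, however, requires redistributing whole districts between blocks rather than canonically deleting vertices, and that redistribution machinery is what your proposal would still need to build.
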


A crucial technical step is to move a district from one block to another, through a cut vertex. This is accomplished in the following technical lemma.

\begin{lemma}\label{lem:pushing}
Let $G$ be a connected graph 
whose block tree contains at least
two 
blocks, $W_1,W_2\in B(G)$, and let $P$ be a shortest path from a vertex in $W_1$ to a vertex in $W_2$ (possibly, $P$ has a single vertex). Let $\Pi$ be a district map of $G$ in which each vertex of $P$ is a singleton district, but $W_1$ contains a district of size more than one. Then there is a sequence of 
$O(|W_1|+|P|)$ switches that increases the number of districts in $W_1$ by one, and decreases the number districts in $W_2$ by one.
\end{lemma}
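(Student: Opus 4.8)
The plan is to achieve the required change by a single ``wave'' of switches traveling along the corridor $P$ from $w_1$ to $w_2$. Since $w_1$ and $w_2$ are the only leaf blocks, the block tree of $G$ is a path, so $P=(p_1,\dots,p_m)$ passes through every cut vertex separating $w_1$ from $w_2$, beginning inside $w_1$ and ending at $p_m\in w_2$. Write $c_1,c_2$ for the cut vertices of $w_1,w_2$. I will assume $p_m$ is interior to $w_2$ (i.e.\ $p_m\neq c_2$); the boundary case is handled by redirecting the final merge into an interior district of $w_2$, exactly as in the degenerate case $P=(p_1)$. By hypothesis every $p_i$ is a singleton district, consecutive $p_i$ are adjacent, and $w_1$ contains a district $D$ with $|D|>1$.

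The heart of the argument is the wave itself. In the \emph{launch} I inject a vertex of $D$ into the corridor: I pick $v\in D$ that is not a cut vertex of the subgraph induced by $D$ and that is adjacent to a corridor vertex $p_i\in w_1$, and perform $\sw_\Pi(u,v,p_i)$ for a neighbor $u\in D$ of $v$. Because $D\setminus\{v\}$ stays connected, this is valid and yields the two-vertex district $\{p_i,v\}$. In the \emph{propagation} I then perform, for $j=i,i+1,\dots,m-1$ (with $p_{i-1}:=v$), the switch $\sw_\Pi(p_{j-1},p_j,p_{j+1})$, moving $p_j$ out of the current two-vertex district $\{p_{j-1},p_j\}$ into the singleton $\{p_{j+1}\}$ ahead of it. Each such switch is valid: the edge $p_jp_{j+1}$ exists since $P$ is a path; the active district always has exactly two vertices, so deleting $p_j$ leaves the single vertex $p_{j-1}$; and $p_{j+1}$ is still a singleton in a different district at that moment. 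The wave uses $O(m)=O(n)$ switches.

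Tracking the districts gives the net effect. Afterwards the corridor vertices $p_1,\dots,p_{m-2}$ are still singletons, $v$ has become a new singleton, $D$ has shrunk to $D\setminus\{v\}$, the singletons $\{p_{m-1}\}$ and $\{p_m\}$ have merged into $\{p_{m-1},p_m\}$, and every other district is unchanged. Since $v\in w_1\setminus\{c_1\}$, the district $D$ of $w_1$ has been split into the two districts $D\setminus\{v\}$ and $\{v\}$, so the number of districts in $w_1$ rises by one. Since $p_m$ is interior to $w_2$, all of its neighbors lie in $w_2$, so $p_{m-1}\in w_2$ as well; hence two districts of $w_2$ have been merged into one and the number of districts in $w_2$ drops by one, as required.

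The one delicate point, and the main obstacle, is the launch: for the given $D$ there may be no non-cut vertex adjacent to the corridor, since the non-singleton district could sit in the interior of $w_1$, separated from $P$ by singleton districts. Here I will use that $w_1$ is biconnected. I plan to first transport a surplus vertex of $D$ to a district adjacent to the corridor by a sequence of \emph{count-preserving} switches, each moving a non-cut vertex of a size-$>1$ district into an adjacent district (an operation that leaves every district nonempty and hence preserves all district counts). Biconnectivity of $w_1$ --- which rules out degree-one vertices and forces at least two boundary vertices between $P\cap w_1$ and the remainder of the block --- guarantees that such a surplus can always be advanced toward $P$ without getting stuck, reaching a corridor-adjacent district in $O(|w_1|)$ switches. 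Once a valid launch vertex is available, the wave above finishes the proof, and the total number of switches remains $O(n)$.
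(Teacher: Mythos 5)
Your overall architecture matches the paper's proof: both arguments prepare a two-vertex district at the head of an all-singleton corridor and then propagate it by the same wave of switches $\sw(p_{j-1},p_j,p_{j+1})$, with the district count rising in $w_1$ (the doubleton splits) and falling in $w_2$ (a merge at the far end). Your endgame is slightly underspecified but repairable: in the paper (and in the lemma's actual applications, e.g.\ Lemma~\ref{lem:general2}, where $P$ runs between the cut vertices $c_1$ and $c_2$) the last vertex of $P$ \emph{is} $c_2$, so the wave ends with the doubleton $\{p_{m-1},c_2\}$ where $p_{m-1}$ may lie outside $w_2$, and the count in $w_2$ does not yet drop; the paper adds one final switch contracting $\{p_{m-1},p_m\}$ to $\{p_{m-1}\}$, pushing $c_2$ into an adjacent district of $w_2$. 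This is the ``redirected final merge'' you gesture at, but since it is the generic case rather than a boundary case, it should be carried out explicitly.

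The genuine gap is the launch, which you correctly identify as the crux but then only assert. Your claim that count-preserving pushes can always advance a surplus vertex toward $P$ ``without getting stuck'' does not follow from the two facts you cite (no degree-one vertices; two boundary vertices). A single push out of the current big district $D'$ needs a vertex that is simultaneously a non-cut vertex of the subgraph induced by $D'$ and adjacent to the next district toward the corridor, and these requirements can be incompatible (e.g., $D'$ induces a path whose only contact with the forward district is its middle vertex); iterating your claim on the receiving district merely reproduces the same problem, so the argument is circular. Worse, your transport switches could themselves expand corridor singletons of $P\cap w_1$ and destroy the wave's precondition. The paper closes exactly this hole with a different device: choose $Q=(q_1,\ldots,q_s)$, a \emph{shortest} path in the block $w_1$ from any district $V_0$ with $|V_0|>1$ to $c_1$ --- minimality forces $q_2,\ldots,q_s$ to be singletons --- then contract $V_0$ to $q_1$ (valid by Lemma~\ref{lem:con2}, since $w_1$ is biconnected), stopping at the \emph{first} switch that expands some singleton $\{q_i\}$ on $Q$. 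Whether the contraction reaches $|V_0|=2$ or stops early, one obtains a two-vertex district $\{q_0,q_i\}$ heading an all-singleton path to $c_1$ after $O(|w_1|)$ switches; the early-stopping rule turns the very event that threatens your corridor into the desired launch configuration. You would need this shortest-path-plus-early-stop argument, or an equivalent one, to make your transport step rigorous.
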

\begin{proof}
Let $c_1\in W_1$ and $c_2\in W_2$ be the two endpoints of $P$; possibly $c_1=c_2$. 
Note that $c_1, c_2\in C(G)$ since $P$ is a shortest path between $W_1$ and $W_2$.
We claim that after $O(|W_1|)$ switch operations in $W_1$, we can find a path $P^*=(p_0,p_1,\ldots ,p_m)$ such that $\{p_0,p_1\}$ is a 2-vertex district in $W_1$, all other vertices in $P^*$ are singleton districts, and $P=(p_\ell,\ldots, p_m)$ for $1\leq \ell\leq m$ (with $p_\ell=c_1$ and $p_m=c_2$). Assuming that this is possible, we can then successively perform $\sw(p_{i-1},p_i,p_{i+1})$ for $i=1,\ldots, m-1$, which replaces $\{p_0,p_1\}$ by two singleton districts, and produces a 2-vertex district $\{p_{m-1},p_m\}$. Finally, we \contract\ this district to $\{p_{m-1}\}$ by Lemma~\ref{lem:con2}, thereby decreasing the number of districts in $W_2$ by one. Overall, we have used $O(|W_1|+|P^*|)=O(|W_1|+|P|)$ switches.

To prove the claim, let $G_1$ be the biconnected subgraph of $G$ induced by $W_1$. Let $Q=(q_1,\ldots , q_s)$ be a shortest path between $q_s=c_1$ and a vertex in a district $V_0\subseteq W_1$ of size $|V_0|>1$. Since $Q$ is a shortest path, the vertices $q_2,\ldots, q_s$ are singleton districts. If $|V_0|=2$, say $V_0=\{q_0,q_1\}$, then we can take $P^*=(q_0,q_1,\ldots ,q_s)\oplus P$, where $\oplus$ is the concatenation operation.  

Assume that $|V_0|>2$. Since $G_1$ is biconnected, $V_0$ can be \contracted\ to $\{q_1\}$ by a sequence of $|V_0|-1=O(|W_1|)$ switches by Lemma~\ref{lem:con2}. Each switch in the sequence \contract s $V_0$ and expands some adjacent district. Perform the switches in this sequence until either (a) $|V_0|=2$, or (b) some singleton district $\{q_i\}$, $i=2,\ldots, s$, expands.
In both cases, we find a path $Q'=(q_i,\ldots , q_s)$, $i\in \{1,\ldots, s\}$, such that $q_i$ is in some 2-vertex district $\{q_0,q_i\}$, all other vertices in $Q'$ are singletons, and $q_s=c_1$. Consequently, we can take $P^*=(q_0,q_i,\ldots ,q_s)\oplus P$, as claimed.
\end{proof}

We can now consider the general case.
Let $G$ be a connected graph with $n$ vertices and let $1\leq k\leq n$. We present an algorithm (Algorithm~\ref{algo:1conn}) that transforms a given \contractible\ $k$-district map $\Pi$ into one in pseudo-canonical form (defined below), and then show that any two $k$-district maps in pseudo-canonical form can be transformed to each other. Consequently, any two \contractible\ $k$-district maps can be transformed into each other, and $\Gamma_k'(G)$ is connected.

\begin{figure}[htpb]
\begin{center}
\includegraphics[width=0.99\textwidth]{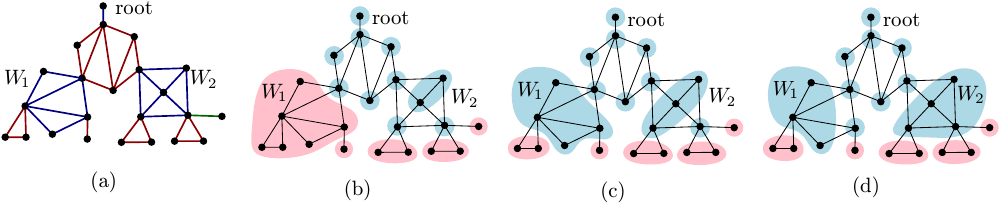}
\caption{(a) A connected graph with nine blocks. (b) A pseudo-canonical 15-district map. Five leaf districts are red and ten nonleaf districts are blue. (c) and (d): Pseudo-canonical district maps obtained from (b) by moving districts from $W_2$ to $W_1$ by successive applications of Lemma~\ref{lem:pushing}.}
\label{fig:pseudo-canonical}
\end{center}
\end{figure}

We introduce some additional terminology; see Figure~\ref{fig:pseudo-canonical}~(a). Let $T$ be a block tree of $G$. Fix an arbitrary leaf block $R\in B(G)$. We consider $T$ as an ordered tree, rooted at $R$, where the children of each node are ordered arbitrarily. For a district map $\Pi$, we define a \textbf{leaf district} to be a district that contains every vertex of some nonroot leaf block $W\in B(G)$, with the possible exception of its parent cut vertex $c\in C(G)$. 
Note that a leaf district could have vertices outside the leaf block. Moreover, every leaf district $V_i$ corresponds to a unique leaf block (otherwise $\Pi$ would be \incontractible\ by Lemma~\ref{lem:con1}), and we denote this block by $\leaf(V_i)$. A leaf district is \contractible\ into any vertex in $\leaf(V_i)$, except for $c$ (cf.~Lemma~\ref{lem:con2}). 
Further note that a district may become a leaf district over the course of the algorithm, while leaf districts remain leaf districts.

For every block $W\in B(G)$, except for the root, we define a set $\down(W)$ as follows. Let $c\in C(G)$ be the parent of $W$ in $T$, let $V_i$ be the district that contains $c$, and let $\down(W)$ be the set of vertices in $V_i$ that lie in $W$ or its descendants.
The set $\down(W)$ is an \textbf{elbow} if $\down(W)\neq \{c\}$, $V_i$ is a leaf district, and $\down(W)$ does not contain the block $\leaf(V_i)$; see Figure~\ref{fig:elbow-eg}. 
An elbow is \textbf{maximal} if it is not contained in another elbow.
A leaf district is \textbf{elbow-free} if it does not contain any elbows.

\begin{figure}[h]
    \centering
    \includegraphics[scale=1.2]{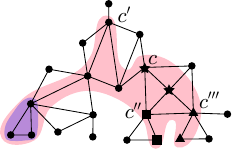}
    \caption{Example for the definitions of $\down(.)$ and elbow. District $V_i$ (pink) contains a leaf block (purple). Let $W$, $W'$, $W''$, and $W'''$ be the blocks whose parent cut vertices are $c$, $c'$, $c''$, and $c'''$, resp. 
    The set $\down(W') = V_i$ is not an elbow since it contains $\leaf(V_i)$.
    Vertices shown as squares (triangles) are in $\down(W'')$ ($\down(W''')$).
    The vertices in $\down(W)$ are shown as stars, squares and triangles.
    Sets $\down(W)$, $\down(W'')$ and $\down(W''')$ are elbows while only the first is maximal.}
    \label{fig:elbow-eg}
\end{figure}

A district map of $G$ is in \textbf{pseudo-canonical} form if every block $W\in B(G)$ satisfies one of the following three mutually exclusive conditions (see Figure~\ref{fig:pseudo-canonical} for examples):
\begin{enumerate}[label=(\roman*)]\itemsep 0pt
\item \label{type:singleton} all vertices in $W$ 
are in singleton nonleaf districts;

\item \label{type:leaf} all vertices of $W$, with the possible exception of the parent cut-vertex of $W$, are in the same leaf district.
Moreover, if $W$ is not a leaf block, then this district contains the leftmost grandchild block of $W$.

\item \label{type:sing+1} all vertices of $W$ are in nonleaf districts, whose vertices are all contained in $W$, but not all are singletons, and all ancestor (resp., descendant) blocks of $W$ are of type (i) (resp., type~\ref{type:leaf});
\end{enumerate}

We refer to the condition that a block satisfies as its \emph{type}.
Notice that \ref{type:sing+1} implies that blocks of type~\ref{type:singleton} (or blocks of types~\ref{type:singleton} and \ref{type:sing+1}) induce a connected subtree of $T$ containing the root.
The proof of Theorem~\ref{thm:general-graphs} is the combination of Lemmas~\ref{lem:general1} and~\ref{lem:general2}.

\begin{lemma}\label{lem:general1}
Let $G$ be a connected graph with $n$ vertices and let $1\leq k\le n$.
Every \contractible\ $k$-district map can be taken into
pseudo-canonical form by a sequence of $O(kn)$ switches.
\end{lemma}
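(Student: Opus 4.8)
The plan is to root the block tree $T$ at the chosen leaf block $r$ and to reorganize the district map top-down so that, along every root-to-leaf path in $T$, the blocks split into an upper segment of type~\ref{type:singleton} (singletons), a single transition block of type~\ref{type:sing+1}, and a lower segment of type~\ref{type:leaf} (leaf districts). By Lemma~\ref{lem:invariant}, contractibility is preserved by every switch, so we never leave the subgraph $\Gamma_k'(G)$; and by Lemma~\ref{lem:con1} each leaf district corresponds to a unique leaf block $\leaf(V_i)$, which I aim to make the ``anchor'' of that district. I would carry this out in the three phases suggested by Figure~\ref{fig:elbows}.

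Phase~1 (straighten leaf districts). The obstruction to a clean structure is an elbow: a leaf district that spills into a block $w$ without containing $\leaf(V_i)$, i.e. $\down(w)\neq\{c\}$ for the parent cut vertex $c$. I would process elbows by repeatedly taking the highest cut vertex $c$ contained in some maximal elbow and retracting that elbow. Concretely, inside the biconnected block the elbow lives in, the stray vertices of the leaf district form a connected piece hanging off $c$; using a shortest path inside the block and successive $1$-switches (as in the contraction argument of Lemma~\ref{lem:con2}), I peel these vertices off and reassign them to adjacent non-leaf districts until $\down(w)=\{c\}$ and the elbow is gone. Processing highest-first is intended to guarantee that eliminating one elbow never recreates an elbow above it, so that after $O(n)$ switches per leaf district the whole map becomes elbow-free.

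Phases~2 and~3 (install the block types and apply Lemma~\ref{lem:pushing}). Once every leaf district is elbow-free, each leaf district already hugs a path from its leaf block toward the root, so I can contract the non-leaf districts in the upper blocks to singletons---making them type~\ref{type:singleton}---and push each leaf district downward to fill the lower blocks (type~\ref{type:leaf}), leaving at most one transition block of type~\ref{type:sing+1} on each branch. The key tool here is Lemma~\ref{lem:pushing}: after Phase~2 has created the requisite singleton paths, I use it to transfer a district across a cut vertex from one leaf block to another whenever a block holds too many or too few districts. This is precisely the step in which a non-leaf district is absorbed and re-emerges as a leaf district (the purple district in Figure~\ref{fig:elbows}(d)), and it also enforces the leftmost-grandchild condition in~\ref{type:leaf}. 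Each invocation of Lemma~\ref{lem:pushing} costs $O(n)$ switches and strictly improves the count in one block, so $O(k)$ invocations suffice.

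For the running total, every phase either touches each vertex a bounded number of times or performs $O(n)$ switches per district; with at most $k$ districts this yields $O(kn)$ switches overall, as claimed. The main obstacle I anticipate is Phase~1: I must argue that retracting an elbow never disconnects a district---here the biconnectivity of each block, via Lemma~\ref{lem:con2}, is what saves me---and that the highest-first order genuinely prevents elbows from reappearing, so that the amortized switch count stays linear in $n$ per district rather than blowing up. Verifying that after all three phases every block satisfies exactly one of~\ref{type:singleton}--\ref{type:leaf} (mutual exclusivity and exhaustiveness along each branch) is the bookkeeping that closes the proof.
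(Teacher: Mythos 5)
Your outline follows the paper's own proof quite closely: the same three-phase algorithm keyed to Figure~\ref{fig:elbows} (eliminate elbows, confine leaf districts to their leaf blocks, contract non-leaf districts to singletons while using Lemma~\ref{lem:pushing} with a one-vertex path to evacuate surplus districts upward), and the same $O(n)$-switches-per-district accounting. However, there is a genuine gap exactly at the point you flag as your ``main obstacle,'' namely Phase~1, and your specific plan there would fail. You propose to peel the elbow vertices off and reassign them \emph{to adjacent non-leaf districts}; but this is not always possible: when the retracting elbow vacates a cut vertex $c'$, the only districts able to expand into $c'$ may lie in child blocks of $c'$ and may themselves be leaf districts, so you cannot in general avoid expanding leaf districts. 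The crucial missing argument---which the paper supplies and which makes Phase~1 work---is that this is harmless: when a district $V_j$ from a child block $w'$ of $c'$ expands into $c'$, the vertex $c'$ becomes the \emph{highest} vertex of $V_j$, so $\down(w')$ contains $\leaf(V_j)$ whenever $V_j$ is a leaf district, and hence $\down(w')$ is by definition not an elbow. Without this invariant, your claim that a highest-first (or, in the paper, DFS) processing order ``genuinely prevents elbows from reappearing'' is unsupported, and neither termination of Phase~1 nor its switch count follows; with it, the paper even gets $O(n)$ switches for all of Phase~1 at once, since maximal elbows are pairwise disjoint (your weaker $O(n)$ per leaf district would also suffice for the $O(kn)$ total).

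Two smaller points. First, your Phase~3 imports some of the machinery of Lemma~\ref{lem:general2} (moving districts between blocks ``whenever a block holds too many or too few districts''): for Lemma~\ref{lem:general1} there is no target count per block, and Lemma~\ref{lem:pushing} is needed only locally, with $P$ a single cut vertex, to move a district from a grandchild block up to $w$ until $w$ or its grandchildren reach type~\ref{type:leaf}. Second, the ``bookkeeping'' you defer---that every processed block ends in exactly one of types \ref{type:singleton}--\ref{type:leaf}---is carried in the paper by two explicit invariants (no elbow in any leaf district after Phase~1; any block meeting a leaf district is of type~\ref{type:leaf} during Phase~3, maintained by the preference rule of which districts to expand during contractions), and these need to be stated and checked rather than left as an afterthought, since the type-\ref{type:leaf} invariant is what justifies applying Lemma~\ref{lem:pushing} in the final step.
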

\begin{proof}
Let $\Pi$ be a \contractible\ $k$-district map. Algorithm~\ref{algo:1conn} (below) transforms $\Pi$ into pseudo-canonical form in three phases; refer to Figure~\ref{fig:elbows}. 
Each phase processes all blocks in $B(G)$ in DFS order of the block tree $T$.
Phase~1 eliminates elbows. Phase~2 \contract s leaf districts such that they are each confined to their leaf blocks.
Phase~3 \contract s all nonleaf districts to singletons (or possibly turns some nonleaf districts into leaf districts). We continue with the details.

\begin{figure}[hp]
\begin{center}
\includegraphics[width=0.99\textwidth]{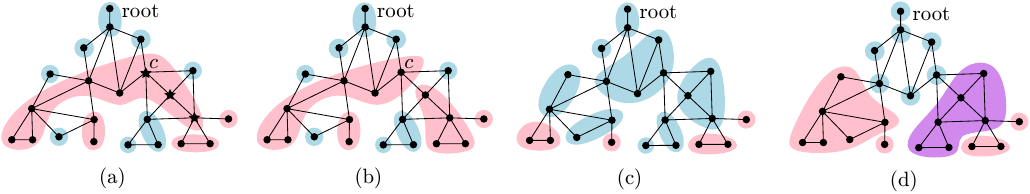}
\caption{(a) A 12-district map of a graph. The four leaf districts are red, eight nonleaf districts are blue; $c$ is the highest cut vertex in an elbow whose vertices are shown as stars. (b), (c), and (d) show the result of Phases 1, 2, and 3 of Algorithm \ref{algo:1conn}, respectively. 
In Phase~3, a nonleaf district becomes a leaf district (shaded purple).}
\label{fig:elbows}
\end{center}
\end{figure}

In lines 3 and 7, Algorithm~\ref{algo:1conn} \contract s $\down(W)$ and $W\cap V_i$, resp., into a singleton $\{c\}$. We describe these subroutines here. In both cases, we invoke Lemma~\ref{lem:con2} for a district map $\Pi'$ on a subgraph $G'$ of $G$; where $\Pi'$ is the restriction of $\Pi$ to $G'$. In the first case, $G'$ is induced by $c$ and its descendants. Note that $\down(W)$ is a district in $\Pi'$, and it is \contractible\ by Lemma~\ref{lem:con2} since $\down(W)$ is an elbow in $G$ and contains no leaf districts. In the second case, $G'$ is obtained from $G$ by deleting all descendants of $c$. Now $W\cap V_i$ is a district in $\Pi'$ since $V_i$ is a leaf district in $\Pi$, $W$ is the highest nonleaf block (in DFS order) that intersects $V_i$, and $V_i$ does not contain elbows by invariant (I1) below. By Lemma~\ref{lem:con2}, $W\cap V_i$ is \contractible\  as it lies in a single block $W$. In both cases, Lemma~\ref{lem:con2} yields a sequence of switches that \contract\ $\down(W)$ and $W\cap V_i$, resp., to $\{c\}$.

In lines 13 and 16, Algorithm~\ref{algo:1conn} \contract s a district $V_i$ with $c'\in V_i$ to $\{c'\}$. In both cases, $V_i$ is shrinkable to $c'$ by Lemma~\ref{lem:con2}, and the proof of Lemma~\ref{lem:con2} provides an algorithm that successively switches vertices in $V_i\setminus \{c'\}$ to other districts arbitrarily. 
However, this process might introduce a new elbow.
Here, we specify a particular a sequence of switches to ensure that no new elbows are created.
While $V_i$ is not a singleton, identify a noncut-vertex $v$ of $V_i$ adjacent to a vertex $w$ in a nonleaf district $V_j$.
(For example, see Figure~\ref{fig:phase3}(c)-(d) where $W_1$ has the role of $W'$.)
If no such vertex exists, choose $v$ adjacent to a vertex $w$ in a leaf district $V_j$ that intersects the leftmost grandchild block of $W'$.
(For example, see Figure~\ref{fig:phase3}(e)-(f) where $W_1$ has the role of $W'$.)
Let $u$ be a neighbor of $v$ in $V_i$, and apply \sw$(u,v,w)$.

\begin{algorithm}[H]
\caption{Pseudo-Canonical Algorithm for Connected Graphs}\label{algo:1conn}
\begin{algorithmic}[1]
\Procedure{Pseudo-Canonical}{$G,k,\Pi$}
\For{every nonroot block $W\in B(G)$ in DFS order of $T$}
    \If{$\down(W)$ is an elbow}
    {let $c\in C(G)$ be $W$'s parent, \contract\ $\down(W)$ to $\{c\}$.}
    \EndIf
\EndFor
\For{every nonleaf block $W\in B(G)$ in DFS order of $T$}
    \If{$W$ intersects a leaf district,}{
    \For{each leaf district $V_i$ that intersects $W$}
 \State \contract\ $W\cap V_i$ onto the cut-vertex $c$ of $W$ 
  in the descending path of $T$ to $\leaf(V_i)$;
  \State apply an additional switch to contract $V_i$ out of the block $W$.
    \EndFor}
    \EndIf
\EndFor
\For{every block $W\in B(G)$ in DFS order of $T$}
    \While{$W$ satisfies neither~\ref{type:singleton} nor~\ref{type:leaf}, and a grandchild $W'$ of $W$ is not of type~\ref{type:leaf}}
        \If{$W$ is the root of T and $W$ is contained in a single district,}
            \State Let $c'$ be a noncut-vertex of $W$, and let $V_i$ be the district containing $c'$;
            \State \Contract\ $V_i$ to $\{c'\}$.
        \Else
            \State Let $c'$ be the parent cut-vertex of $W'$, and let $V_i$ be the district containing $c'$;
            \State \Contract\ $V_i$ to $\{c'\}$.
            \If{$W'$ is still not of type~\ref{type:leaf} and $W$ is not of type~\ref{type:singleton},}
                \State Use Lemma~\ref{lem:pushing} with $P=(c')$ to move a district from $W'$ to $W$.
        \EndIf
        \EndIf
    \EndWhile
    \If{$W$ still satisfies neither~\ref{type:singleton} nor~\ref{type:leaf},}
        \State{\Contract\ the district containing the parent cut-vertex $c$ of $W$ to $\{c\}$.}
    \EndIf
\EndFor
\EndProcedure
\end{algorithmic}
\end{algorithm}

\smallskip\noindent\textbf{Analysis of Algorithm~2.}
Note that maximal elbows are pairwise disjoint, and every block intersects at most one maximal elbow (by the definition of $\down(W)$). 

Phase~1 (lines 2-3) iterates over all nonroot blocks. In the course of Phase~1, we maintain the invariant that if $W$ has been processed, then $\down(W)$ is not an elbow. When the for-loop reaches a block $W$ where $\down(W)$ is an elbow, then it is a maximal elbow due to the DFS traversal of $T$, and $\down(W)$ is \contracted\ to a cut vertex $c$, and produces $\down(W)=\{c\}$, which is not an elbow. We also show that this does not create any new elbows. Indeed, if a switch \contract s $\down(W)$ out of a cut vertex $c'$, then $c'$ is a descendant of $c$, and some district $V_j$ that intersects a child block $W'$ of $c'$ expands into $c'$. At this time, $c'$ becomes the highest vertex of $V_j$, and so $\down(W')$ contains $\leaf(V_j)$ if $V_j$ is a leaf district (hence $\down(W')$ cannot be an elbow). Thus, we conclude that Phase~1 successively eliminates all elbows and does not create any new elbow. Since the maximal elbows are pairwise disjoint, the sum of their cardinalities is at most $n$, and they can be \contracted\ with $O(n)$ switches. In Phases~2-3, we maintain invariant (I1): There are no elbows in the district map.

Phase~2 (lines 4-8) is a for-loop over all nonleaf blocks. In the course of Phase~2, we maintain the invariant that if $W$ has been processed, then $W$ is disjoint from leaf districts. When the for-loop reaches a block $W$ that intersects a leaf district $V_i$, then $V_i$ has no elbows by invariant (I1), and the ancestors of $W$ are disjoint from $V_i$ (because we visit blocks in DFS order). 
Consequently $V_i\cap W$ is \contractible\ to
the child of $W$ that leads to the leaf block $\leaf(V_i)$. 
For each leaf district $V_i$, Phase~2 uses $O(n)$ switches to \contract\ $V_i$, and $O(kn)$ switches overall. No elbows are created since leaf districts are never expanded to a block they do not already intersect (with the possible exception of the parent cut vertex of a block). In Phase~3, we maintain invariant (I2): If a leaf district intersects a block, then such block is of type~\ref{type:leaf}.

Phase~3 (lines 9-20) is a for-loop over all blocks $W\in B(G)$; see 
Figure~\ref{fig:phase3} for an example of the execution of this phase.
In the course of Phase~3, we maintain the invariant that if $W$ has been processed, it satisfies condition (i), (ii), or (iii) in the definition of pseudo-canonical forms. 
Indeed, for every block $W$, the switch operations modify only $W$ or its descendants. 
The fact that we are processing $W$ means that its grandparent is of type~\ref{type:singleton} when we begin processing $W$. Then, $W$ intersects more than one district and we can \contract\ $V_i$ to $\{c'\}$ in line 16 without expanding any districts not contained in $W$ and in ancestors of $W$.
This already implies that (I1) is maintained. Furthermore, if $W$ satisfies conditions~\ref{type:singleton} or \ref{type:leaf},
then the districts in $W$ remain unchanged. 
Otherwise, the while-loop (lines 10-18) ensures that every district that intersects $W$ is contained in $W$. 
In each iteration of the while-loop, $V_i$ is a nonleaf district by (I2), and $V_i$ is contained in the union of $W$ and its descendants.
The switches in lines 13 and 16 do not decrease the number of districts in $W$.
The preference of expansions to \contract\ $V_i$ to $\{c'\}$ ensures that
(I2) is maintained for leaf districts.
Indeed, such a switch may expand a leaf district if there is no other option:
in this case $V_i$ contains an entire block $W^*$, which is a descendant of $W$ and whose grandchildren are of type~\ref{type:leaf}; after \contract ing $V_i$ to the parent cut-vertex of $W^*$ expanding a leaf district, $W^*$ becomes of type~\ref{type:leaf}.
Using Lemma~\ref{lem:pushing} in line 18 ensures that, eventually, $W$ is of type~\ref{type:singleton} or \ref{type:leaf}, or all its grandchildren are of type~\ref{type:leaf}. Finally, when the while loop terminates, lines 19-20 ensure that the parent cut vertex of $W$ is a singleton, and so all ancestors of $W$ comprise singletons. 
In Phase~3, $O(n)$ switches \contract\ each district, amounting to $O(kn)$ switches overall.
%

We have shown that Algorithm~\ref{algo:1conn} takes any input district map $\Pi$ into pseudo-canonical form. The three phases jointly use $O(kn)$ switches, as claimed. 
\end{proof}

\begin{figure}[hp]
\begin{center}
\includegraphics[width=0.85\textwidth]{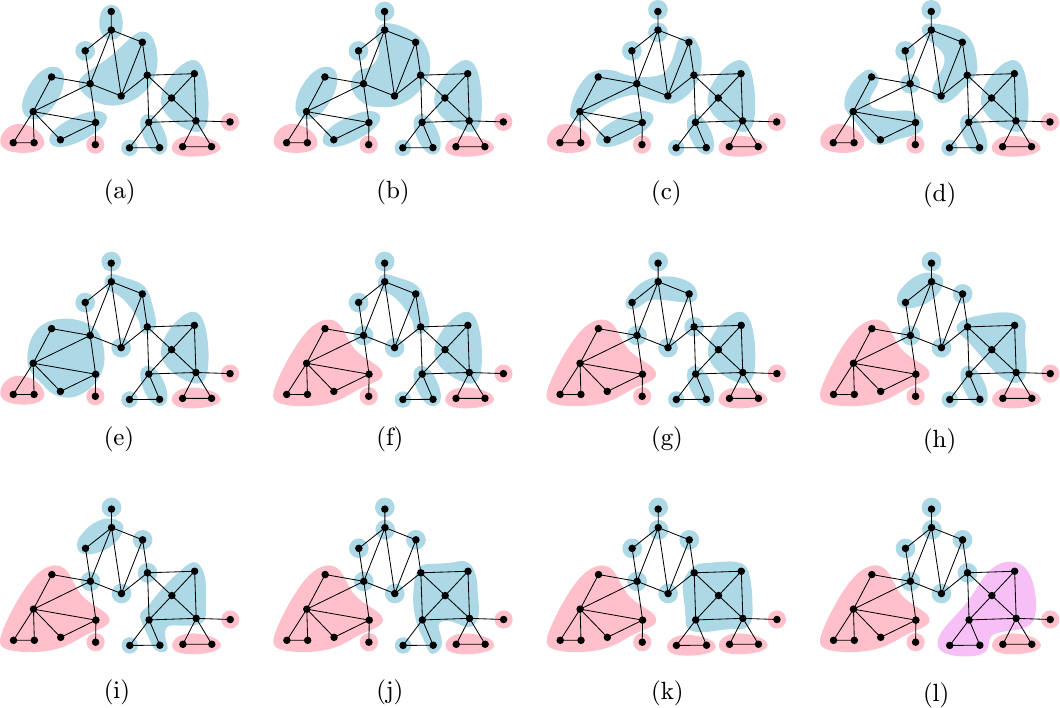}
\caption{Breakdown of the example of Phase 3 from  Figure~\ref{fig:elbows}(c) to Figure~\ref{fig:elbows}(d). 
The condition in line~10 is satisfied in (a) and (b) where $W$ is the root block, (c), (e), (f), (h) and (j) where $W$ is the grandchild of the root block.
Only in (a) the conditions in line 11 are satisfied.
Lemma~\ref{lem:pushing} (line 18) is applied in (d), (g), and (i).
While \contract ing a district $V_i$ in Figure~\ref{fig:phase3}(j), a nonleaf district becomes a leaf district in Figure~\ref{fig:phase3}(k). Continuing the \contract ing, causes $W_2$ to become type~\ref{type:leaf} in Figure~\ref{fig:phase3}(l).}
\label{fig:phase3}
\end{center}
\end{figure}

We now introduce a method to transform a pseudo-canonical $k$-district map into another.

\begin{lemma}\label{lem:general2}
Let $G$ be a connected graph with $n$ vertices and let $k\le n$ be a positive integer.
For any two pseudo-canonical $k$-district maps, $\Pi_1$ and $\Pi_2$, there is
a sequence of $O(kn)$ switches that take $\Pi_1$ to $\Pi_2$.
\end{lemma}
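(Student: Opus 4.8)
The plan is to reduce both maps to a common form and exploit reversibility. Concretely, for each connected $G$ and each $k$ I would single out one \emph{canonical} pseudo-canonical map $\Pi_0=\Pi_0(G,k)$, depending only on $G$ and $k$, and show that every pseudo-canonical map can be carried to $\Pi_0$ by $O(kn)$ switches. Since a switch from $\Pi$ to $\Pi'$ is reversible, concatenating a route $\Pi_1\to\Pi_0$ with the reverse of a route $\Pi_2\to\Pi_0$ then gives the desired $\Pi_1\to\Pi_2$ of length $O(kn)$. I would take $\Pi_0$ to be the map in which the non-leaf (``free'') districts are packed as far up and to the left as possible in the DFS order of the block tree $T$: sweep the blocks of $T$ in DFS order, filling each block with singleton non-leaf districts (condition~\ref{type:singleton}) until the supply of free districts runs out, so that exactly one \emph{frontier} block is of type~\ref{type:sing+1} and all deeper blocks are of type~\ref{type:leaf}.

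The first fact I would establish is that the coarse structure of any pseudo-canonical map is already forced by $G$ and $k$. In particular, each non-root leaf block of $T$ carries exactly one leaf district, and a leaf district cannot be contracted out of its leaf block (Lemma~\ref{lem:con1}); hence the number $L$ of leaf districts, and the leaf block anchoring each of them, is the same in every pseudo-canonical map. Consequently the number $k-L$ of free districts is also fixed, and two pseudo-canonical maps can differ only in (a) how the vertices inside a single block are partitioned into districts, and (b) how the $k-L$ free districts are distributed among the blocks, equivalently where the type-\ref{type:sing+1} frontier sits along each root-to-leaf branch. Discrepancy (a) is local: since every block $w$ induces a biconnected subgraph, Theorem~\ref{thm:2conn-alg} lets me rearrange the districts confined to $w$ into any prescribed partition with $O(k_w|w|)$ switches, where $k_w$ is the number of districts meeting $w$; as $\sum_w|w|=O(n)$ and $k_w\le k$, this stays within $\sum_w k_w|w|=O(kn)$.

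Discrepancy (b) is resolved by repeated use of Lemma~\ref{lem:pushing}. I would sweep the blocks in DFS order and, whenever a block holds more free districts than $\Pi_0$ prescribes, transfer the surplus along the path of blocks toward the next block (in DFS order) that is short of its quota. After the local canonicalization of (a), the blocks strictly between source and target consist of singleton districts, and the block that is to gain a district supplies the required size-$>1$ district from its type-\ref{type:sing+1} frontier, so the hypotheses of Lemma~\ref{lem:pushing} are met and one application moves exactly one district at a cost of $O(n)$ switches. Because the total number of free districts is $k-L\le k$, and a DFS sweep that fixes each block permanently makes every free district cross any given block boundary at most once, only $O(k)$ applications of Lemma~\ref{lem:pushing} are needed, for $O(kn)$ switches in total.

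The step I expect to be the main obstacle is the bookkeeping of (b): a careless relocation that advances a district one block at a time could incur $\Omega(k\cdot\mathrm{depth}(T))$ applications of Lemma~\ref{lem:pushing} and blow the budget. I would prevent this by interleaving the pushing with the local canonicalization of (a) in a single DFS pass that maintains the invariant ``every already-processed block agrees with $\Pi_0$,'' so that each free district is moved across a block boundary only once and never revisited. The accompanying technical nuisance is to verify that the preconditions of Lemma~\ref{lem:pushing} persist throughout the sweep---namely that the gaining block always has a non-singleton district and that the connecting path is singleton-only---which follows from the invariant together with the defining property of a type-\ref{type:sing+1} block, exactly the frontier from which each surplus is drawn.
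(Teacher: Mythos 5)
Your high-level architecture---route both maps to a canonical target, fix within-block partitions via Theorem~\ref{thm:2conn-alg}, and redistribute districts across blocks with $O(k)$ invocations of Lemma~\ref{lem:pushing} at $O(n)$ switches each---is essentially the paper's plan, and your cost accounting for the within-block phase ($\sum_w k_w|w|=O(kn)$) is sound. However, your proof rests on a false invariant: that each non-root leaf block carries exactly one leaf district, so that the number $L$ of leaf districts and their anchoring is the same in every pseudo-canonical map. Lemma~\ref{lem:con1} does not give you this---it pins districts containing \emph{two} leaf blocks, whereas a district containing a single leaf block is contractible to a singleton (Lemma~\ref{lem:con2}) and can then migrate out, with other districts refilling the leaf block as singletons. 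Indeed, condition~\ref{type:sing+1} is vacuous on descendants when the block is itself a leaf block, and condition~\ref{type:singleton} can hold at a leaf block with at least three vertices, since singletons there are non-leaf districts. Concretely: take a root triangle $R$, a middle triangle $M$, and two four-vertex leaf blocks $T_1,T_2$ hanging off $M$, and set $k=9$. The map in which both $T_1$ and $T_2$ are of type~\ref{type:sing+1} (districts $\{c_i\},\{u_i,v_i\},\{w_i\}$ in each, all other vertices singletons) is pseudo-canonical with $L=0$, while the map with $T_1$ of type~\ref{type:leaf} (leaf district $\{u_1,v_1,w_1\}$) and $T_2$ of type~\ref{type:singleton} is pseudo-canonical with $L=1$. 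Thus your dichotomy---(a) within-block rearrangement plus (b) redistribution of a fixed pool of $k-L$ free districts with leaf districts pinned---is incomplete: a valid transformation must also create and destroy leaf districts and move leaf-district mass across block boundaries, operations your bookkeeping never performs and your quota for $\Pi_0$ cannot even express.

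The paper sidesteps this by treating all districts uniformly: assign every leaf district to its leaf block and every nonleaf district to the highest block containing it, and compare the count vectors $d_{\Pi_1}(w)$ and $d_{\Pi_2}(w)$. When the vectors agree, a post-order traversal shows the block \emph{types} coincide, so types~\ref{type:singleton} and~\ref{type:leaf} are forced identical and only type~\ref{type:sing+1} blocks need Theorem~\ref{thm:2conn-alg}. When they disagree, the pigeonhole principle yields a deficit block $w_1$ and a surplus block $w_2$, and the extremal choice (highest deficit, lowest surplus; Claim~\ref{cl:star})---rather than your DFS invariant---guarantees that the connecting path consists of singletons so Lemma~\ref{lem:pushing} applies; crucially, after each push there is an explicit fix-up in which an adjacent leaf district is contracted out of $w_1$ or expanded into $w_2$ to restore pseudo-canonicity, which is exactly the leaf-district traffic your scheme forbids. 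The potential $\sum_w|d_{\Pi_1}(w)-d_{\Pi_2}(w)|\leq 2k$ drops by at least one per push, giving $O(k)$ pushes and $O(kn)$ switches overall. Replacing your pinned-leaf accounting with this count-vector argument would repair the proposal; as written, the step ``two pseudo-canonical maps can differ only in (a) and (b)'' fails.
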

\begin{proof}
Our proof is constructive: for a given district map $\Pi$ in pseudo-canonical form, we assign every leaf district to the unique leaf block it intersects, and assign every nonleaf district to the highest (closest to the root) block in $T$ it is contained in. 
For every block $W\in B(G)$, let $d_\Pi(W)$ be the number of districts assigned to $W$ in $\Pi$. Notice that $\sum_{W\in B(G)}d_\Pi(W)=k$.

First, we explain how to transform $\Pi_1$ into an intermediate pseudo-canonical district map $\Pi_m$ so that $d_{\Pi_m}(W)=d_{\Pi_2}(W)$ for every block $W\in B(G)$.

Suppose that $d_{\Pi_1}(W)\neq d_{\Pi_2}(W)$ for some block $W\in B(G)$ (otherwise, we can trivially set $\Pi_m=\Pi_1$). 
Since $\sum_{W\in B(G)}d_{\Pi_1}(W)=\sum_{W\in B(G)}d_{\Pi_2}(W)=k$, 
there exist blocks $W_1,W_2\in B(G)$ such that $d_{\Pi_1}(W_1)< d_{\Pi_2}(W_1)$ and $d_{\Pi_1}(W_2)>d_{\Pi_2}(W_2)$.

\begin{claim}\label{cl:star} 
Let $W_1$ (resp., $W_2$) be a highest (resp., lowest) block such that $d_{\Pi_1}(W_1)< d_{\Pi_2}(W_1)$ (resp., $d_{\Pi_1}(W_2)>d_{\Pi_2}(W_2)$), then all ancestor blocks of $W_1$ and $W_2$ are of type~\ref{type:singleton} in $\Pi_1$, and all descendant blocks of $W_1$ and $W_2$ are of type~\ref{type:leaf} in $\Pi_1$.
\end{claim}
\begin{proof}
Notice that if a block is of type~\ref{type:singleton} (resp., type~\ref{type:leaf}) then it has been assigned with the maximum (resp., minimum) number of districts that it can possibly be assigned to. Then, $W_1$ cannot be of type~\ref{type:singleton} in $\Pi_1$ and it cannot be of type~\ref{type:leaf} in $\Pi_2$.
By the definition of pseudo-canonical forms, all descendant (resp., ancestor) blocks of $W_1$ are of type~\ref{type:leaf} (resp., type~\ref{type:singleton}) in $\Pi_1$ (resp., $\Pi_2$). By the choice of $W_1$, all ancestor blocks of $W_1$ are of type~\ref{type:singleton} in $\Pi_1$. 
An analogous argument proves the claim for $W_2$.
\end{proof}

Next, we construct an intermediate district map $\Pi_m$ by successively reducing the difference in the $d$ functions. While $d_{\Pi_1}\neq d_{\Pi_2}$, we transform $\Pi_1$ into another district map $\Pi_1'$ in pseudo-canonical form such that 
\begin{equation}\label{eq:potential}
\sum_{W\in B(G)}|d_{\Pi_1'}(W)-d_{\Pi_2}(W)| <
\sum_{W\in B(G)}|d_{\Pi_1}(W)-d_{\Pi_2}(W)|.
\end{equation}
Let $W_1$ and $W_2$ be blocks chosen as in Claim~\ref{cl:star}, let $c_1$ and $c_2$ be their respective parent cut-vertices, and let $P$ be a shortest path between $c_1$ and $c_2$. By Claim~\ref{cl:star}, all blocks along $P$ are of type~\ref{type:singleton} in $\Pi_1$, and so every vertex in $P$ is in a singleton district. 
Applying Lemma~\ref{lem:pushing} to $\Pi_1$, we can move a district from $W_2$ to $W_1$ using $O(|W_1|+|P|)\leq O(n)$ switches. 
We need to make sure that the new map is also in pseudo- canonical form.
If $d_{\Pi_1}(W_1)=0$ (i.e., $W_1$ is of type~\ref{type:leaf}, but not a leaf block) \contract\ the leaf district out of $W_1$ by expanding the new nonleaf district that has moved into $W_1$.
If $W_2$ consists of a single (nonleaf) district, \contract\ it onto $\{c_2\}$ while expanding the leaf district of its leftmost grandchild $W_2'$.
The number of districts assigned to a block changes only in $W_1$, $W_2$, and (possibly) $W_2'$. The procedure described above increases $d(W_1)$ by one, and decreases $d(W_2)$ (and possibly $d(W_2')$) by one, making the difference smaller as claimed. 
The type of $W_1$ (resp., $W_2$) becomes \ref{type:singleton} or \ref{type:sing+1} (resp., \ref{type:sing+1} or \ref{type:leaf}) and, by Claim~\ref{cl:star}, $\Pi_1'$ is in pseudo-canonical form.

In summary: while $d_{\Pi_1}\neq d_{\Pi_2}$, we repeat the above procedure. When the while loop ends, we find a pseudo-canonical district map $\Pi_m$ such that $\sum_{W\in B(G)}|d_{\Pi_m}(W)-d_{\Pi_2}(W)|=0$ (and thus $d_{\Pi_m}=d_{\Pi_2}$).
Initially $\sum_{W\in B(G)}|d_{\Pi_1}(W)-d_{\Pi_2}(W)|\leq 2k$ and each step decreases the difference by at least one, and so at most $2k$ iterations will be needed. Since each iteration takes $O(n)$ switches, this process uses $O(nk)$ switches overall.

In order to complete the proof of Lemma~\ref{lem:general2}, we need to show how to reconfigure $\Pi_m$ to $\Pi_2$. Recall that both district maps are in pseudo-canonical form and they satisfy $d_{\Pi_m}=d_{\Pi_2}$. Further, if a district map is in pseudo-canonical form, each block is of one of three possible types. We claim that every block of $G$ is of the same type in both $\Pi_m$ and $\Pi_2$. 
For ease of notation, we assume $\Pi_1=\Pi_m$. If $W$ is the root and $d_{\Pi_i}(W)=|W|$, $i\in\{1,2\}$, or $W$ is a nonroot block and $d_{\Pi_i}(W)=|W|-1$, then $W$ is of type \ref{type:singleton}.
If $W$ is a leaf block and $d_{\Pi_i}(W)=1$, or $W$ is a nonleaf block and $d_{\Pi_i}(W)=0$ then $W$ is of type~\ref{type:leaf}. 
Else, $W$ is of type~\ref{type:sing+1}.
This implies that every block of type~\ref{type:singleton} consists of singletons; and the union of blocks of type~\ref{type:leaf} are partitioned identically into leaf districts in both $\Pi_1$ and $\Pi_2$ since, by definition of type~\ref{type:leaf}, the leaf district that intersects the block must contain the leftmost grandchild block, and by the fact that there are no elbows. 
Thus, no switches are required in these blocks.
Blocks of type~\ref{type:sing+1} each contain the same number of districts in both $\Pi_1$ and $\Pi_2$. These blocks are pairwise disjoint by definition and all districts that intersect such a block is entirely contained in that block. Applying Algorithm~\ref{algo:2conn} to each block $W$ of type~\ref{type:sing+1}, both $\Pi_1$ and $\Pi_2$ transform to the same district map in $O(k|W|)$ switches s by Theorem~\ref{thm:2conn-alg}. Overall, this takes $O(kn)$ switches, completing the proof of Lemma~\ref{lem:general2}.
\end{proof}

\subsection{Characterization of Connected Switch Graphs}
\label{ssec:char}

Using Lemmas~\ref{lem:con1}--\ref{lem:invariant} and Theorem~\ref{thm:general-graphs},
we can characterize the pairs $(G,k)$, of a connected graph $G$ and a positive integer $k$,
for which the switch graph $\Gamma_k(G)$ is connected\ShoLong{}{ (cf.~Theorem~\ref{thm:conn-test} below)}.

\begin{lemma}\label{lem:conn-char}
For a connected graph $G$ with $n$ vertices and an integer $1\leq k\le n$, the switch graph $\Gamma_k(G)$ is connected if and only if $k=1$ or every $k$-district map is \contractible\ (i.e., $\Gamma_k(G)=\Gamma_k'(G)$).
\end{lemma}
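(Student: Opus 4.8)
The plan is to prove both directions of the biconditional, treating $k=1$ as a trivial case (then $\Gamma_1(G)$ has a single node and is connected). So assume $k\geq 2$ throughout.

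For the backward direction, suppose every $k$-district map is contractible, i.e.\ $\Gamma_k(G)=\Gamma_k'(G)$. Then Theorem~\ref{thm:general-graphs} immediately gives that $\Gamma_k'(G)$ is connected, and since it coincides with $\Gamma_k(G)$, the latter is connected. This direction is essentially a one-line invocation of the earlier theorem.

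For the forward direction, I would argue by contrapositive: assume $k\geq 2$ and that \emph{not} every $k$-district map is contractible, and show $\Gamma_k(G)$ is disconnected. Since some $k$-district map is incontractible, by Lemmas~\ref{lem:con1} and \ref{lem:con2} this means there exists a $k$-district map $\Pi$ with an incontractible district $V_\ell$ containing two distinct leaf blocks $w_i,w_j$ of the block tree $T$. (Note this forces $G$ to have at least two leaf blocks, i.e.\ to be non-biconnected, which is consistent since a biconnected $G$ would make every map contractible.) The key point is the invariance established in Lemma~\ref{lem:invariant}: contractibility and incontractibility are each preserved under switch operations. Hence every district map reachable from $\Pi$ is incontractible. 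On the other hand, since $k\geq 2$ and $G$ has more than $k$ vertices in the relevant case, I must exhibit \emph{some} contractible $k$-district map to land in a different component. The natural choice is to take the two leaf blocks $w_i,w_j$ and place them in two \emph{different} districts (which is possible whenever $k\geq 2$): any district map in which no single district contains two leaf blocks is contractible by Lemmas~\ref{lem:con1}--\ref{lem:con2}. Such a contractible map exists because we can, for instance, assign a non-cut vertex of $w_i$ to its own district and the rest appropriately, using $k\geq 2$ to separate the two leaf blocks. Since one map is contractible and the other is not, and the invariant forbids switching between the two classes, they lie in different connected components of $\Gamma_k(G)$, so $\Gamma_k(G)$ is disconnected.

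The main obstacle I anticipate is the existence argument in the forward direction: I must verify that whenever an incontractible $k$-district map exists, a contractible one also exists, so that the two invariance classes are both nonempty and $\Gamma_k(G)$ genuinely splits. This requires checking that $G$ has enough structure (at least two leaf blocks, and enough vertices relative to $k$) to construct a map in which each district contains at most one leaf block. I expect this to follow from a counting/assignment argument: distribute the leaf blocks across distinct districts, which is feasible precisely because the presence of an incontractible map guarantees at least two leaf blocks while $k\geq 2$ guarantees room to separate them. The remaining reasoning is routine application of Lemma~\ref{lem:invariant}.
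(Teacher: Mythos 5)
Your backward direction and your handling of the ``mixed'' case are exactly the paper's: when $\Gamma_k(G)=\Gamma_k'(G)$, connectedness follows from Theorem~\ref{thm:general-graphs}, and when both contractible and incontractible maps exist, Lemma~\ref{lem:invariant} forbids any edge between the two classes. But there is a genuine gap in your forward direction, and it sits precisely at the point you flagged as ``the main obstacle'': your argument needs a contractible $k$-district map to exist whenever an incontractible one does, and this is simply false. Take $G$ to be three triangles sharing a single cut vertex $c$ and $k=2$. Any $2$-district map has one district $X$ avoiding $c$; since $G-c$ consists of three disjoint edges, $X$ lies inside a single triangle, and the other district then contains the two remaining triangles --- two full leaf blocks --- in their entirety. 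So \emph{every} $2$-district map of this graph is incontractible, and your proposed fix (``distribute the leaf blocks across distinct districts, feasible because $k\geq 2$'') cannot be carried out: with more leaf blocks than districts, splitting one leaf block forces the others to be lumped together, by connectivity. In this regime your contrapositive argument produces no second component to point to, yet the lemma still requires you to prove $\Gamma_k(G)$ is disconnected.

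The paper closes this case with a finer invariant than contractibility: it tracks \emph{which pair} of leaf blocks is co-contained in a district, not merely whether some district is incontractible. Given an arbitrary (necessarily incontractible) map $\Pi_1$ with a district $V_i$ containing two leaf blocks $w_a,w_b$ with cut vertices $c_a,c_b$, it picks a district $V_j$ adjacent to $V_i$ (using $k\geq 2$ and connectivity of $G$) and forms $\Pi_2$ by replacing $V_i,V_j$ with $V_i':=w_a\setminus\{c_a\}$ and $V_j':=(V_i\cup V_j)\setminus V_i'$. The map $\Pi_2$ may itself be incontractible --- in the example above it must be --- but no district of $\Pi_2$ contains both $w_a$ and $w_b$, whereas by Lemma~\ref{lem:con1} the image of $V_i$ under any sequence of switches contains $w_a\cup w_b$ forever. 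Hence $\Pi_1$ and $\Pi_2$ lie in different components even though both may belong to the incontractible class, which is exactly the situation your invariance-class argument cannot distinguish. To repair your proof you would need to split the forward direction into the mixed case (where your Lemma~\ref{lem:invariant} argument works) and the all-incontractible case, handled by this leaf-block-pair argument.
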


\global\def \pflemconnchar {
\begin{proof}
The case that $k=1$ is trivial, as $\Gamma_k(G)$ is a singleton. Assume $k\geq 2$ for the remainder of the proof.
If every $k$-district map is \contractible\ (i.e., $\Gamma_k(G)=\Gamma_k'(G)$), then $\Gamma_k'(G)$ is connected  by Theorem~\ref{thm:general-graphs}, and so $\Gamma_k(G)$ is connected. If some $k$-district maps are \contractible\ and some are \incontractible, then $\Gamma_k(G)$ is disconnected, since there is no edge between the set of \contractible\ and \incontractible\ district maps by Lemma~\ref{lem:invariant}.

Finally, assume that every $k$-district map is \incontractible\ in $G$. We show that $\Gamma_k(G)$ is disconnected. Let $\Pi_1$ be an arbitrary $k$-district map. By Lemmas~\ref{lem:con1}--\ref{lem:con2}, some district $V_i\in \Pi_1$ contains two leaf blocks of the block graph, say $W_a,W_b\in B(G)$, with parent cut vertices $c_a,c_b\in C(G)$ (possibly $c_a=c_b)$. Since $G$ is connected and $k\geq 2$, there exists a district $V_j$ adjacent to $V_i$. We construct a $k$-district map $\Pi_2$ from $\Pi_1$ by replacing $V_i$ and $V_j$ with $V_i':=W_a\setminus \{c_a\}$ and $V_j':=(V_i\cup V_j)\setminus V_i'$. Importantly, none of the districts in $\Pi_2$ contain both $W_a$ and $W_b$; and by Lemma~\ref{lem:con1}, every sequence of switch operations transforms $V_i$ to a district that contains both $W_a$ and $W_b$.
Thus $\Gamma_k(G)$ does not contain any path between $\Pi_1$ and $\Pi_2$, as required.
\end{proof}
}
\ShoLong{Proof is deferred to the Appendix.}{\pflemconnchar

}
Lemma~\ref{lem:con1} allows us to efficiently check whether a connected graph $G$ admits an \incontractible\ $k$-district map. Let $G$ be connected but not biconnected. For two leaf blocks $W_1,W_2\in B(G)$, let $P(W_1,W_2)$ denote the \emph{union} of $W_1$, $W_2$, and the set of vertices along a shortest path in $G$ between $W_1$ and $W_2$. Let
%
$M=\min\{|P(W_1,W_2)|: W_1,W_2\in B(G) \mbox{ \rm leaf blocks}\}$.
%

\begin{lemma}\label{lem:conn-test}
Let $G$ be a connected graph with $n$ vertices that is connected but not biconnected, and let $k\le n$ be a positive integer.
Every $k$-district map in $G$ is \contractible\ if and only if $n-k\leq M$.
\end{lemma}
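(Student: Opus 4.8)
The plan is to characterize exactly when a $k$-district map of $G$ can be made incontractible, and translate this into the numerical condition $n-k \leq M$. By Lemma~\ref{lem:con1} together with Lemma~\ref{lem:con2}, a district is incontractible if and only if it contains two distinct leaf blocks of the block tree $T$; and a $k$-district map fails to be contractible precisely when at least one of its districts is incontractible. So the real question is: for which values of $k$ does there exist a $k$-district map having a single district that swallows two leaf blocks? I would phrase the whole proof around producing (for the ``only if'' direction) or forbidding (for the ``if'' direction) such a district.

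For the \textbf{forward direction} (if some $k$-district map is incontractible, then $n-k > M$ is false, i.e.\ we prove the contrapositive $n-k\le M \Rightarrow$ every map is contractible), I would argue as follows. Suppose $\Pi$ is an incontractible $k$-district map. Then some district $V_i$ contains two leaf blocks $w_1,w_2$, and in particular $V_i$ contains a shortest path between them. The number of vertices in $V_i$ is at least the number of vertices on a shortest path that starts in $w_1$ and ends in $w_2$; since such a path must traverse all of a connecting path between the two blocks and include the ``interiors'' of both blocks that force incontractibility, one gets $|V_i| \geq M$ (this is exactly what the definition of $P(w_1,w_2)$, counting vertices in the two leaf blocks along a shortest connecting path, is set up to measure). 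Because the remaining $k-1$ districts are nonempty, $n = |V_i| + \sum_{j\ne i}|V_j| \geq M + (k-1)$, giving $n-k \geq M-1$, i.e.\ $n-k > M-1$, so $n - k \ge M$ is the threshold that must be crossed. I would pin down the inequality carefully so that the bound reads $n-k > M$ exactly when an incontractible map can exist, matching the stated $n-k\le M \Leftrightarrow$ all contractible.

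For the \textbf{reverse direction} I would \emph{construct} an incontractible $k$-district map whenever $n - k > M$. Choose two leaf blocks $w_1,w_2$ achieving the minimum $M$, and let $S$ be the vertex set consisting of the two blocks together with a shortest connecting path, so $|S| = M$ (after accounting for shared cut vertices, which the definition of $P$ already handles). Put all of $S$ into a single district $V_1$; this district contains two leaf blocks and is therefore incontractible by Lemma~\ref{lem:con1}. It remains to split $V(G)\setminus S$, which has $n-M$ vertices, into $k-1$ connected nonempty districts so that the whole thing is a valid $k$-district map. Since $n-k > M$ means $n - M > k-1$, there are strictly more than $k-1$ vertices available, and because $G$ is connected one can peel off singleton or small connected districts greedily (processing vertices in an order given by a spanning tree of $G$, always removing a leaf of the current subtree) to realize exactly $k-1$ connected pieces while keeping $V_1 = S$ connected. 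This yields an incontractible $k$-district map, so not every map is contractible.

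The \textbf{main obstacle} I anticipate is the careful vertex-accounting around shared cut vertices: the quantity $M=P(w_1,w_2)$ counts vertices in the two leaf blocks \emph{along} a shortest path, and one must verify that the minimal incontractible district truly has size $M$ (neither more, by taking a genuinely shortest connecting path, nor fewer, because both leaf blocks must be fully contained to force incontractibility via Lemma~\ref{lem:con1}), and that the degenerate case $c_1 = c_2$ (the two leaf blocks sharing a single cut vertex, so $P = |w_1|+|w_2|-1$) is counted consistently on both sides of the equivalence. Getting this boundary exactly right is what makes the clean cutoff $n-k \le M$ come out, rather than an off-by-one version; the connectivity-preserving construction in the reverse direction is routine once the spanning-tree peeling argument is set up.
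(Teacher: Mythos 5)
Your overall route matches the paper's: one direction by comparing the maximum possible district size against $M$, the other by constructing a district that fully contains two leaf blocks realizing $M$ and distributing the remaining vertices. But your forward direction has a real gap. From an incontractible map you derive $n=|V_i|+\sum_{j\neq i}|V_j|\geq M+(k-1)$, i.e., $n-k\geq M-1$, and then assert that ``$n-k>M$ is the threshold that must be crossed'' --- that does not follow from your inequality. What the contrapositive requires is $n-k\leq M\Rightarrow$ every map contractible, and your accounting only yields the strictly weaker $n-k\leq M-2\Rightarrow$ contractible; you flag this as an off-by-one to ``pin down carefully'' but never close it, so the stated equivalence is not proved. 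The paper's proof takes the direct route: if $n-k\leq M$, then since all $k$ districts are nonempty, the pigeonhole bound on the largest district is compared against $M$ to conclude that no district can contain two leaf blocks together with a connecting path, whence every district is contractible by Lemma~\ref{lem:con2}. You should derive the explicit bound $n-k+1$ on the largest district and run that comparison rather than arguing from the existence side; the boundary arithmetic is precisely where your plan stalls, and it is the substantive content of this direction, not a routine detail.

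Your reverse construction also has a concrete failure mode. You fix $V_1=S$ with $|S|=M$ and propose to partition $V(G)\setminus S$ into exactly $k-1$ connected districts by spanning-tree peeling, but this is impossible whenever $G\setminus S$ has more than $k-1$ connected components --- removing a minimal two-leaf-block set can shatter the rest of the graph (picture a spider whose center lies on the path inside $S$: every remaining leg becomes its own component, and no peeling order can cover $t$ components with fewer than $t$ connected districts). The paper avoids this by going in the opposite direction: take $\widehat{V}$ of minimum cardinality containing two leaf blocks and a shortest path between them, make every vertex of $V(G)\setminus\widehat{V}$ a singleton district, and then repeatedly merge pairs of adjacent districts (always possible since $G$ is connected) until exactly $k$ districts remain. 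Crucially, the district containing $\widehat{V}$ is allowed to grow and absorb stray components during the merging, and it still contains both leaf blocks afterward, so Lemma~\ref{lem:con1} gives incontractibility. To repair your version you must either let $V_1$ expand beyond $S$ or adopt the merge-down scheme; insisting on $V_1=S$ exactly is the flaw.
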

\begin{proof}
If $n-k\leq M$, then every district in a $k$-district map contains fewer than $M$ vertices. By the definition of $M$, none of these districts can contain two leaf blocks, and, therefore, are \contractible.

If $M> n-k$, then we construct a $k$-district map for $G$ in which one of the districts is \incontractible.
Let $\widehat{V}\subset V(G)$ be a vertex set of minimum cardinality that contains two leaf blocks in $B(G)$ and a shortest path between them. By definition, we have $|\widehat{V}|=M$. 
By partitioning $V(G)\setminus \widehat{V}$ into singletons, we obtain a $\widehat{k}$-district map $\widehat{\Pi}$, where $\widehat{k}=n-M+1$, and $\widehat{V}\in \widehat{\Pi}$. Successively merge pairs of adjacent districts until the number of districts drops to $k$ (recall that $G$ is connected, so some pair of districts are always adjacent). We obtain a $k$-district map $\Pi$, where one of the districts contains $\widehat{V}$, and is \incontractible\ by Lemma~\ref{lem:con1}, as required.
\end{proof}

\begin{lemma}\label{lem:BFS}
We can compute the value $M$ in $O(n+m)$ time, where $n=|V(G)|$ and $m=|E(G)|$.
\end{lemma}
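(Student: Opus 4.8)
The plan is to compute $M$ by reducing the problem to a shortest-path computation on a suitable auxiliary graph built from the block tree $T$ of $G$, and then running BFS. Recall that $M=\min\{P(w_1,w_2)\}$ over pairs of leaf blocks, where $P(w_1,w_2)$ counts the total number of vertices in $w_1$ and $w_2$ together with the vertices on a shortest path between them. The key observation is that the relevant shortest path between two leaf blocks passes through a sequence of cut vertices and intermediate blocks, so the distance naturally decomposes along the block tree.

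\textbf{Setup.} First I would compute the block tree $T$ in $O(n+m)$ time via the Hopcroft--Tarjan algorithm~\cite{HopcroftT73}. I then build a weighted auxiliary graph $H$ whose vertices are the cut vertices $C(G)$ and whose edges encode the cost of traveling through each block. Concretely, for each block $b\in B(G)$ and each pair of cut vertices $c,c'$ lying in $b$, I add an edge in $H$ whose weight is the number of vertices on a shortest $c$--$c'$ path inside the biconnected subgraph induced by $b$. Because each block is biconnected, these internal shortest paths are well-defined; and since the blocks are edge-disjoint and $\sum_{b}|E(b)| = m$, computing the needed internal distances (e.g.\ by a BFS inside each block from each of its cut vertices) costs $O(n+m)$ in total, provided the number of cut vertices touched per block is controlled.

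\textbf{Handling the "two-block" endpoints.} The quantity $P(w_1,w_2)$ bundles the full vertex counts of the two \emph{leaf} blocks $w_1,w_2$ (not merely the path vertices inside them). So I would treat each leaf block specially: for a leaf block $w$ with cut vertex $c$, the contribution of $w$ to $P(\cdot,\cdot)$ is the full $|w|$, which I record as a vertex-label $\lambda(w)=|w|$ attached to $c$. The total cost for a pair of leaf blocks $w_1,w_2$ is then $\lambda(w_1)+\lambda(w_2)$ minus the double-counted shared cut vertices, plus the count of intermediate vertices along the cut-vertex path in $H$. To find the minimum over all leaf-block pairs simultaneously, I would add a single \emph{super-source} $s$ to $H$, connect $s$ to each leaf block's cut vertex $c$ with edge weight $\lambda(w)$ (adjusted so shared boundary vertices are not overcounted), and run a single-source shortest-path / multi-source BFS. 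Since all block-internal shortest-path weights are small positive integers, a BFS-style relaxation (or Dijkstra with integer weights on a graph of size $O(n+m)$) runs in $O(n+m)$ time, and $M$ is read off as the smallest value achieving a second leaf block.

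\textbf{Main obstacle.} The hard part will be the careful bookkeeping of which vertices are counted and not double-counted: $P(w_1,w_2)$ sums the vertices of both leaf blocks plus the intermediate path vertices, but cut vertices shared between consecutive blocks along the path (and the cut vertices of $w_1,w_2$ themselves) must be counted exactly once. I would resolve this by fixing a consistent convention---each block's internal-path weight counts, say, all its vertices except its "upper" cut vertex---so that summing edge weights along any $H$-path telescopes to the correct vertex count without overlap. Verifying that this convention makes the $H$-distance exactly equal to $P(w_1,w_2)-(\text{fixed constant})$ for every leaf-block pair, and that the aggregate per-block BFS work is bounded by $\sum_b O(|E(b)|)=O(m)$ rather than incurring a blowup from repeated computations, is the delicate step; once that is established, the $O(n+m)$ bound follows immediately from the linear-time block-tree construction and the linear-size BFS.
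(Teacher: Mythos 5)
There is a genuine gap, and it is exactly at the step you flagged as needing control: your auxiliary graph $H$ is not linear in size, and building it is not linear in time. A single block can contain $\Theta(n)$ cut vertices --- take $G$ to be a cycle on $n/2$ vertices with a pendant leaf attached to every cycle vertex; then there is one non-leaf block (the cycle) in which \emph{every} vertex is a cut vertex. Your construction would run a BFS inside this block from each of its $\Theta(n)$ cut vertices, costing $\Theta(n^2)$ time, and would insert an edge of $H$ for each of the $\Theta(n^2)$ pairs of cut vertices, so $|E(H)|=\Theta(n^2)$. No convention about which endpoint's vertices an edge-weight counts can repair this: the all-pairs-of-cut-vertices-per-block decomposition is inherently superlinear. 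There are also two secondary problems: (a) once $H$ has integer weights larger than $1$, "multi-source BFS" is not available --- Dijkstra exceeds $O(n+m)$, and even a bucket-queue (Dial) variant only helps if $|E(H)|$ were linear, which it is not; and (b) "read off $M$ as the smallest value achieving a second leaf block" is underspecified: a single multi-source search gives each vertex its distance to the \emph{nearest} leaf block only, and extracting the minimum over pairs of \emph{distinct} leaf blocks requires an argument.

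The paper sidesteps your obstacle entirely by never leaving $G$: it replaces each leaf block $w$ (with cut vertex $c$) by a chain of $|w|$ vertices hanging from $c$ --- this encodes the "count the whole leaf block" requirement, which you handled with the super-source weights $\lambda(w)$ --- and then runs one unweighted multi-source BFS from the chain leaves in the modified graph $G'$, recording for each vertex $v$ its level $\ell(v)$ and a cluster label $c(v)$ identifying the nearest source. It returns $\min\{\ell(u)+\ell(v)+2 : uv\in E(G'),\ c(u)\neq c(v)\}$, and proves via a triangle-inequality argument that on a shortest path between the \emph{closest} pair of leaf blocks, every vertex's nearest source is one of the two path endpoints, so some edge on that path has differing cluster labels and witnesses the true minimum --- this is precisely the argument your point (b) is missing. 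Since everything is a single BFS on a graph with $O(n+m)$ vertices and edges, linearity is immediate, with no per-block shortest-path computations at all. Your overall instinct (multi-source search from the leaf blocks, with the leaf-block sizes folded into the distances) matches the paper's; the failure is in routing the search through a cut-vertex metric graph instead of through $G$ itself.
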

\begin{proof}
Given a connected graph $G=(V,E)$, first compute the block tree, and modify $G$ as follows: replace each leaf block by a path with the same number of vertices, such that one endpoint is the original cut vertex (and hence the other endpoint is a leaf), and denote by $G'$ the resulting graph. Then we run a modified multi-source BFS on $G'$, starting from the leaves. The algorithm assigns two labels to every vertex $v\in V(G')$, the \emph{level} $\ell(v)$ and a \emph{cluster} $c(v)$. Initially, each leaf $v\in V(G')$ is assigned level $\ell(v)=0$ and clusters $c(v)=v$. When the BFS visits a new vertex $v$ along an edge $uv$, it sets $\ell(v):=\ell(u)+1$ and $c(v):=c(u)$. Clearly, $\ell(v)$ is the distance from $v$ to the closest leaf in $G'$, and $c(v)$ is one such leaf. After the BFS termination, our algorithm finds an edge $uv\in E(G')$ such that $c(u)\neq c(v)$ and $\ell(u)+\ell(v)$ is minimal, and returns $\ell(u)+\ell(v)+2$.

The modified BFS runs in $O(n+m)$ time, and a desired edge $uv$ can be found in $O(m)$ additional time, so the overall running time is $O(n+m)$. It remains to prove that $M=\ell(u)+\ell(v)+2$. Note that $u$ and $v$ are at distance $\ell(u)$ and $\ell(v)$, resp., from the leaves $c(u)$ and $c(v)$. The cluster of $c(u)$ (resp., $c(v)$) contains a shortest path from $u$ to $c(u)$ (resp., from $v$ to $c(v)$), and so these shortest paths are disjoint. The concatenation of the two shortest paths is a shortest path $P'$ between the leaves $c(u)$ and $c(v)$, and it has $\ell(u)+\ell(v)+2$ vertices. The path $P'$ contains the chains incident to $u$ and $v$ in $G'$. By the definition of $G'$, these chains correspond to leaf blocks $W_1$ and $W_2$ of the same size in $G$. Consequently, $\ell(u)+\ell(v)+2 = P(W_1,W_2)$
Therefore, $M\leq \ell(u)+\ell(v)+2$.

Conversely, assume that $M=P(W_1,W_2)$ for some leaf blocks $W_1,W_2\in B(G)$. These leaf blocks correspond to chains ending in two leaves, say $W'_1$ and $W'_2$, in $G'$. By construction, the distance between $W'_1$ and $W'_2$ is $d_{G'}(W'_1,W'_2)=M-1$.  Let $P'$ be a shortest path between $W'_1$ and $W'_2$. We claim that for every vertex $v'$ in $P'$, $\ell(v')$ is the minimum distance to $\{W'_1,W'_2\}$, i.e., $\ell(v')=\min\{d_{G'}(v,W_1),d_{G'}(v,W_2)\}$. 
Suppose, to the contrary, that there is a vertex $v'$ in $P'$ for which  $\ell(v)\neq \min\{d_{G'}(v',W'_1),d_{G'}(v',W'_2)\}$. Since $\ell(v')$ is the minimum distance to some leaf in $G'$, we have $\ell(v')=d_{G'}(v,W'_3)$ for a leaf $W'_3$, and $\ell(v')<\min\{d_{G'}(v',W'_1),d_{G'}(v',W'_2)\}$. As $v'$ is in the path $P'$, 
$d_{G'}(v',W'_1)+d_{G'}(v',W'_2)=d_{G'}(W'_1,W'_2)=M-1$. By the triangle inequality, $d_{G'}(W'_1,W'_3)$ or $d_{G'}(W'_2,W'_3)$ is less than $d_{G'}(W'_1,W'_2)$, 
contradicting the minimality of $P(W_1,W_2)$. Now $P'$ contains two consecutive vertices, say $u^*$ and $v^*$, such that $\ell(u^*)=d_{G'}(u^*,W'_1)$, $\ell(v^*)=d_{G'}(v^*,W'_2)$, and $c(u^*)\neq c(v^*)$. The sum of their distances to the two endpoints of $P'$ is $\ell(u^*)+\ell(v^*)=(M-1)-1=M-2$, hence $M=\ell(u^*)+\ell(v^*)+2$. 
Then, $\ell(u)+\ell(v)+2\leq M$, as required. 
%
\end{proof}

The combination of Theorem~\ref{thm:2conn-alg} and Lemmas~\ref{lem:conn-char}--\ref{lem:BFS}
yields the following result.
\begin{theorem}\label{thm:conn-test}
For a connected graph $G$ with $n$ vertices and a positive integer $k\le n$, the switch graph $\Gamma_k(G)$ is connected if and only if $G$ is biconnected or $k+M\geq n$, which can be tested in 
$O(n+m)$ time, where $m=|E(G)|$.
\end{theorem}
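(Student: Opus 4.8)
The plan is to assemble the statement directly from the machinery already established, since Theorem~\ref{thm:2conn-alg} and Lemmas~\ref{lem:conn-char}--\ref{lem:BFS} together cover every case; the only work is to stitch them into a single clean equivalence and to verify the claimed running time. I would split on whether $G$ is biconnected. If $G$ is biconnected, Theorem~\ref{thm:2conn-alg} already gives that $\Gamma_k(G)$ is connected, which matches the first disjunct of the stated condition. I would stress that in this case $G$ has a single block and no cut vertices, so the leaf blocks and the quantity $M$ are not even defined; this is precisely the reason biconnectivity must appear as a separate clause rather than being folded into the inequality $k+M\geq n$.

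Now suppose $G$ is connected but not biconnected, so that $M$ is well defined. For $k\geq 2$, Lemma~\ref{lem:conn-char} reduces the connectivity of $\Gamma_k(G)$ to the assertion that every $k$-district map is contractible, and Lemma~\ref{lem:conn-test} rewrites that assertion as $n-k\leq M$, which is literally the same as $k+M\geq n$. Chaining these two equivalences yields that $\Gamma_k(G)$ is connected if and only if $k+M\geq n$, exactly as required. Combining this with the biconnected case gives the full disjunctive characterization. The degenerate value $k=1$ must be treated separately: $\Gamma_1(G)$ is a single node and hence trivially connected, so the inequality $k+M\geq n$ is understood only for the nontrivial range $k\geq 2$; I would state the characterization for $k\geq 2$ and note the triviality of $k=1$ so that the intended reading is unambiguous.

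For the testing claim I would run the three ingredients in sequence. First compute the block tree of $G$ via the Hopcroft--Tarjan algorithm in $O(n+m)$ time; this simultaneously decides biconnectivity (equivalently, whether $G$ reduces to a single block) and exposes the leaf blocks and cut vertices needed afterward. If $G$ is biconnected, report ``connected'' immediately. Otherwise, compute $M$ in $O(n+m)$ time using Lemma~\ref{lem:BFS} and compare $k+M$ against $n$ in constant time. The three steps run in $O(n+m)$ time in total, establishing the complexity bound.

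I expect no genuine combinatorial obstacle here, as all the substance lives in the invoked lemmas; the care required is entirely in the reconciliation of the two regimes into one statement. The points to check are that $M$ is defined exactly when $G$ is not biconnected, that the equivalence $n-k\leq M \Longleftrightarrow k+M\geq n$ is applied in the correct direction, and that the boundary value $k=1$ (where the switch graph is a singleton) does not interfere with the intended equivalence. Verifying these bookkeeping items is what the write-up should emphasize.
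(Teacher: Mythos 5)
Your proposal is correct and takes essentially the same route as the paper, which proves Theorem~\ref{thm:conn-test} in exactly this way: by combining Theorem~\ref{thm:2conn-alg} (biconnected case) with Lemmas~\ref{lem:conn-char}--\ref{lem:conn-test} (chaining connectedness of $\Gamma_k(G)$ to contractibility of all $k$-district maps to $n-k\leq M$) and Lemma~\ref{lem:BFS} for the $O(n+m)$ test. Your explicit handling of the boundary case $k=1$ (where $\Gamma_1(G)$ is a singleton and hence connected even if $1+M<n$) is a careful refinement that the paper leaves implicit.
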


\section{PSPACE-Completeness for Connectedness}
\label{ssec:hardness-connect}
\setcounter{tocdepth}{4}
\setcounter{secnumdepth}{4}

In the \textbf{connectedness problem}, we are given a graph $G$, and two $k$-district maps, $\Pi_A$ and $\Pi_B$, for some integer $1\leq k \le n$, and ask whether $\Pi_A$ and $\Pi_B$ are in the same component of the switch graph $\Gamma_k(G)$. In this section, we show that this problem is PSPACE-complete. 
Further, we show that the problem remains PSPACE-complete even if $(i)$ we restrict $G$ to be a planar graph of maximum degree 6, or $(ii)$ we restrict the number of districts to $k=2$. As an immediate consequence, we show that the diameter of a connected component of $\Gamma_k(G)$ may be as large as $2^{\Omega(n)}$ where $n = |V(G)|$.
\FullSODA{
\subsection{Membership in PSPACE}
Since NPSPACE = PSPACE, to place our problem in PSPACE it suffices to give a nondeterministic algorithm which solves it using polynomial space. A more in depth discussion can be found in~\cite{PSPACE-book}.

\begin{algorithm}[h]
\caption{to Decide whether $A$ and $B$ are in the same Component of  $\Gamma_k(G)$}\label{algo:pspaceconn}
\begin{algorithmic}[1]
\Procedure{Connected}{$G,k,A,B$}
\State Label the vertices of $G$ according to the current district map $\Pi$, which is initialized to $A$.
\State Initialize an $n \lceil \log n \rceil$ bit counter $X$ to zero, where $n = |V(G)|$.
\While {$X < n^n$}
    \State Nondeterministically guess an ordered triplet of vertices $(u, v, w)$ from $G$.
    \If{switch$_{\Pi}(u, v, w)$ is a valid switch operation}
        \State $\Pi:=$ switch$_{\Pi}(u, v, w)$.
        \If{$\Pi = B$}
            \Return True
        \Else{}
                \State $X := X + 1$
        \EndIf
    \Else{}
        Continue
    \EndIf
\EndWhile
\State \Return False
\EndProcedure
\end{algorithmic}
\end{algorithm}

Consider Algorithm~\ref{algo:pspaceconn}. First note that the number of possible $k$-district maps on $G$ is bounded above by the number of partitions on an $n$ element set, which in turn is bounded above by $n^n$. So if $A$ and $B$ are in the same component of $\Gamma_k(G)$ then there exists a path from $A$ to $B$ whose length is upper bounded by $n^n$.
So the algorithm must find such a path before the counter reaches its terminal value. 
Conversely, if $A$ and $B$ are not in the same component of $\Gamma_k(G)$ the algorithm will terminate once the counter reaches $n^n$. 
The space required is polynomial since the only memory needed is to store at any given time the value of an $n \lceil \log n \rceil$ bit counter and a copy of $G$ with its vertices partitioned into districts.
}{Membership in PSPACE is justified by the fact that a nondeterministic machine can explore $\Gamma_k(G)$ storing one district map at a time, so we focus now on proving hardness.}

\subsection{PSPACE-Hardness for General Graphs with Many Districts}
\label{ssec:PSPACE-general}
We prove PSPACE-hardness by a reduction from the reconfiguration problem for Nondeterministic Constraint Logic (abbreviated NCL), which is known to be PSPACE-complete~\cite{PSPACE}. 
In this problem, we are given an \emph{NCL graph}, which is a planar cubic graph where each edge is colored either blue or red, and each vertex is either an \emph{OR} vertex incident on 3 blue edges, or an \emph{AND} vertex incident on 2 red edges and 1 blue edge. 
An NCL graph with an orientation assigned to its edges is considered \emph{satisfied} if all of its vertices are satisfied; an OR vertex is \emph{satisfied} when at least one edge is oriented towards it, and an AND vertex is \emph{satisfied} when both of its red edges are oriented towards it or its blue edge is oriented towards it. 
In the \emph{NCL reconfiguration problem}, we are given an initial and a final orientation that both satisfy an NCL graph, and must decide whether one can be reconfigured into the other by flipping the orientation of one edge at a time in such a way that after each flip the NCL graph is satisfied.

Given an NCL graph $G_{NCL}$, we create a graph $G$ as follows. 
First create OR and AND gadgets of 7 and 9 vertices, respectively. The adjacencies between the vertices are shown in Figure~\ref{fig:vertex gadgets}. 
Given a vertex $v \in G_{NCL}$, we denote the corresponding gadget $F(v)$. 
For each gadget, the labelled vertices in Figure~\ref{fig:vertex gadgets} are \emph{terminals}, and the unlabelled vertices that are adjacent to the leaves are called \emph{anchors}.

\begin{figure}[H]
	\centering
	\begin{subfigure}{.49\textwidth}
		\centering
		\includegraphics[width=.9\linewidth]{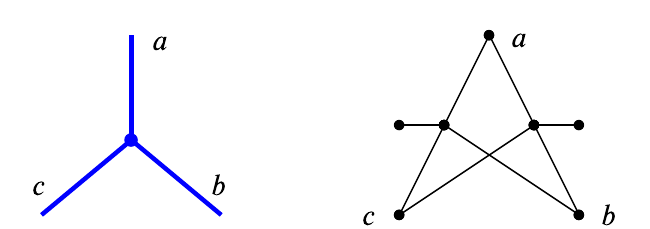}
		\label{fig:OR}
	\end{subfigure}
	\begin{subfigure}{.49\textwidth}
		\centering
		\includegraphics[width=.9\linewidth]{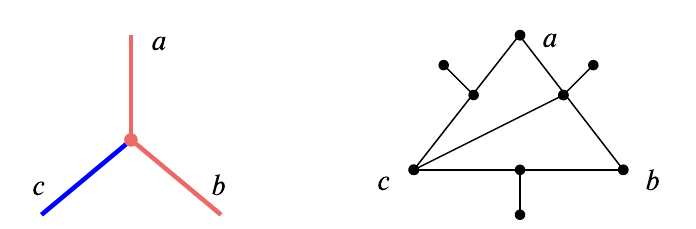}
		\label{fig:AND}
	\end{subfigure}
	\caption{Gadgets for OR and AND vertices (left and right, respectively)
	}
	\label{fig:vertex gadgets}
\end{figure}

For the AND gadget, we call the two degree-two terminals ($a$ and $b$) \emph{red terminals}, and the degree-three terminal ($c$) a \emph{blue terminal}. 
Each edge of $G_{NCL}$ corresponds to a terminal in two gadgets, one for each vertex that edge is incident to (as shown by the labels in Figure~\ref{fig:vertex gadgets}), and thus we identify terminals of different gadgets that correspond to the same edge (Figure~\ref{fig:gadget gluing}).
This concludes the construction of $G$.

\begin{figure}[H]
	\centering
	\includegraphics[scale=0.6]{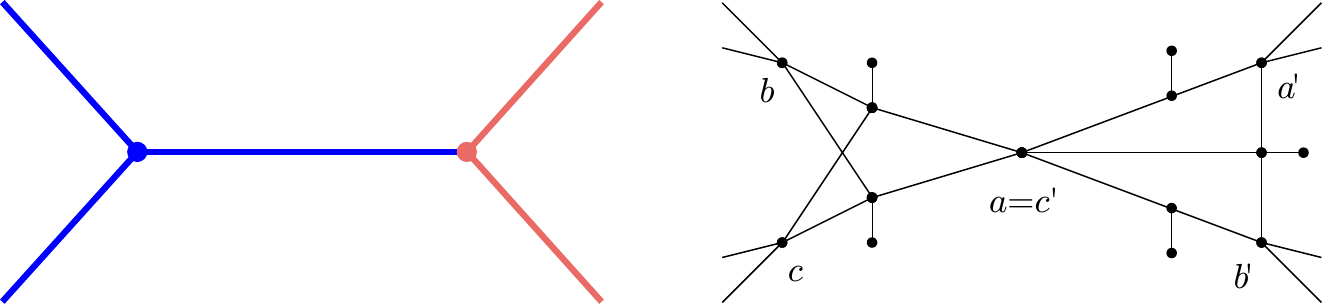}
	\caption{Two gadgets glued together along a shared terminal.
	}
	\label{fig:gadget gluing}
\end{figure}

It remains to construct an initial and a final district map for $G$ to simulate the initial and final orientations in $G_{NCL}$. 
Given an orientation $O$ on $G_{NCL}$, construct a district map $\Pi_O$ for $G$ as follows: for every vertex $v \in G_{NCL}$,  we construct a district that will contain most vertices of the gadget $F(v)$. This district always contains all anchors and leaves of that gadget. In addition, it contains the terminal corresponding to every edge $e$ orientated towards $v$ in $O$ (see Figure~\ref{fig:orientation}). 
Note that every district in the initial or the final configuration contains at least two leaves. By Lemma~\ref{lem:con1}, these leaves and their anchors cannot be moved to another district by any sequence of switch operations. Thus each district is tied to its respective gadget, that is, 

\begin{enumerate}
    \item [$(\star)$] there is a one-to-one correspondence between the districts and the gadgets that remains invariant under 1-switch operations. 
\end{enumerate}

\begin{figure}[H]
	\centering
	\includegraphics[scale=1]{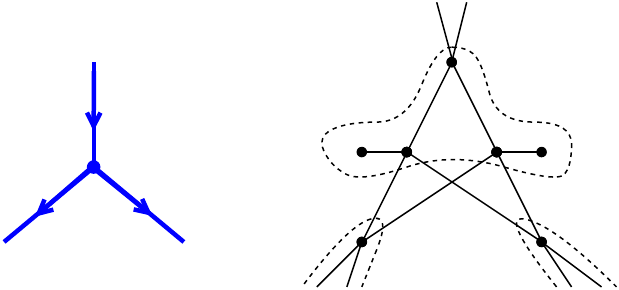}
	\caption{An orientation of an OR vertex and its associated district map
	}
	\label{fig:orientation}
\end{figure}

\begin{lemma}\label{lem:connectedness satisfies}
The district map $\Pi_O$ over $G$ is well defined if and only if the orientation $O$ on $G_{NCL}$ is satisfying.
\end{lemma}

\begin{proof}
%
Consider an AND vertex and its associated gadget. We know by property $(\star)$, that the district corresponding to this gadget must contain three anchor vertices and three leaves. In order for the district to be connected it must contain 
(i) either the terminal associated with the blue edge ($c$ in Figure~\ref{fig:vertex gadgets}), or (ii) both terminals associated with red edges ($a$ and $b$ in Figure~\ref{fig:vertex gadgets}). This is equivalent to saying that the AND vertex is satisfied by $O$.

Similarly, for the gadget associated to an OR vertex, its corresponding district has two anchors and two unlabelled leaves. In order for these vertices to remain connected within the district, the district must contain any of the three terminals of the gadget. This is equivalent to saying that $O$ satisfies the OR vertex. 
\end{proof}

\begin{lemma}[Flip-Switch Equivalence]\label{lem:switch to flip}
For every district map $\Pi_{O_1}$ on $G$ obtained from an orientation $O_1$ on $G_{NCL}$, every 1-switch operation on $\Pi_{O_1}$ yields a map $\Pi$ such that $\Pi = \Pi_{O_2}$ where $O_2$ is an orientation on $G_{NCL}$ that differs from $O_1$ by the orientation of a single edge.
Similarly, for an orientation $O_3$ on $G_{NCL}$ obtained from $O_1$ by flipping the orientation of a single edge, there is a 1-switch operation that takes $\Pi_{O_1}$ to $\Pi_{O_3}$.
\end{lemma}
\begin{proof}
Consider an edge $e$ in $G$ whose endpoints are in different districts in $\Pi_{O_1}$. 
By construction of our gadgets and Lemma~\ref{lem:con1}, $e=rt$ for an anchor $r$ and a terminal $t$. Further, note that the district $\Pi_{O_1}(t)$ containing $t$ cannot absorb $r$ since the leaf adjacent to $r$ would disconnect from the rest of its district, violating the requirement that districts stay connected. Thus, the only 1-switch we can perform along $e$ is one in which the district $\Pi_{O_1}(r)$ containing $r$ expands to $t$. This is precisely the district map $\Pi_{O_2}$ where $O_2$ is the orientation we get by flipping the orientation of the edge associated with $t$ in $O_1$.

Conversely, let $e'$ be the only edge in $G_{NCL}$ whose orientation differs in $O_1$ and $O_3$. By construction, $e'$ corresponds to a terminal $v_{e'}$, which is adjacent to anchors in two distinct districts, and $\Pi_{O_1}$ and $\Pi_{O_3}$ differ only by the membership of $v_{e'}$. Hence $\Pi_{O_3}$ is obtained from $\Pi_{O_3}$ by performing a single 1-switch operation that moves $v_{e'}$ from one district to the other.
\end{proof}


\FullSODA{Now that we have equivalency between the two operations, we show that indeed the two problem instances are equivalent.
\begin{lemma}\label{lem:equivalency}
Given two orientations $O_1, O_2$ on the NCL graph $G_{NCL}$ with $k$ vertices, there exists a valid sequence of edge flips transforming $O_1$ to $O_2$ if and only if $\Pi_{O_1}, \Pi_{O_2}$ are in the same connected component of $\Gamma_{k}(G)$.
\end{lemma}
\begin{proof}
Say there exists a valid reconfiguration from $O_1$ to $O_2$ by edge flips. So there exists a sequence $O_{r_1}, \ldots, O_{r_k}$ of orientations on $G_{NCL}$ such that all orientations satisfy $G_{NCL}$, $O_{r_1} = O_1$, $O_{r_k} = O_2$, and $\forall i$ s.t. $1 \leq i < k$, $O_{r_i}$ differs from $O_{r_{i + 1}}$ by a single edge flip. By Lemma~\ref{lem:connectedness satisfies}, since $O_{r_i}$ satisfies $G_{NCL}$ for all $1 \leq i < k$, $\Pi_{O_{r_i}} \in \Gamma_{k}(G)$ for all $1 \leq i < k$. By  Lemma~\ref{lem:switch to flip}, since $O_{r_i}$ differs from $O_{r_{i + 1}}$ by a single edge flip for all $1 \leq i < k$, $\Pi_{O_{r_i}}$ is adjacent to $\Pi_{O_{r_{i+1}}}$ in $ \Gamma_{k}(G)$ for all $1 \leq i < k$. So then $\Pi_{O_{r_1}}, \ldots, \Pi_{O_{r_k}}$ describes a walk from $\Pi_{O_1}$ to $\Pi_{O_2}$ in $\Gamma_{k}(G)$, so the two district maps must be in the same component of $\Gamma_{k}(G)$. The other direction of the proof is essentially identical.
\end{proof}
}{
Lemma~\ref{lem:switch to flip} implies the following theorem:
}

\begin{theorem} \label{thm:general pspace hardness}
Given a graph $G$, and two $k$-district maps $\Pi_A$ and $\Pi_B$ on $G$, it is PSPACE-complete to determine whether $\Pi_A$ and $\Pi_B$ are in the same connected component of $\Gamma_k(G)$.
\end{theorem}
\FullSODA{
\begin{proof}
As shown above this problem lies in PSPACE. Since the edge flip reconfiguration problem on NCL graphs is PSPACE-hard, and our reduction clearly takes polynomial time, Lemma~\ref{lem:equivalency} shows that our problem is PSPACE-hard and thus PSPACE-complete.
\end{proof}
}{}

\subsection{PSPACE-Hardness for Planar Graphs}
\label{ssec:PSPACE-planar}
We now modify the reduction described in Section~\ref{ssec:PSPACE-general} so that it creates a planar graph $G$ for a given  NCL graph $G_{NCL}$. Recall that the NCL graph  $G_{NCL}$ is a planar cubic graph. Thus, to ensure that $G$ is planar, it suffices that each gadget admits a planar drawing with terminals in the outer face.

The gadget associated with an AND vertex already satisfies this condition. In this section, we construct a slightly more complicated gadget which behaves like an OR vertex and admits a planar drawing with all of its terminals on the outer face. Refer to Figure~\ref{fig:planar or}. 
As before, the labeled vertices are the terminals for this gadget, and each terminal must be identified with the terminal of the neighboring gadget as previously described. 
Apart from terminals, there are three leafs and three anchors and an copy of $K_3$ (i.e., a 3-cycle) in the middle of the gadget. The three anchors are adjacent to distinct vertices of the 3-cylce.

\begin{figure}[H]
	\centering
	\includegraphics[scale=0.4]{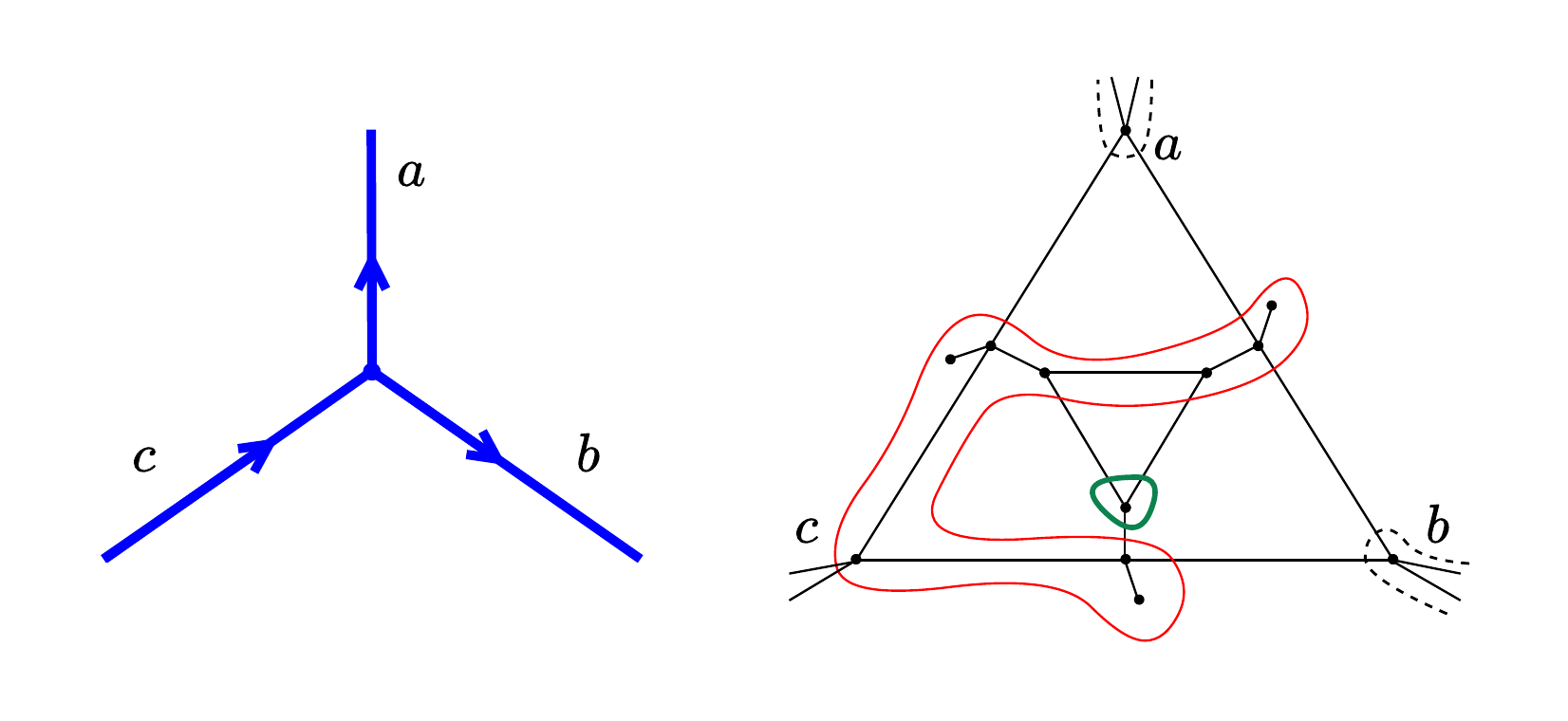}
	\caption{Modified planar OR gadget}
	\label{fig:planar or}
\end{figure}

Unlike the previous gadgets, this gadget comes equipped with two districts: one called the \emph{guard} that interacts with other gadgets (shown in thin red in Figure~\ref{fig:planar or}), defined as the district that  contains all three leafs and anchors of the gadget, and one called the \emph{prisoner} that is trapped inside of this gadget (shown in bold green in Figure~\ref{fig:planar or}), defined as a district that is not the guard and that consists of some vertices of the 3-cycle. 
The correspondence between valid orientations of $G_{NCL}$, and district maps containing a guard and a unique prisoner district is defined as follows.
At least one vertex in the 3-cycle in the gadget is in the prisoner district. 
We define only the prisoner district and let the guard district contain all remaining vertices of the gadget, except the terminals associated with edges of $G_{NCL}$ oriented away from the corresponding vertex (see Figure~\ref{fig:planar or}).
If the indegee of the OR vertex in $G_{NCL}$ is 2  or 3, the prisoner district can be any nonempty set of vertices in the 3-cycle. Otherwise, the indegree of the OR vertex is exactly one, and the guard district will contain exactly one of the three terminals. In that case, the prisoner district contains either of the two vertices of the 3-cylce that are within distance $2$ from the terminal vertex within the guard district. We now prove that, after any sequence of 1-switches, these properties of the guard and prisoner districts continue to hold, and they each induce a connected subgraph in $G$ if and only if they correspond to a satisfying orientation of $G_{NCL}$.

\begin{lemma} \label{lem:planar or}
The modified OR gadget behaves as the original OR gadget.
\end{lemma}
\begin{proof}
As was the case for the original OR gadget, we can see that a satisfying orientation of the NCL graph will create a connected district map for this new gadget and vice versa (the guard district will be connected if and only if it contains at least one terminal, and the prisoner district is always connected because it consists of a subgraph of the complete graph $K_3$).

Once again, Lemma~\ref{lem:con1} guarantees that the three leaves and the adjacent anchors of a guard district remain in the same district under any sequence of switch operations. Further, since the prisoner is on a 3-cycle which is only adjacent to cut vertices (anchors) in the guard district, which remain in the same district by Lemma~\ref{lem:con1}, the prisoner can never escape from this 3-cycle and the number of prisoners cannot change in a gadget. Thus, similarly to property ($\star$) in Section~\ref{ssec:PSPACE-general}, we can uniquely identify an OR gadget with the two districts it must always contain.

\begin{figure}[htbp]
	\centering
	\begin{subfigure}{.3\textwidth}
		\centering
		\includegraphics[width=.7\linewidth]{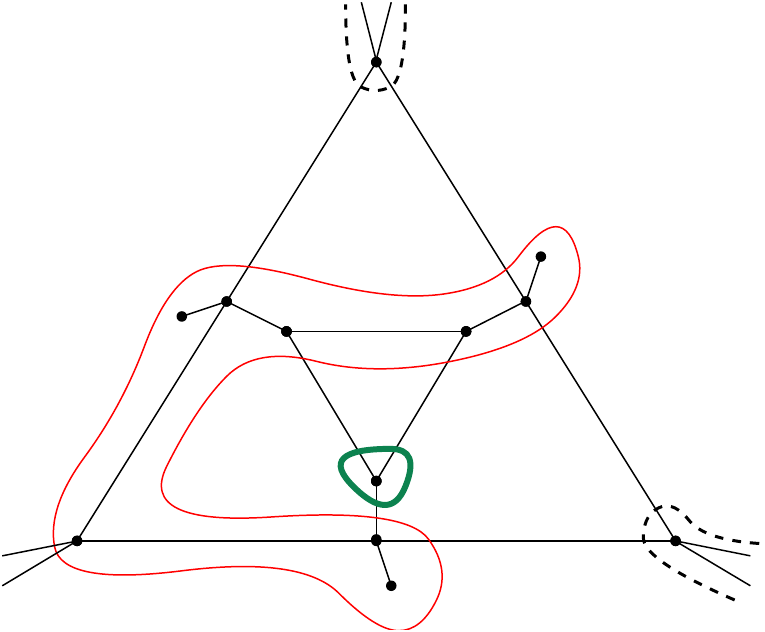}
		\label{fig:pt1}
	\end{subfigure}%
	\begin{subfigure}{.3\textwidth}
		\centering
		\includegraphics[width=.7\linewidth]{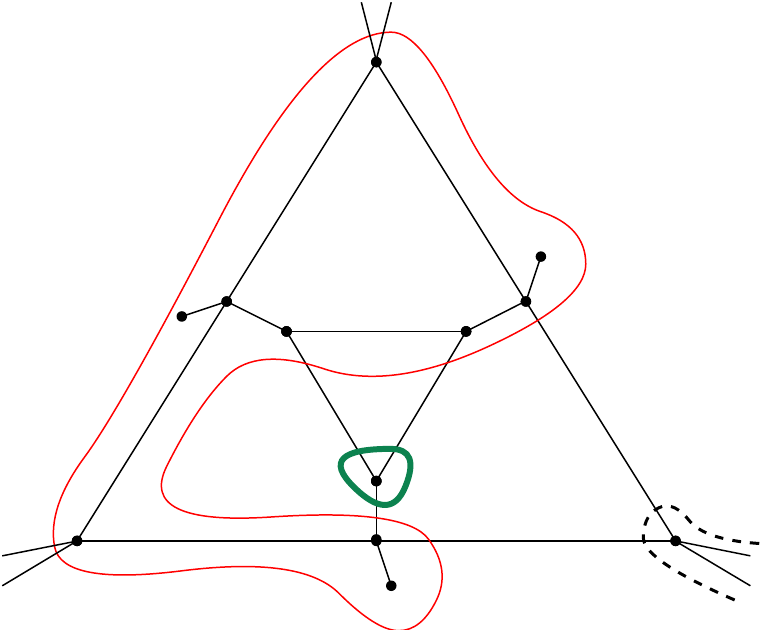}
		\label{fig:pt2}
	\end{subfigure}%
	\begin{subfigure}{.3\textwidth}
		\centering
		\includegraphics[width=.7\linewidth]{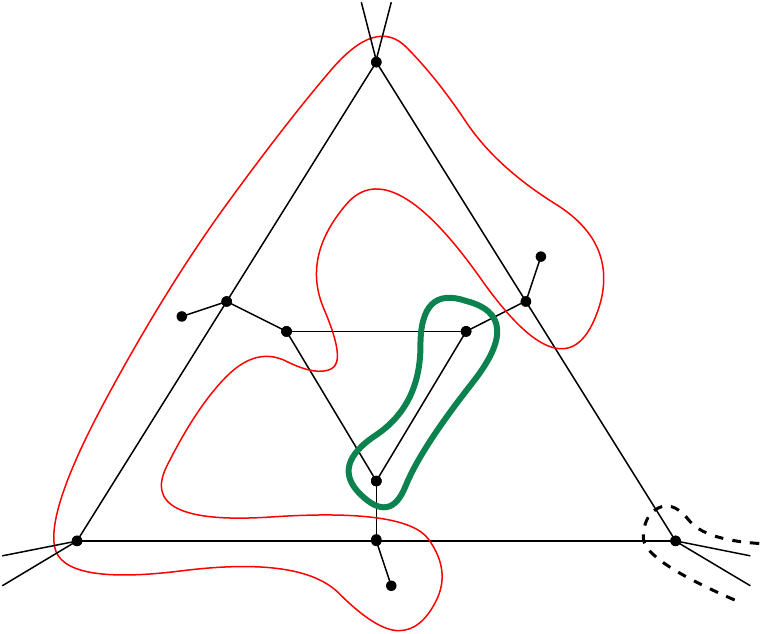}
		\label{fig:pt3}
	\end{subfigure}
	\begin{subfigure}{.3\textwidth}
		\centering
		\includegraphics[width=.7\linewidth]{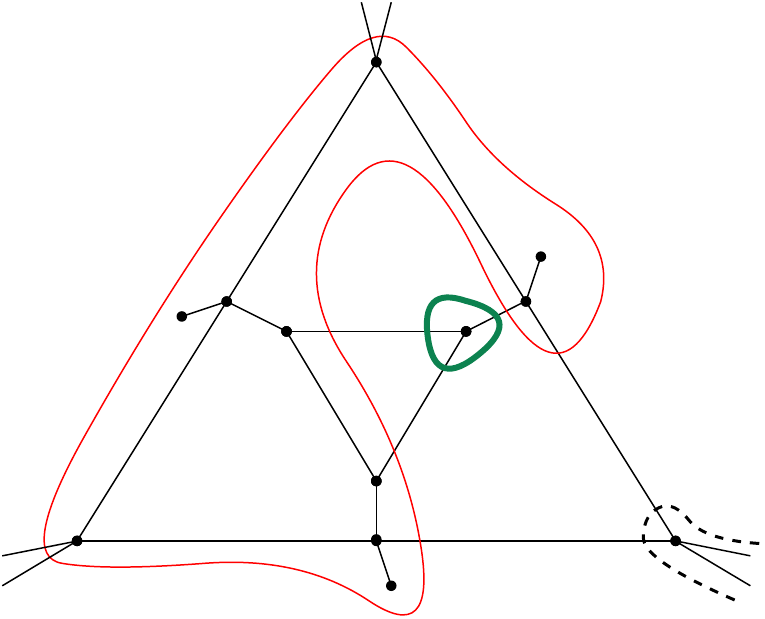}
		\label{fig:pt4}
	\end{subfigure}%
	\begin{subfigure}{.3\textwidth}
		\centering
		\includegraphics[width=.7\linewidth]{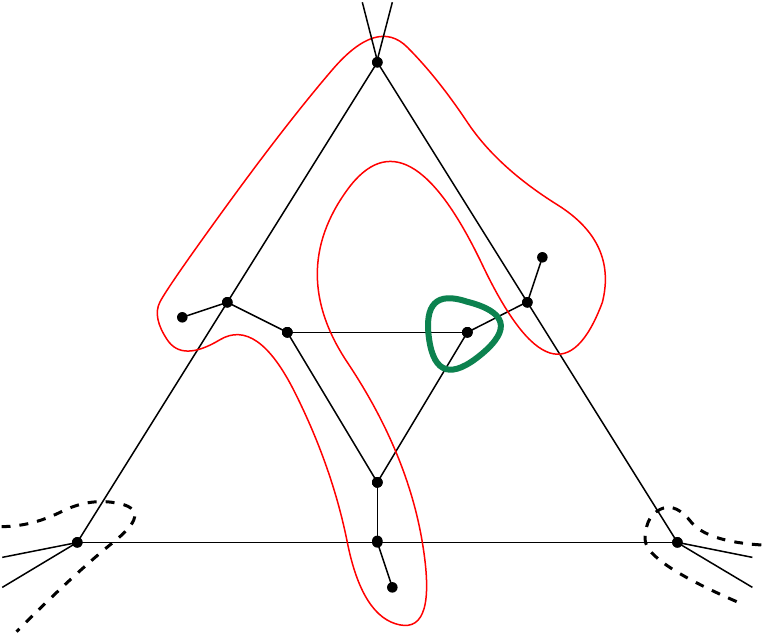}
		\label{fig:pt5}
	\end{subfigure}

	\caption{Transitioning from owning only the bottom left terminal to owning only the top terminal}
	\label{fig:transition}
\end{figure}

We have seen that the modified OR gadget is satisfied in a static state by the same conditions as the original OR gadget; it remains to show that there is a valid transition between any two states corresponding to valid orientations of edges in $G_{NCL}$ that differ by a flip. 

As noted above, a guard district always contains all of its initial three anchors, and every terminal in its OR gadget is either in the guard district or adjacent to an anchor in the guard district. Therefore, a sequence of 1-switch operations can always transition to a state where the guards district  contains two or more terminals (two or more incoming blue edges). If the guard gadget contains two or three terminals, the prisoner district can move freely within the central triangle. 
It follows that, from a state where the guard district contains two terminals (two incoming blue edges), the modified OR gadget can transition to a state where it contains only one of the original two (a single incoming blue edge).
%
%
As an illustration, Figure~\ref{fig:transition} shows how the gadget can transition between states where the guard district contains any one single terminal (a single blue edge directed inwards) through intermediate states where it contains two (two blue edges oriented inwards). Note that there are two different district maps that correspond to the same orientation when there is a single blue edge is oriented towards an OR vertex. 

Conversely, since the three terminals of a modified OR gadget are incident to distinct other gadgets (as $G_{NCL}$ is a simple graph), a single switch can only add or remove one terminal to a guard district, hence only states that represent orientations differing by a single flip are adjacent in $\Gamma_k(G)$. 
%
\end{proof}

\begin{theorem} \label{thm:planar pspace}
Given a planar graph $G$ of maximum degree 6, and two $k$-district maps $\Pi_A$ and $\Pi_B$ over $G$, it is PSPACE-complete to determine if $\Pi_A$ and $\Pi_B$ are in the same connected component of $\Gamma_k(G)$.
\end{theorem}

\subsection{PSPACE-Hardness for Two-District Maps}
\label{ssec:PSPACE-two}
In order to prove hardness in presence of only two districts, we modify the reduction described in Section~\ref{ssec:PSPACE-general} as follows. We start by subdividing every edge in the NCL graph $G_{NCL}$ and creating degree-two vertices which are satisfied so long as they have in-degree at least one. The addition of these vertices has no effect on the reconfiguration space; these extra vertices simply propagate signals from one vertex to another. 
We can then assume that $G_{NCL}$ is bipartite: one partite set containing only degree-2 vertices, and the other partite set containing the original AND and OR vertices.

To build a gadget that simulates degree-two NCL vertices, simply take the original nonplanar OR gadget and delete one of its terminals. The resulting gadget has two terminals, both of which provide independent paths between the two anchor/leaf pairs in the gadget, so the district corresponding to this gadget will be connected if and only if it owns at least one of its terminals. An example of this gadget appears in the center of Figure~\ref{fig:two districts}.

\begin{figure}[htbp]
	\centering
	\begin{subfigure}{.4\textwidth}
		\centering
		\includegraphics[width=.8\linewidth]{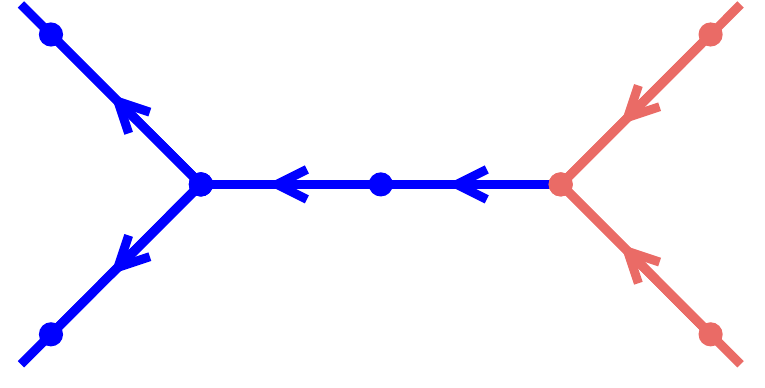}
		\label{fig:subdiv}
	\end{subfigure}
	\begin{subfigure}{.5\textwidth}
		\centering
		\includegraphics[width=.8\linewidth]{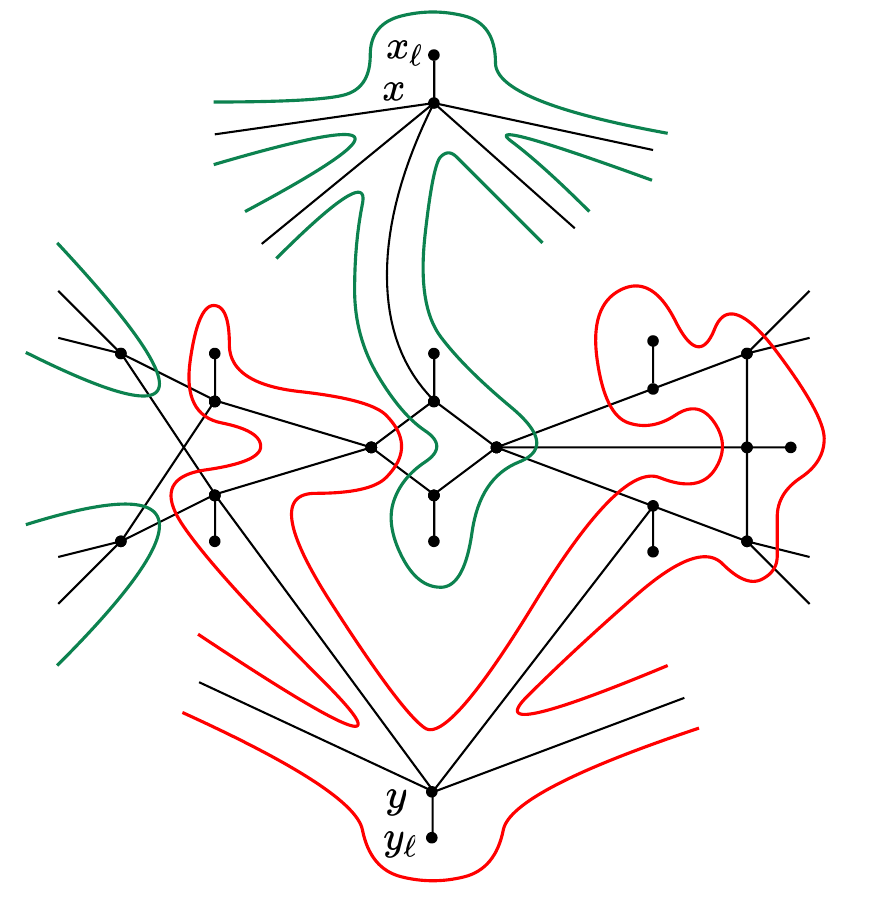}
		\label{fig:two district reduction}
	\end{subfigure}
	\caption{Transforming orientation on subdivided $G_{NCL}$ to 2-district map over $G$.
	}
	\label{fig:two districts}
\end{figure}

Now, construct $G$ from this subdivided version of $G_{NCL}$ similar to Section~\ref{ssec:PSPACE-general}  (using the original nonplanar OR gadget). 
The subdivided version of $G_{NCL}$ has three types of vertices: OR, AND, and subdivision vertices. Each vertex is replaced by a gadget for the corresponding type.
Next, create a vertex $x$ with one leaf $x_\ell$ attached to it, and an edge connecting $x$ to one anchor in every degree-two vertex gadget (either anchor is fine). Then create a vertex $y$ with one leaf $y_\ell$ attached to it, and add an edge connecting $y$ to one anchor in every OR and AND gadget (again any anchor is fine).

Finally, given an orientation on $G_{NCL}$, we start by building a district map on $G$ in the same way as before, but after we have built this map, we merge all the districts on degree-two gadgets into a single district also containing $x$ and $x_\ell$, and then merge all the districts on AND and OR vertices into another single district also containing $y$ and $y_\ell$. This construction is shown in Figure~\ref{fig:two districts}.

\begin{theorem} \label{thm:two district pspace}
Given a graph $G$ and two 2-district maps $\Pi_A$ and $\Pi_B$ over $G$, it is PSPACE-complete to determine if $\Pi_A$ and $\Pi_B$ are in the same connected component of $\Gamma_2(G)$.
\end{theorem}
\begin{proof}
Since the leaves $x_\ell$ and $y_\ell$, and the leaves in every gadget must remain in their respective districts, and $x$ and $y$ are only adjacent to one anchor in each gadget, the connectivity requirements within each gadget remain, and thus every argument in Lemma~\ref{lem:connectedness satisfies} still applies. 
The two partite sets of $G_{NCL}$ form the basis of the two districts in our map. 
Thus, every adjacency between two gadgets in $G$ is between gadgets in different districts, and conversely adjacency between the two districts is always between two neighboring gadgets. So all of the arguments in Lemma~\ref{lem:switch to flip} still apply, yielding the stated result.
\end{proof}

\subsection{Exponential Diameter}

The following theorem is implicit in the reduction from QSAT to NCL in~\cite{PSPACE}.
\begin{theorem} \label{thm:ncl diameter}
For every $n \in \mathbb{N}$ there exist a planar NCL graph $G_{NCL}$ on $n$ vertices and initial and final orientations $A, B$ on $G_{NCL}$ such that $2^{\Theta(n)}$ edge flips are necessary and sufficient to reconfigure $A$ into $B$.
\end{theorem}
\begin{proof}
First, take the construction in Figure~4 of~\cite{PSPACE} showing an NCL graph which simulates a quantified formula evaluator. Now modify all $n$ quantifier blocks in this construction to be universally quantified. Next, build any tautological boolean formula on $n$ variables which can be expressed using a linear number of NCL gates; in particular the disjunction $x_1 \lor \overline{x_1} \lor \ldots \lor x_n \lor \overline{x_n}$ suffices. The resulting NCL graph thus has a total number of vertices and edges linear in the number of quantifiers. By Lemma 3 of the same paper we see that the $i^{th}$ universal quantifier cannot have its satisfied-out edge flipped until the remainder of the quantified formula is evaluated under both variable assignments of $x_i$, so if a reconfiguration exists it requires at least $2^{\Omega(n)}$ edge flips; and in this case a reconfiguration is possible with $2^{O(n)}$ edge flips since all quantifiers are existential and the formula is a tautology. Since the original QSAT reduction contains some edge crossings, one might worry that in deploying crossover gadgets that ensure planarity we may see a quadratic blow-up in the size of the graph, weakening our bound to $2^{\Omega(\sqrt{n})}$ in the planar case. However, by inspection we see that each universal quantifier gadget contains only three edge crossings, and the formula $x_1 \lor \overline{x_1} \lor \ldots \lor x_n \lor \overline{x_n}$ can be constructed as a perfect binary tree with no crossings, so our lower bound holds in the planar case, as well.
\end{proof}

\begin{corollary} \label{corollary:exp switch diameter}
The diameter of a connected component of $\Gamma_k(G)$ can be as large as $2^{\Omega(n)}$ where $n = |V(G)|$, even if $G$ is a planar graph of maximum degree 6, or if $k=2$.
\end{corollary}
\begin{proof}
Since our reduction creates a redistricting instance over a graph linear in the size of the original NCL graph, this result follows from the reductions in Sections~\ref{ssec:PSPACE-general}--\ref{ssec:PSPACE-two} and Theorem~\ref{thm:ncl diameter}.
\end{proof}

\ShoLong{}{
\section{Hardness for Shortest Paths}
\label{ssec:hardnes-shortest}
In the previous section we showed that it is PSPACE-complete to decide whether two district maps on $G$ are in the same component of $\Gamma_k(G)$. Hardness here crucially relied on the fact that $\Gamma_k(G)$ can have many connected components, each with potentially exponential diameter. In this section, we show that even if we constrain $G$ to be biconnected, and thus $\Gamma_k(G)$ to be connected with polynomially bounded diameter (cf. Theorem~\ref{thm:2conn-alg}), the problem of finding a shortest path from one district map to another in $\Gamma_k(G)$ is NP-hard.
We start by showing NP-hardness for arbitrary graphs (Lemma~\ref{thm:shortest}) and then strengthen it to biconnected graphs (Theorem~\ref{thm:biconnected-hardness}).
The decision problem can be formally stated as follows: we are given a graph $G$, two $k$-district maps, $\Pi_A$ and $\Pi_B$, and an integer $L\geq 0$, and ask whether a sequence of at most $L$ switches can take $\Pi_A$ to $\Pi_B$. Let us denote this problem by $R(G,\Pi_A,\Pi_B,L)$.

We present a polynomial-time reduction from 3SAT. An instance of 3SAT consists of a boolean formula $\phi$ in 3CNF. Let $m$ and $n$ be the number of clauses and the number of variables, respectively, in $\phi$. We construct, for a given 3SAT instance $\phi$, a graph $G(\phi)$, two district maps $\Pi_A(\phi)$ and $\Pi_B(\phi)$, and a nonnegative integer $L(\phi)$. We then show that $\phi$ is satisfiable if and only if the instance $R(G(\phi),\Pi_A(\phi),\Pi_B(\phi),L(\phi))$ of the redistricting problem is positive.


\begin{figure}[!htb]
\begin{center}
\includegraphics[width=0.5\textwidth]{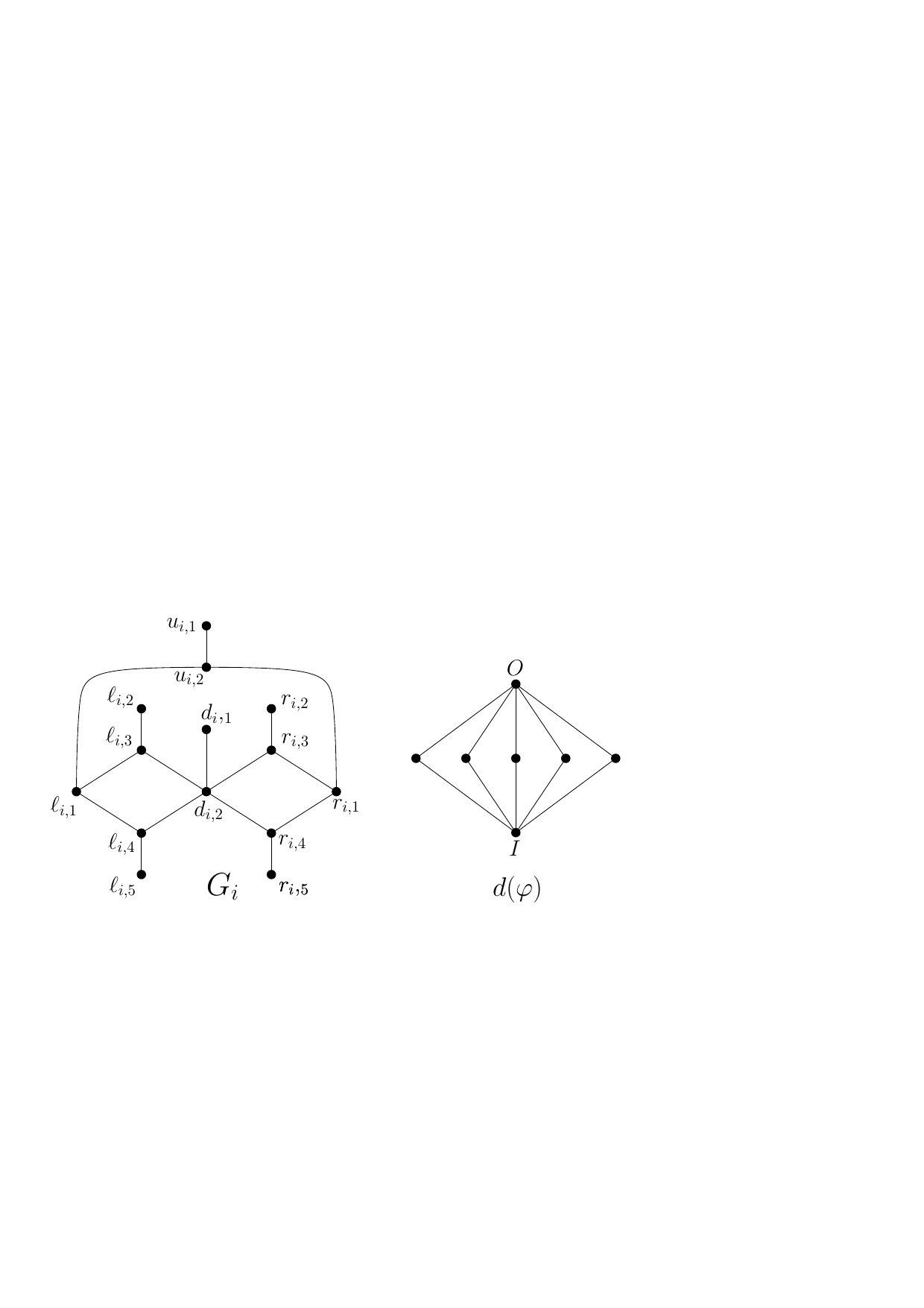}
\caption{A variable gadget (left) and a districting pipe (right)}
\label{fig:ShortestPathGadgets}
\end{center}
\end{figure}

We construct the graph $G(\phi)$ as follows:
\begin{enumerate}\itemsep 0pt
\item For every variable $x_i$, construct a \emph{variable gadget} $G_i$, shown in Figure~\ref{fig:ShortestPathGadgets}~(left).
\item For every clause $c_j$, a \emph{clause gadget} $H_j$ consists of two adjacent vertices, $c_{j,1}$ and $c_{j,2}$. See Figure~\ref{fig:graphofphi} (top).
\item For every variable $x_i$ appearing in a clause $c_j$, if $x_i$ is nonnegated (negated) in $c_j$, insert an edge between $c_{j,2}$ and $\ell_{i,1}$ ($r_{i,1}$).
\item Next, we add a subgraph, called a \textbf{districting pipe} $d(\phi)$, that consists of $m+n+1$ vertices. The districting pipe is a complete bipartite graph between a 2-element partite set $\{O,I\}$ and a $(m+n-1)$-element partite set.
    Figure~\ref{fig:ShortestPathGadgets}~(right) depicts and example where $m+n-1=5$.
\item Lastly, for each variable gadget $G_i$, insert the edges $O\ell_{i,1}$ and $Or_{i,1}$.
\end{enumerate}

\begin{figure}[htbp]
\begin{center}
\includegraphics[width=0.9\textwidth, angle=0]{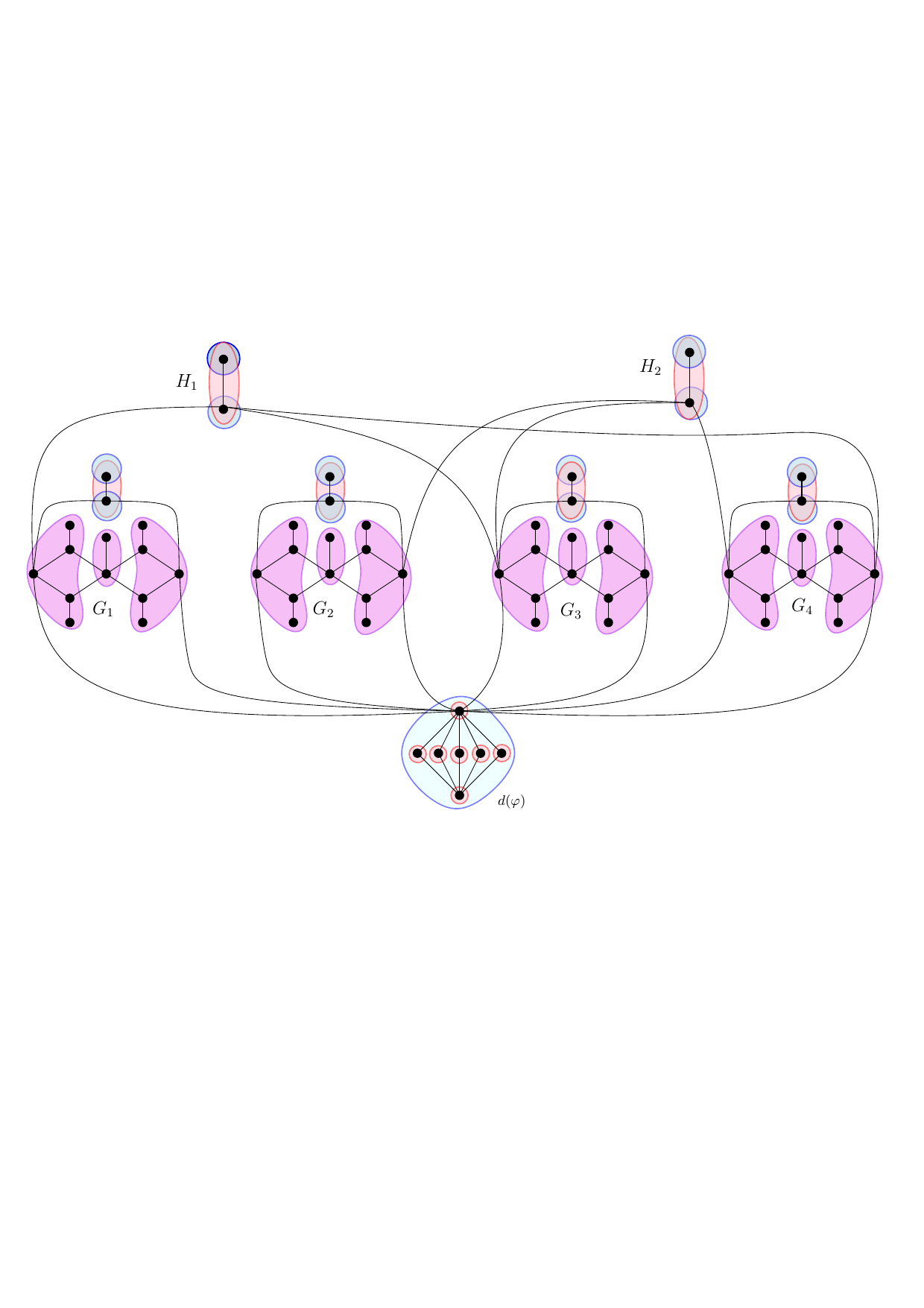}
\caption{The graph $G(\phi)$ for the formula $\phi=(x_1\lor x_3 \lor \neg x_4) \land (\neg x_2 \lor x_3 \lor x_4)$.
The red and purple regions represent the districts in $\Pi_A(\phi)$. The blue and purple regions represent the districts in $\Pi_B(\phi)$. In particular, the purple regions represent districts that are present in both $\Pi_A(\phi)$ and $\Pi_B(\phi)$.}
\label{fig:graphofphi}
\end{center}
\end{figure}

We now define two $(2m+5n+1)$-district maps on $G(\phi)$. Refer to Figure~\ref{fig:graphofphi}.
First, let $\Pi_A(\phi)$ consist of the following districts. For each variable gadget $G_i$, we create four districts:
$\ell_i=\{\ell_{i,j}: j=1,\ldots ,5\}$, $r_i=\{r_{i,j}: j=1,\ldots , 5\}$, $\{d_{i,1},d_{i,2}\}$, and $\{u_{i,1},u_{i,2}\}$.
For each clause gadget $H_j$, we create a 2-element district $\{c_{j,1},c_{j,2}\}$. In the districting pipe, every vertex is in a singleton district, which yields $m+n+1$ singletons.
Next, we define the target district map, $\Pi_B(\phi)$. For every variable gadget $G_i$, we create similar districts to $\Pi_A(\phi)$, the only difference is that the district $\{u_{i,1},u_{i,2}\}$ is now split into two singletons: $\{u_{i,1}\}$ and $\{u_{i,2}\}$.
In each clause gadget $H_j$, the two vertices form singleton districts.
Lastly, the district pipe now consists of one $(m+n+1)$-vertex district.
Finally, we set $L(\phi):=4m+7n-1$. This completes the description of the instance $R(G(\phi),\Pi_A(\phi),\Pi_B(\phi),L(\phi))$.

\begin{lemma}\label{lem:==>}
If there exists a satisfying truth assignment for $\phi$, then $R(G(\phi),\Pi_A(\phi),\Pi_B(\phi),L(\phi))$ is a positive instance.
\end{lemma}
\begin{proof}
Let $\tau$ be a satisfying truth assignment for $\phi$.
We show that $\Pi_A(\phi)$ can be transformed into $\Pi_B(\phi)$ using $L(\phi)$ switches.
We define an open gate of $G_i$ to be either $\ell_{i,1}$ if district $\ell_i$ contains the vertex $d_{i,2}$, or $r_{i, 1}$ if district $r_i$ contains $d_{i,2}$. This implies that the open gate is a vertex in the variable gadget that can be taken by external districts. 
Note that $\ell_{i}$ (resp., $r_{i}$) cannot leave vertices $\ell_{i,2}$, $\ell_{i,3}$, $\ell_{i,4}$, and $\ell_{i,5}$ (resp., $r_{i,2}$, $r_{i,3}$, $r_{i,4}$, and $r_{i,5}$) by Lemma~\ref{lem:leaves}.
So then both $\ell_{i,1}$ and $r_{i,1}$ cannot be taken by external districts, or else $\ell_{i}$ or $r_{i}$ would be disconnected.

\begin{enumerate}
\item For each variable $x_i$, if $\tau(x_i)=\texttt{true}$ ($\texttt{false}$), then expand $\ell_{i}$ ($r_i$) into $d_{i,2}$ (thereby opening one of the two gates for each variable) in a total of $n$ switches.

\item For every variable $x_i$, we transform the district containing $O$ into a district containing only $u_{i,2}$ and the open gate of $G_i$ by expanding twice (for $\tau(x_i)=\texttt{true}$, $\sw(\ell_{i,3},\ell_{i,1},O)$ and $\sw(u_{i,1},u_{i,2},\ell_{i,1})$) and shrinking it out of $O$ using either one, for $x_1$, or two switches, for remaining variables (for $\tau(x_i)=\texttt{true}$, $\sw(O,p_{i},I)$ and $\sw(\ell_{i,1},O, p_{i+1})$ where $p_i$ and $p_{i+1}$ are vertices in the districting pipe). 
We can perform this step in $3n+(n-1)$ switches by, for the previously mentioned shrinks, expanding a singleton containing a degree-2 vertex of $d(\phi)$ whenever possible, or expanding the district containing $I$.

\item For each clause $c_j$, choose any variable $x_i$ that appears in a true literal in $c_j$ (guaranteed to exist by the definition of $\tau$). Transform the district containing $O$ into a district containing only $c_{j,2}$ and the open gate of $x_i$, similar to the last step. Using the same strategy, this step can be accomplished with $4m$ switches.
After this step $m+n$ districts were moved out of $d(\phi)$, and now a single district contains all its vertices (the district initially containing $I$).

\item Finally, for every variable $x_i$ we close its gate by first expanding either $\ell_i$ or $r_i$ into the open gate and then expanding the singleton district at $d_{i,1}$ into $d_{i,2}$. This takes $2n$ switches.
\end{enumerate}

Overall, we have performed $n+3n+(n-1)+4m+2n=4m+7n-1=L(\phi)$ switches. These $L(\phi)$ switches transformed $\Pi_A(\phi)$ to $\Pi_B(\phi)$, and so the instance $R(G(\phi),\Pi_A(\phi),\Pi_B(\phi),L(\phi))$ is positive.
\end{proof}

\begin{lemma}\label{lem:<==}
If $R(G(\phi),\Pi_A(\phi),\Pi_B(\phi),L(\phi))$ is a positive instance of the redistricting problem for a boolean formula $\phi$, then there exists a satisfying truth assignment for $\phi$.
\end{lemma}
\begin{proof}
We derive lower bounds on the number of switches in any sequence of switches from $\Pi_A(\phi)$ to $\Pi_B(\phi)$ by making inferences from the initial and target district maps. Notice that if a district contains a leaf, then the leaf remains in the same district by Lemma~\ref{lem:leaves}. We call a district \emph{mobile} if it does not contain any leaf in $\Pi_A(\phi)$. By construction, only the $m+n+1$ districts initially in the districting pipe are mobile.

\begin{itemize}
\item[(A)] Since $u_{i,1}$ and $u_{i,2}$ are in distinct districts in $\Pi_B(\phi)$, we must have a mobile district that travels to $u_{i,2}$. In order to accomplish this, we must first open one of the two gates of the variable gadget $G_i$. Opening $n$ gates, one in each variable gadget, requires at least $n$ switches.

\item[(B)] As noted above, a mobile district must travel to $u_{i,2}$ for $i=1,\ldots , n$. Moving $n$ mobile districts from $O$ to $u_{i,2}$, $i=1,\ldots , n$, requires at least $2n$ switches, and an additional $2(n-1)$ switches for $n-1$ mobile districts to reach $O$: one switch expanding the district containing $I$ and one expanding the desired mobile district to $O$. Overall, this requires at least $2n+2(n-1)$ switches.

\item[(C)] Since each clause gadget $H_j$ consists of two districts in $\Pi_B(\phi)$, a mobile district from the districting pipe must travel to $c_{j,2}$, for $j=1,\ldots, m$, which requires $4m$ switches.

\item[(D)] Because one mobile district must expand to the entire district pipe $d(\phi)$, either one mobile district expands into $I$, or the district that contains $I$ expands into $O$. In either case, this takes one additional move that has not been counted so far.

\item[(E)] Note that the gate of $G_i$ is closed and $\{d_{i,1},d_{i,2}\}$ is a 2-vertex district in $\Pi_B(\phi)$, for $i=1,\ldots n$. So the district of the open gate must expand to consume its gate, and the singleton district at the leaf $d_{i,1}$ must expand into $d_{i,2}$. Together this requires a total of $2n$ switches.
\end{itemize}

Therefore, we need at least $n+2n+2(n-1)+4m+1+2n=7n+4m-1=L(\phi)$ switches to solve the redistricting problem. Since we executed exactly $L(\phi)$ switches, no other move is allowed.

Due to the fact that after opening a gate, opening the opposite gate would require additional switches, we conclude that precisely one gate opens in each variable gadget. We construct a truth assignment as follows: For every $i=1,\ldots, n$, let $\tau(x_i)=\texttt{true}$ if the left gate of the variable gadget $G_i$ opens, and $\tau(x_i)=\texttt{false}$ otherwise. Since the only way to get a district to $c_{j,2}$ was through
an open gate of one of the three literal in the clause $c_j$, then every clause is incident to an open gate of a variable gadget. Since every open gate corresponds to a true literal, at least one of the three literals is true in each clause. Henceforth, $\tau$ is a satisfying truth assignment for $\phi$.
\end{proof}

The following Lemma is the direct consequence of Lemmas~\ref{lem:==>} and \ref{lem:<==}.

\begin{lemma}\label{thm:shortest}
Given an graph $G$ with $n$ vertices and an integer $1 \leq k\leq n$, it is NP-complete to decide whether the length of a shortest path in $\Gamma_k(G)$ between two given district maps is below a given threshold.
\end{lemma}

The unshrinkable districts in the previous reduction are not essential for NP-hardness.
We can modify the reduction to produce a biconnected graph $G$ as follows. 

\begin{theorem}
\label{thm:biconnected-hardness}
It is NP-hard to decide whether the length of a shortest path in $\Gamma_k(G)$ between two given shrinkable district maps is below a given threshold even when $G$ is biconnected.
\end{theorem}
\begin{proof}
In the reduction above, for a boolean formula $\phi$ in 3CNF, we constructed an instance $R(G(\phi),\Pi_A(\phi),\Pi_B(\phi),L(\phi))$ of the redistricting problem. 
We modify the reduction by subdividing the edge connected to every leaf creating a path of length $L(\phi)$.
Now connect every leaf to the vertex $I$ in $d(\phi)$.
The resulting graph is biconnected because the only cut vertices produced in the previous reductions were either $O$ in $d(\phi)$ or adjacent to leaves. 
The modification guarantees that they are no longer cut vertices.
We make the following modifications to the district maps $\Pi_A(\phi)$ and $\Pi_B(\phi)$: If both endpoints of a subdivided edge belonged to the same district, add the new vertex created by the subdivision to this district; Extend the singleton districts that previously contained a leaf in $B(\phi)$ to contain all new vertices on its path. 
The districts that contain long paths (of length $L(\phi)$) cannot leave such paths completely since we are only allowed $L(\phi)$ switches.
Then, the only way to obtain $\Pi_B(\phi)$ from $\Pi_A(\phi)$ is to move the singleton districts in $d(\phi)$ through the variable gadgets.
The rest of the proof is analogous to the proof of Lemmas~\ref{lem:==>} and \ref{lem:<==}.
\end{proof}

}

\section{Conclusion}
\label{sec:con}

This paper provides the theoretical foundation for using elementary switch operations to explore the configuration space $\Gamma_k(G)$ of all partitions of a given graph into $k$ nonempty subgraphs, each of which is connected. We gave a polynomial-time testable combinatorial characterization for connected configurations spaces (Theorem~\ref{thm:conn-test}). 

Our PSPACE-hardness proof with few (two) districts produces a nonplanar graph. The complexity of deciding whether two $k$-district maps (with $k=O(1)$) on a planar graph $G$ are in the same component of $\Gamma_k(G)$ remains open. 
Our NP-hardness reduction for the shortest path problem produces a nonplanar (biconnected) graph.
It is an open problem to determine the computational complexity of computing shortest paths in $\Gamma_k(G)$ when $G$ is biconnected and planar, or in $\Gamma_k'(G)$ when $G$ is planar.

A crucial concept in both the combinatorial characterization and the reconfiguration algorithms (Algorithms~\ref{algo:2conn} and~\ref{algo:1conn}) was \emph{\contractibility}: A district is \contractible\ if it can be reduced to a single vertex (while all $k$ districts remain connected).
In applications to electoral maps, all districts have roughly average size, say between $\frac{n}{2k}$ and $\frac{2n}{k}$, and a singleton district is impractical. 
In a sense, we establish that there is a path between any two \contractible\ district maps with average-size districts by passing through ``impractical'' district maps with singleton districts. We do not know whether singleton districts are necessary: for a constant $c\geq 1$, we can define $\Gamma_{k,c}(G)$ as the graph of $k$-district maps in which the size of every district lies in the interval $[\frac{n}{ck},\frac{cn}{k}]$. It is easy to construct examples where $\Gamma_{k,1}(G)$ has isolated vertices. Is there a constant $c>1$ such that the connectedness of $\Gamma_k(G)$ implies that $\Gamma_{k,c}(G)$ is also connected?


In our model, a district map is a partition of the vertex set into $k$ \emph{unlabeled} nonempty subsets. One could consider the \emph{labeled} variant, and define a switch graph $\Gamma^L_k(G)$ on labeled $k$-district maps. Our results do not carry over to this variant: in particular, the labeled switch graph $\Gamma_k^L(G)$ need not be connected if $G$ is biconnected. For example, if $G=C_n$ (i.e., a cycle of $n\geq 3$ vertices) and $k\geq 3$, then the cyclic order of the districts along the cycle cannot change. In the special case that $k=2$ and $G$ is biconnected, $\Gamma_2^L(G)$ is connected since we can \contract\ a district to a singleton (cf.~Lemma~\ref{lem:con2}) and move it to any vertex while the complement remains connected. 
When we move a singleton district from one vertex to another, it temporarily occupies both vertices, which should not form a 2-cut.
\Contract ing a district to a singleton is sometimes necessary in this case (one such example is $G=K_{2,m}$, $m\geq 3$, where the 2-element partite set is split between the two districts).

\bibliographystyle{plainurl}
\bibliography{redistricting}

\ShoLong{\newpage
\appendix

\section{Block Trees and SPQR Trees}
\preliminarysec

\section{Omitted proofs}

\subsection{Proof of Lemma \ref{lem:con2}}
\label{sec:proofLemConTwo}
\proofLemConTwo

\subsection{Proofs from Section \ref{sec:alg}}

\pfclaimAAA
\pfclaimBBB
\pfclaimCCC

\subsection{Proof of Lemma \ref{lem:conn-char}}
\pflemconnchar

\section{Lower Bounds for Shortest Paths}
\label{sec:LB}

In this section, we prove lower bounds for the diameter of the switch graph $\Gamma_k(G)$, and for the length of a shortest path in $\Gamma_k(G)$ (when $\Gamma_k(G)$ need not be connected). We start with a simple construction that yields an $\Omega(kn)$ lower bound for the diameter of $\Gamma_k'(G)$ when $k\leq n/2$, which matches the upper bound of Theorem~\ref{thm:general-graphs}.

\begin{theorem}\label{thm:contractible-LB}
For every $k,n\in \mathbb{N}$, $k\leq n$, there exists a graph $G$ with $n$ vertices
such that the diameter of  $\Gamma_k'(G)$ is $\Omega(k(n-k))$.
\end{theorem}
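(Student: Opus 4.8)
The plan is to take $G$ to be the path $P_n$ on vertices $v_1,\dots,v_n$ and to exhibit two contractible $k$-district maps whose distance in $\Gamma_k'(P_n)$ is $\Omega(k(n-k))$. We may assume $k\geq 2$, since for $k=1$ the graph $\Gamma_k'(G)$ is a single node and the claimed bound is vacuous. The reason the path is a good choice is that its switch dynamics are rigid: a single switch can only nudge a boundary by one step, so large ``rearrangements'' of the districts are expensive.

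First I would record the structural fact that on a path every connected subgraph is a contiguous interval; hence every $k$-district map of $P_n$ consists of $k$ consecutive intervals $I_1,\dots,I_k$ (from left to right) and is completely determined by the increasing sequence of boundaries $b_1<b_2<\dots<b_{k-1}$, where $b_j$ is the index of the rightmost vertex of $I_j$ (so $b_j\in\{j,\dots,n-(k-j)\}$). For $k\geq 2$ the two leaves $v_1,v_n$ of $P_n$ lie in distinct intervals, so every district contains at most one leaf of the block tree; by Lemma~\ref{lem:con2} each such map is contractible, and therefore $\Gamma_k'(P_n)=\Gamma_k(P_n)$.

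Next I would analyze the effect of a single switch. Because every district is an interval, any valid $\sw_\Pi(u,v,w)$ must have $v$ at an endpoint of its interval and $w$ in the neighbouring interval; consequently a switch moves exactly one boundary $b_j$ by exactly one unit and leaves all other boundaries fixed. This one-boundary-per-switch claim is the crux of the argument. With it in hand, I set up the potential function. Let $\Pi_0$ be the map with boundaries $b_j=j$, i.e.\ districts $\{v_1\},\dots,\{v_{k-1}\},\{v_k,\dots,v_n\}$, and let $\Pi_1$ be the map with boundaries $b_j^\ast=n-k+j$, i.e.\ districts $\{v_1,\dots,v_{n-k+1}\},\{v_{n-k+2}\},\dots,\{v_n\}$; both are valid contractible $k$-district maps. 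Define
\[
\Phi(\Pi)=\sum_{j=1}^{k-1}\bigl|\,b_j(\Pi)-b_j^\ast\,\bigr|.
\]
Since a single switch changes each $b_j$ by at most one and changes only one of them, each switch alters $\Phi$ by at most $1$.

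To conclude, I would compute $\Phi(\Pi_0)=\sum_{j=1}^{k-1}\bigl|j-(n-k+j)\bigr|=(k-1)(n-k)$ and $\Phi(\Pi_1)=0$. Hence any sequence of switches transforming $\Pi_0$ into $\Pi_1$ has length at least $(k-1)(n-k)\geq \tfrac12\,k(n-k)$, so the diameter of $\Gamma_k'(P_n)$ is $\Omega(k(n-k))$, as required. The main obstacle is the rigidity claim of the third paragraph: one must verify carefully that the interval structure of districts on a path forbids a switch from affecting two boundaries at once or from moving a boundary by more than a single step, and that the degenerate endpoints (empty intervals, $k$ near $n$) do not permit shortcuts. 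Everything else is a routine potential-difference calculation.
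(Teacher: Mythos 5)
Your proposal is correct and takes essentially the same approach as the paper: the paper's proof also uses the path on $n$ vertices with the same pair of extremal maps (all singletons on the left versus all singletons on the right), and its key observation that each switch moves the rightmost vertex of at most one district by at most one unit is exactly your one-boundary-per-switch claim, yielding the same $(k-1)(n-k)$ displacement bound. Your version merely formalizes this with an explicit potential function and adds the (correct, and implicit in the paper) check via Lemma~\ref{lem:con2} that all $k$-district maps of the path are contractible for $k\geq 2$, so that the bound indeed applies to $\Gamma_k'(G)$.
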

\begin{proof}
Let $G$ be a path $(v_1,\ldots, v_n)$ with $n$ vertices. Let $\Pi_1$ consist of $V_i=\{v_i\}$ for $i=1,\ldots , k-1$, and $V_k=\{v_k,\ldots, v_n\}$; and let $\Pi_2$ be the partition $W_1=\{v_1,\ldots v_{n-k}\}$ and $W_j=\{v_{n-k+j}\}$, for $j=2,\ldots, k$.
Assume that a sequence of switch operations takes $\Pi_1$ to $\Pi_2$. Since each district is nonempty at all times,
district $V_i$ is transformed into $W_i$ for all $i\in \{1,\ldots, n\}$. Each switch operation can move the rightmost vertex of at most one district, and by at most one unit. The rightmost vertex of $V_k$ remains fixed. The distance traveled by the rightmost vertices of $V_1,\ldots, V_k$ each is $n-k$. This requires at least $(k-1)(n-k)$ operations.
\end{proof}

If we connect the two endpoints of the path we get a cycle (and in particular a biconnected graph). The lower bound of Theorem~\ref{thm:contractible-LB} can be adapted to this case, which in particular implies that the diameter is not reduced even when $G$ is biconnected.

\begin{theorem}\label{thm:contractible-LB+}
For every $k,n\in \mathbb{N}$, $k\leq n$, there exists a biconnected graph $G$ with $n$ vertices
such that the diameter of  $\Gamma_k(G)'$ is $\Omega(k(n-k))$.
\end{theorem}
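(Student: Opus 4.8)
The plan is to take $G$ to be the cycle $C_n=(v_1,\ldots,v_n)$, which is biconnected, so that $\Gamma_k(C_n)=\Gamma_k'(C_n)$ is connected by Theorem~\ref{thm:2conn-alg}. I would first reinterpret the configuration space combinatorially: for $2\le k\le n$ a $k$-district map of $C_n$ partitions the cycle into $k$ arcs and is therefore determined by its $k$ boundary edges. Equivalently, a configuration is a placement of $k$ indistinguishable \emph{tokens} on the $n$ edges of the cycle, and a single switch slides exactly one token to an adjacent unoccupied edge. Crucially, tokens can never pass one another, so their cyclic order is preserved under any sequence of switches. This is the cyclic analogue of the path model underlying Theorem~\ref{thm:contractible-LB}.

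For the two extremal maps I would pick $\Pi_1,\Pi_2$ whose $k$ boundary edges each form a block of $k$ consecutive edges, with the two blocks placed \emph{antipodally} (about $\lfloor n/2\rfloor$ edges apart). In district terms $\Pi_1$ is $k-1$ singletons together with one large arc bunched around $v_1$, and $\Pi_2$ is the same pattern rotated by $\lfloor n/2\rfloor$. The antipodal placement is essential: the naive copy of the path construction would place the two blocks so that they overlap, and they could then be reached from one another cheaply; placing them antipodally forces each boundary to travel distance of order $n$.

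The lower bound itself I would obtain by lifting to the universal cover. Fix the cyclic order $t_1,\ldots,t_k$ of the tokens; since tokens never cross, each trajectory on $\mathbb{Z}_n$ lifts to a trajectory $\tilde p_i(\tau)\in\mathbb{Z}$ with the lifts strictly increasing in $i$. Each switch changes exactly one $\tilde p_i$ by $\pm1$, so the number of switches is at least $\sum_i|\tilde p_i(T)-\tilde p_i(0)|$. Because the target is a solid block, the final lift has the form $\tilde p_i(T)=\alpha+(i-1)$ for $i\le j$ and $\tilde p_i(T)=\alpha+(i-1)+(n-k)$ for $i>j$, parametrized by the position $j$ of the large gap in the labeling and by an integer winding hidden in $\alpha$; admissibility ties $\alpha$ to $j$ via $\alpha\equiv\lfloor n/2\rfloor+k-j\pmod n$. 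Minimizing $\sum_i|\tilde p_i(T)-\tilde p_i(0)|$ over all $j$ and all windings then yields a bound of $\tfrac12 k(n-k)$ (the minimizer sits near $j=k/2$ with one nonzero winding), so the distance between $\Pi_1$ and $\Pi_2$ is $\Omega\big(k(n-k)\big)$ for $2\le k\le\lfloor n/2\rfloor$. For $k>\lfloor n/2\rfloor$ I would invoke the elementary token/non-token duality: complementing cut and non-cut edges gives a graph isomorphism $\Gamma_k(C_n)\cong\Gamma_{n-k}(C_n)$ for $2\le k\le n-2$, which is symmetric in $k\leftrightarrow n-k$ exactly as $k(n-k)$ is; the degenerate cases $k\in\{1,n-1,n\}$ are handled directly (for $k=1$ or $k=n$ the space is a single map and, matching the convention of Theorem~\ref{thm:contractible-LB}, the bound is understood for $k\ge 2$).

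The main obstacle is precisely the wraparound. On a cycle there is no pinned leaf to anchor the displacement argument as in Theorem~\ref{thm:contractible-LB}, and one must rule out that the extra edge lets the boundaries reach their targets more cheaply by going around. The lift argument resolves this, but its delicate point is the \emph{winding optimization}: a priori a nonzero winding could make some $|\tilde p_i(T)-\tilde p_i(0)|$ vanish, and the computation above is needed to show that over all windings the total displacement never drops below $\tfrac12 k(n-k)$. A useful sanity check is that the same non-crossing constraint forbids order-reversing ("reflection") matchings, which would otherwise look cheap but are infeasible because tokens cannot cross; this is why the cyclic structure cannot be exploited to beat the linear-in-$n$ per-token displacement.
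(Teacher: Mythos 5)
Your proposal is correct and follows essentially the same route as the paper: the paper also takes $G=C_n$ with $\Pi_1$ consisting of $k-1$ singletons plus one long arc and $\Pi_2$ its rotation by $\lfloor n/2\rfloor$, observes that the cyclic order of districts cannot change (your non-crossing tokens), so the correspondence between districts is a rotation by some $r\in\{0,\ldots,k-1\}$, and lower-bounds the number of switches by summing shortest cyclic distances between initial and target positions of the singleton districts, with a short case analysis on $r$ playing the role of your minimization over $j$ and the winding. Your universal-cover lift and the token/complement duality $\Gamma_k(C_n)\cong\Gamma_{n-k}(C_n)$ for $k>\lfloor n/2\rfloor$ are sound but amount to heavier bookkeeping for the same displacement argument, which the paper's direct use of shortest distances along $C_n$ covers for all $k$ at once.
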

\begin{proof}
Let $G=C_n$ be the cycle with $n$ vertices $(v_1,\ldots, v_n)$.
We construct two $k$-district maps, $\Pi_1$ and $\Pi_2$. Let $\Pi_1$ consist of $V_i=\{v_i\}$ for $i=1,\ldots , k-1$, and $V_k=\{v_k,\ldots, v_n\}$. The partition $\Pi_2$ is the copy of $\Pi_1$ rotated by $\lfloor n/2\rfloor$, that is, $W_i=\{v_{i+\lfloor n/2\rfloor}\}$ for $i=1,\ldots , k-1$, and $W_k=\{v_{k+\lfloor n/2\rfloor},\ldots, v_{n+\lfloor n/2\rfloor}\}$, where we use arithmetic modulo $n$ on the indices.

Assume that a sequence of switch operations takes $\Pi_1$ to $\Pi_2$. Note that the cyclic order of the district cannot change,
and so there is an integer $r\in \{0,\ldots , k-1\}$ such that $V_i$ is transformed to $W_{i+r\mod k}$ for all $i\in \{1,\ldots ,k\}$.
For any $r$, at least $k-2$ districts are singletons in both $\Pi_1$ and $\Pi_2$. The sum of the shortest distances
along $C_n$ between the initial and target positions is a lower bound for the number of switches.

If $r\leq \lfloor k/2\rfloor$, then the shortest distance between the initial and target positions is at least $\lfloor n/2\rfloor -r\leq \Omega(n-k)$ for the districts $V_i$, $i=1\ldots, k-1-r$; which sums to $\Omega(k(n-k))$.
If $\lfloor k/2\rfloor <r<k$, then shortest distance is at least $\lfloor n/2\rfloor -(k-r)\leq \Omega(n-k)$ for $V_i$, $i=r,\ldots, k-1$; which also sums to $\Omega(k(n-k))$.
\end{proof}

In the remainder of this section, we establish lower bounds for the diameter of a single component of $\Gamma_k(G)$, when $\Gamma_k(G)$ is disconnected (cf.~Theorem~\ref{thm:conn-test}).

\subsection{Diamonds}
\label{ssec:diamond}

A diamond is a useful construction for hardness reductions for the redistricting problem (cf.~Section~\ref{sec:hardness}). The simplest case of a diamond, with six vertices, is depicted in Figure~\ref{fig:diamond}. A diamond has two leaves: If both leaves are in the same district, then this district is incontractible by Lemma~\ref{lem:con1}, and we can encode the truth value of a variable by one of two possible paths between the leaves.

\begin{figure}[!htb]
\begin{center}
\includegraphics[width=0.35\textwidth]{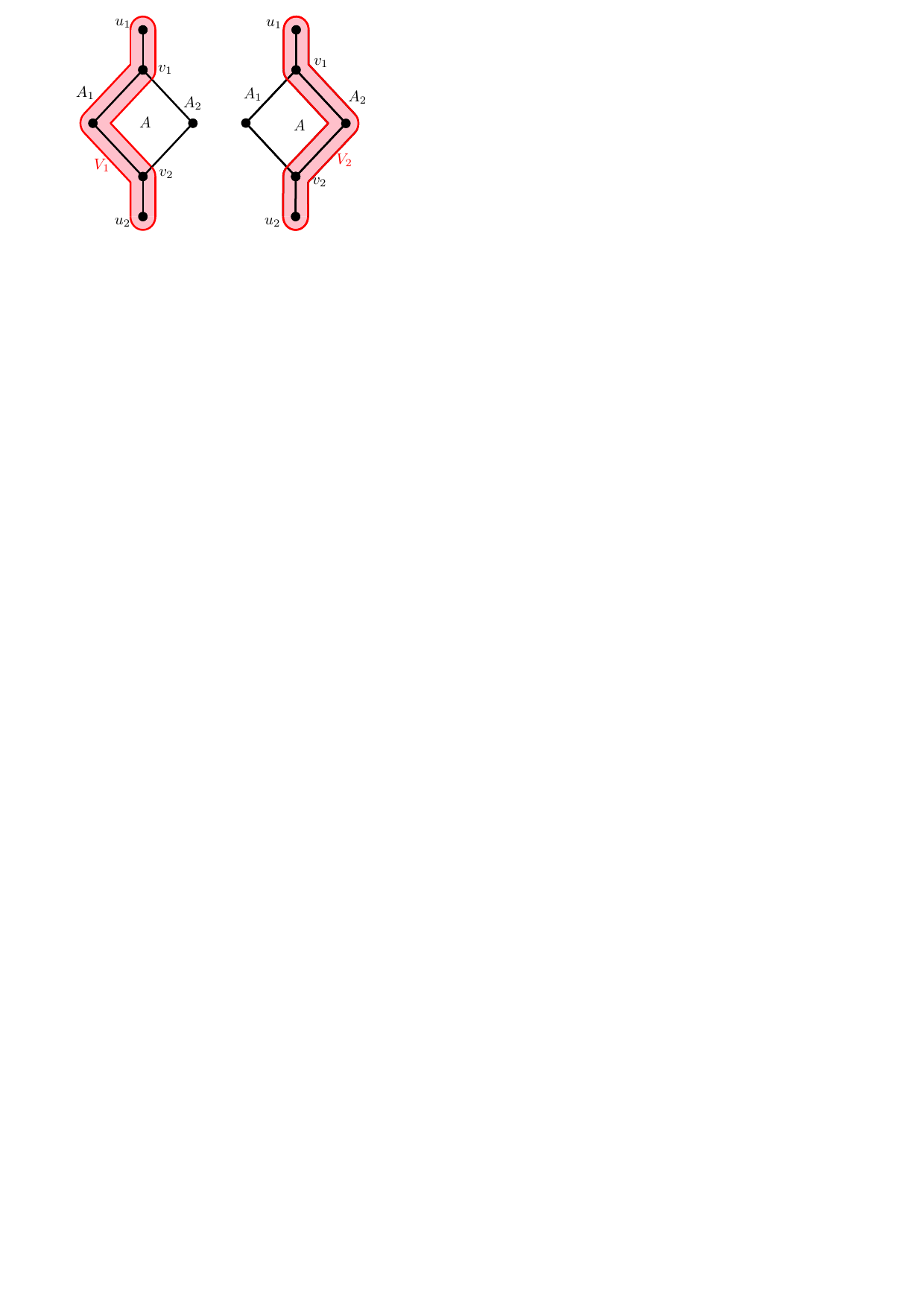}
\caption{A diamond. If a district $V_i$ contains both leaves, $u_1$ and $u_2$, then $V_i$ contains a path between the leaves.}
\label{fig:diamond}
\end{center}
\end{figure}

Formally, a \emph{diamond} $A$ is an induced subgraph of $G$ with two cut vertices $v_1,v_2\in C(G)$ that are connected by with $m\geq 2$ interior-disjoint paths $A_1,\ldots ,A_m$, where $m-1$ of these paths contain exactly one interior node; and $v_1$ (resp., $v_2$) is the  endpoint of a dangling path whose other endpoint is a leaf $u_1$ (resp., $u_2$).
Two crucial properties of diamonds are formulated in the following observation.

\begin{observation}\label{obs:cut-containment}
Let $V_i$ be a district that contains both leaves $u_1$ and $u_2$.
\begin{enumerate}\itemsep -2pt
\item\label{obs1:cut-containment} Then $V_i$ contains both cut vertices, $v_1,v_2\in C(G)$, and all vertices of some path $A_1,\ldots , A_m$.
\item\label{obs2:path-availability} In a sequence of switches, if $V_i$ contains path $A_j$ and later $A_{j'}$, $j\neq j'$, then
    $V_i$ contains two disjoint paths between $v_1$ and $v_2$ in some intermediate state.
\end{enumerate}
\end{observation}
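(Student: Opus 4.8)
The plan is to establish the two parts separately, exploiting the cut-vertex structure of the diamond for the first and the monotonicity of district expansion for the second. For the first part, I would begin by showing $v_1,v_2\in V_i$. The leaf $u_1$ lies at the end of a dangling path whose other endpoint is $v_1$, so $v_1$ is a cut vertex separating $u_1$ from $u_2$; since $G[V_i]$ is connected and contains both $u_1$ and $u_2$, every $u_1$-$u_2$ path in $G[V_i]$ must pass through $v_1$, hence $v_1\in V_i$, and symmetrically $v_2\in V_i$. It then remains to produce a complete path $A_\ell\subseteq V_i$. Because $\{v_1,v_2\}$ separates the interior of the diamond from the rest of $G$, one can find a simple path $P$ from $v_1$ to $v_2$ inside $G[V_i]$ that stays within the diamond. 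The first edge of $P$ enters the interior of some $A_\ell$; and since the paths $A_1,\ldots,A_m$ are pairwise interior-disjoint, each interior vertex of $A_\ell$ has all its neighbors on $A_\ell$, so $P$ is forced to traverse $A_\ell$ in its entirety before it can reach $v_2$. Therefore $A_\ell\subseteq V_i$, which establishes the first part.

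For the second part, the key observation is that $u_1$ and $u_2$ remain in $V_i$ throughout the whole switch sequence, since a leaf can never change districts (Lemma~\ref{lem:leaves}). Consequently the first part applies at every time step, and $V_i$ always contains at least one complete path among $A_1,\ldots,A_m$. Now let $t$ be the first time at which $A_{j'}$ is fully contained in $V_i$. If this happens already in the initial state, then $A_j$ and $A_{j'}$ coexist there and we are done. Otherwise the switch performed at step $t$ inserts the last missing interior vertex of $A_{j'}$ into $V_i$; this is an expansion of $V_i$ and therefore removes no vertex from it. At step $t-1$ the first part guarantees a complete path $A_\ell\subseteq V_i$, and $\ell\neq j'$ because $A_{j'}$ is not yet complete. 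Since the step-$t$ expansion only adds a vertex, $A_\ell$ still lies in $V_i$ at step $t$, while $A_{j'}$ is now also complete; as $A_\ell$ and $A_{j'}$ are interior-disjoint, $V_i$ contains two disjoint paths between $v_1$ and $v_2$ at step $t$, as claimed.

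I expect the main obstacle to be the first part, specifically the step that forces $V_i$ to contain an \emph{entire} path $A_\ell$ rather than merely some $v_1$-$v_2$ connection: this requires combining the fact that the diamond is an induced subgraph, that $\{v_1,v_2\}$ is a separator, and that the $A_\ell$ are interior-disjoint, so that any simple path between $v_1$ and $v_2$ inside $V_i$ cannot ``switch lanes'' between different $A_\ell$. Once the first part is in hand, the second part is a short monotonicity argument built on the fact that completing a path is an expansion of $V_i$, and presents no real difficulty.
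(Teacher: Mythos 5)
The paper states this result as an observation without proof, so the benchmark is whether your argument is sound for the diamonds the paper actually builds; your derivation of $v_1,v_2\in V_i$ is fine, but Part 1 has a genuine gap. Your key premise --- that $\{v_1,v_2\}$ separates the interior of the diamond from the rest of $G$, so that every interior vertex has all its neighbors on its own path $A_\ell$ --- is not part of the paper's definition of a diamond (an induced subgraph may have arbitrary edges to outside vertices), and it is false in the paper's constructions: in a diamond chain the interior vertices $a_i$ carry spiral edges, and in the clause gadgets the interior vertices $u_{j,1}$ and $u_{j,3}$ attach to variable gadgets and to the frame. Worse, even granting the separation premise, the step ``one can find a simple path $P$ from $v_1$ to $v_2$ inside $G[V_i]$ that stays within the diamond'' does not follow: nothing prevents $v_1$ and $v_2$ themselves from having outside neighbors, in which case $G[V_i]$ can connect them entirely outside the diamond. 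This is not hypothetical --- in the connectedness reduction the super nodes are adjacent to the cut vertices of the variable-gadget diamonds, and a district consisting of the two dangling paths together with $\{v_1, N_q, v_2\}$ contains both leaves yet no interior vertex of any $A_\ell$. The paper is visibly aware of this caveat: Lemma~\ref{lem:222} restates the conclusion with the extra disjunct ``or at least one of the super nodes,'' and its hypotheses explicitly exclude the super nodes from the relevant districts.

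The repair runs in the opposite direction from your ``no lane switching'' idea. In the settings where Part 1 is applied verbatim (diamond chains, clause gadgets), the diamond's only attachments to the rest of $G$ are at interior vertices of paths with exactly one interior node, while $v_1$ and $v_2$ have no neighbors outside the diamond. There the correct argument is a first-step argument: since $v_1,v_2\in V_i$ and $G[V_i]$ is connected, some $v_1$--$v_2$ path in $G[V_i]$ must leave $v_1$, and every neighbor of $v_1$ other than along the dangling path is the unique interior vertex $a$ of some $A_\ell$; hence $\{v_1,a,v_2\}\subseteq V_i$ and all vertices of $A_\ell$ are contained, with no need to control where the connecting path goes afterwards (your lane-switching control is unavailable there anyway, since those interior vertices do have outside neighbors). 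Your Part 2 is a correct and clean monotonicity argument --- the leaves pin $V_i$ throughout by Lemma~\ref{lem:leaves}, the switch completing $A_{j'}$ is an expansion of $V_i$, so the full path guaranteed at time $t-1$ survives alongside $A_{j'}$ --- but as written it inherits the Part 1 gap, since it invokes Part 1 at every intermediate step.
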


We use Observation~\ref{obs:cut-containment}\eqref{obs2:path-availability} repeatedly in the hardness reductions, typically, with $m=2$ or $m=3$.


For $r\in \mathbb{N}$, a \emph{diamond chain} of length $r$ is a subgraph of $G$ with $r$ diamonds, $A_1,\ldots A_r$, where each diamond has precisely two interior-disjoint paths between its cut vertices (a \emph{left} path and a \emph{right} path), with one interior node each, and the interior vertex in the right path of $A_i$ is identified with the one in the left path of $A_{i+1}$ for $i=1,\ldots , r-1$.
Denote the interior vertex in the left path of $A_i$ by $a_i$ for $i=1,\ldots , r$; and the interior vertex of the right path of $A_r$ by $a_{r+1}$. A chain of length 6 is depicted in Figure~\ref{fig:chain}.

\begin{figure}[!htb]
\begin{center}
\includegraphics[width=0.5\textwidth]{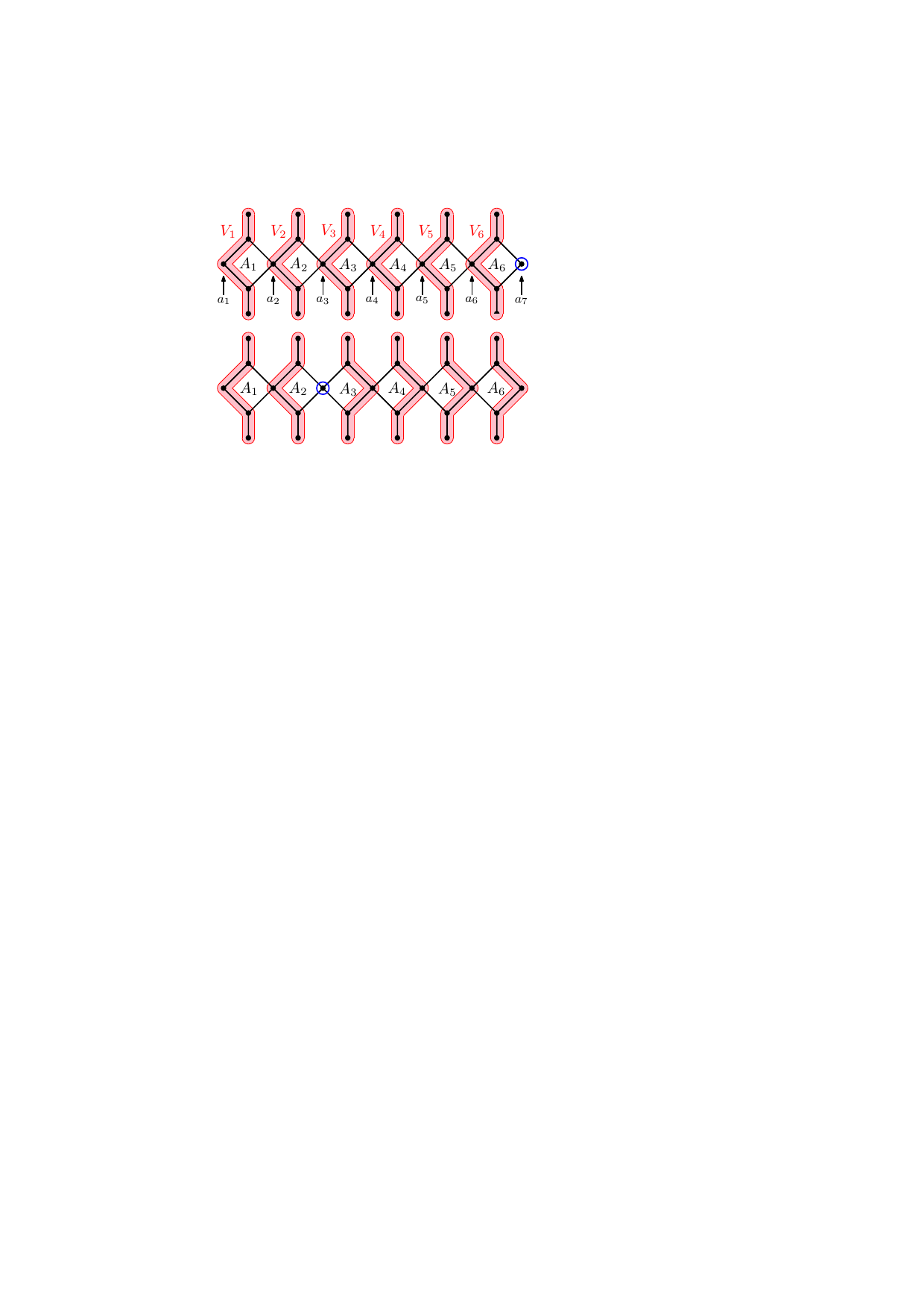}
\caption{A chain of 6 diamonds with two different district maps,
each containing an external singleton at a different point.}
\label{fig:chain}
\end{center}
\end{figure}

Assume that a district map $\Pi$ in which district $V_i$ contains the two leaves of diamond $A_i$, for $i=1,\ldots r$.
\begin{observation}\label{obs:unique-singleton}
In a chain of diamonds, there at most one vertex that is not in any of the districts $V_1,\ldots , V_r$,
and it is one of the vertices $a_1,\ldots a_{r+1}$.
\end{observation}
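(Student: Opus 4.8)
The plan is to analyze the structure of a diamond chain under the given hypothesis and show that at most one vertex can escape the districts $V_1,\ldots,V_r$. Recall the hypothesis: in the district map $\Pi$, each district $V_i$ contains the two leaves $u_1^{(i)}, u_2^{(i)}$ of diamond $A_i$. By Observation~\ref{obs:cut-containment}\eqref{obs1:cut-containment}, $V_i$ therefore contains both cut vertices of $A_i$ together with all vertices of one of the two interior paths (left or right) of $A_i$. The total vertex set of the chain, apart from the shared interior vertices $a_1,\ldots,a_{r+1}$, consists precisely of the leaves, dangling-path vertices, and cut vertices of the diamonds; I would first argue that every such vertex must belong to its associated $V_i$, so that the only vertices that can possibly lie outside $\bigcup_i V_i$ are among $a_1,\ldots,a_{r+1}$.

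\textbf{Key counting step.} The heart of the argument is a counting/covering observation about the shared interior vertices. Each diamond $A_i$ must contain a full interior path (by Observation~\ref{obs:cut-containment}), and each interior path of $A_i$ has a single interior vertex, namely $a_i$ (left) or $a_{i+1}$ (right). Hence for each $i\in\{1,\ldots,r\}$, at least one of $a_i, a_{i+1}$ lies in $V_i$. I would encode this as a system of $r$ constraints over the $r+1$ variables $a_1,\ldots,a_{r+1}$: constraint $i$ requires $a_i\in V_i$ or $a_{i+1}\in V_i$. Since the diamonds share interior vertices only between consecutive indices (the right interior of $A_i$ equals the left interior of $A_{i+1}$), a single vertex $a_j$ can be ``claimed'' by at most the two diamonds $A_{j-1}$ and $A_j$ adjacent to it. The goal is to show these $r$ covering requirements force all but at most one of the $a_j$ to be occupied by some $V_i$.

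\textbf{The main obstacle and how to handle it.} The delicate point — the part I expect to be the main obstacle — is that the districts $V_i$ need not be disjoint in their \emph{demands}: a priori $a_j$ could be demanded simultaneously by $A_{j-1}$ and $A_j$, and one must rule out a configuration where two distinct interior vertices both escape. I would resolve this by a parity/interval argument: think of the choice (left vs.\ right path) made by each diamond as assigning $A_i$ to either $a_i$ or $a_{i+1}$, giving a map $f:\{1,\ldots,r\}\to\{1,\ldots,r+1\}$ with $f(i)\in\{i,i+1\}$ and $a_{f(i)}\in V_i$. An interior vertex $a_j$ is \emph{free} (outside all $V_i$) only if no $i$ maps to $j$, i.e. $j\notin\operatorname{image}(f)$. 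Since $f$ is a function on $r$ elements into $r+1$ slots with each image within distance one of its source, a standard Hall-type / monotonicity argument shows $\operatorname{image}(f)$ misses at most one value of $\{1,\ldots,r+1\}$; moreover any two consecutive diamonds $A_i, A_{i+1}$ cannot both leave a gap without violating that their shared vertex $a_{i+1}$ is demanded by at least one of them. Formalizing ``misses at most one slot'' is exactly the counting that closes the proof: $|\operatorname{image}(f)|\ge r$ because the constraint $f(i)\in\{i,i+1\}$ prevents any value from being hit more than... — more carefully, a value $j$ is hit by at most $\{i : i=j \text{ or } i+1=j\}$, i.e. by $i\in\{j-1,j\}$, so repetitions are limited and $|\{1,\ldots,r+1\}\setminus\operatorname{image}(f)|\le 1$. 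This yields at most one free interior vertex, and combined with the first step (no non-interior vertex is free), establishes that at most one vertex of the entire chain lies outside $V_1\cup\cdots\cup V_r$, and it is one of $a_1,\ldots,a_{r+1}$, as required.
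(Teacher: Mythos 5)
Your overall architecture is sound (non-interior vertices are pinned inside their districts; each $V_i$ must claim $a_i$ or $a_{i+1}$; bound the number of unclaimed slots), but the decisive counting step is wrong as written. You assert $|\operatorname{image}(f)|\ge r$ and justify it by noting that each value $j$ can be hit only by sources $i\in\{j-1,j\}$. That observation alone only yields $|\operatorname{image}(f)|\ge \lceil r/2\rceil$, which is far too weak: for $r=4$, the map $f(1)=f(2)=2$, $f(3)=f(4)=4$ satisfies the constraint $f(i)\in\{i,i+1\}$ for every $i$, yet its image misses the three slots $1,3,5$. So the structural constraint on $f$ does not, by itself, force the conclusion. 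The ingredient you actually need — and which you flagged as "the main obstacle" but then never deployed — is that the districts form a partition, hence are pairwise disjoint. Since $a_{f(i)}\in V_i$, any collision $f(i)=f(i')$ with $i\neq i'$ would place the vertex $a_{f(i)}$ in two distinct districts, which is impossible. Therefore $f$ is injective, $|\operatorname{image}(f)|=r$ exactly, and the image misses exactly one of the $r+1$ slots; combined with your first step this gives the claim. With that one-line repair your proof is correct.

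For comparison, the paper's own proof avoids the slot bookkeeping entirely with a global cardinality count: by Observation~\ref{obs:cut-containment}, each $V_i$ contains at least five vertices of the chain (both leaves, both cut vertices, and at least one interior vertex of $A_i$), the districts are disjoint, and the chain has only $5r+1$ vertices in total, so at most one vertex is left over, and it can be neither a leaf nor a cut vertex — hence it is one of $a_1,\ldots,a_{r+1}$. Note that this count uses disjointness in exactly the place where your argument must as well; there is no route to this observation that avoids it.
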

\begin{proof}
By Observation~\ref{obs:cut-containment}, each district $V_i$, $i\in \{1,\ldots ,r\}$ contains at least five vertices: both leaves and both cut vertices in the diamond $A_i$, and an interior vertex in at least one path between the cut vertices. Since the chain of diamonds has only $5r+1$ vertices, at most one vertex belongs to some other district, and such a vertex is neither a leaf nor a cut vertex.
\end{proof}

We say that the diamond $A_i$ is \emph{switched to the left} (resp., \emph{right}) if $a_i\in V_i$ (resp., $a_{i+1}\in V_i$).
By Observation~\ref{obs:cut-containment}, every diamond in the chain is switched to left or right (or both).
We show that the switch position in one diamond determines all others to its left or right.

\begin{lemma}\label{lem:cascade-switch}
If a diamond $A_i$ is switched to the left ($a_i\in V_i$), then $A_j$ is also switched to the left ($a_j\in V_j$) for all $j\{1,\ldots, i-1\}$. Likewise, if $A_i$ is switched to the right ($a_{i+1}\in V_i$), then $A_j$ is also switched to the right ($a_{j+1}\in V_j$) for all $j\in\{i+1,\ldots, K\}$.
\end{lemma}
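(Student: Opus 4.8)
The plan is to exploit the single structural constraint that drives the whole chain: consecutive diamonds share exactly one interior vertex. By construction, $a_i$ is simultaneously the interior vertex of the right path of $A_{i-1}$ and the interior vertex of the left path of $A_i$. Since each vertex of $G$ lies in exactly one district, $a_i$ can belong to at most one of $V_{i-1}$ and $V_i$. Recalling the definitions, $A_{i-1}$ is switched to the right precisely when $a_i\in V_{i-1}$, and $A_i$ is switched to the left precisely when $a_i\in V_i$; these two events are therefore mutually exclusive.

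First I would record the global fact, already noted after Observation~\ref{obs:unique-singleton}, that every diamond in the chain is switched to the left or to the right (or both). This follows from Observation~\ref{obs:cut-containment}\eqref{obs1:cut-containment}, since $V_i$ must contain an entire interior path of $A_i$, and the interior vertex of each such path is either $a_i$ or $a_{i+1}$. Combining this dichotomy with the exclusivity above yields the engine of the cascade: if $a_i\in V_i$, then $a_i\notin V_{i-1}$, so $A_{i-1}$ is not switched to the right, and hence (being switched to at least one side) $A_{i-1}$ must be switched to the left, i.e.\ $a_{i-1}\in V_{i-1}$.

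With this step in hand, the left statement follows by a downward induction on $j$ from $i$ to $1$: the base case $j=i$ is the hypothesis $a_i\in V_i$, and the inductive step is exactly the engine just described applied to the pair $(A_{j-1},A_j)$, producing $a_{j-1}\in V_{j-1}$. The right statement is entirely symmetric: $A_i$ switched to the right means $a_{i+1}\in V_i$, hence $a_{i+1}\notin V_{i+1}$, so $A_{i+1}$ is not switched to the left and must be switched to the right; an upward induction on $j$ from $i$ to $r$ then finishes it.

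I expect no serious obstacle here; the only care needed is the bookkeeping around the shared-vertex identification and around the (at most one) exceptional vertex of Observation~\ref{obs:unique-singleton}. That exceptional vertex can only be one of the $a_j$, and even if it coincides with a shared vertex $a_i$ it merely forces $a_i\notin V_{i-1}$ and $a_i\notin V_i$ at once, which is consistent with the cascade rather than a threat to it. In particular, under the hypotheses of the lemma the relevant shared vertex is genuinely assigned to a district $V_i$, so this degenerate case never arises along the portion of the chain we are propagating.
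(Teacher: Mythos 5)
Your proof is correct and takes essentially the same route as the paper: district disjointness excludes $A_{j-1}$ being switched to the right when $a_j\in V_j$, the left-or-right dichotomy from Observation~\ref{obs:cut-containment} then forces $a_{j-1}\in V_{j-1}$, and induction propagates the cascade, with the symmetric argument for the right. If anything, your citation of part~\eqref{obs1:cut-containment} for the dichotomy is the more precise one (the paper's proof cites part~\eqref{obs2:path-availability}, though the static fact needed is exactly the one you invoke), and your remark that the single unassigned vertex of Observation~\ref{obs:unique-singleton} cannot disturb the propagation under the lemma's hypothesis is a harmless extra check.
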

\begin{proof}
If $A_i$ is switched to the left ($a_i\in V_i$), then $A_{i-1}$ cannot be switched to the right because the districts are disjoint.
By  Observation~\ref{obs:cut-containment}\eqref{obs2:path-availability}, $A_{i-1}$ must be switched to the left $(a_{i-1}\in V_{i-1}$).
Induction completes the proof. An analogous argument applies when $A_i$ is switched to the right.
\end{proof}

\begin{lemma}\label{lem:switches-across-chain}
Let $\Pi_1$ and $\Pi_2$ be two district maps such that
$a_i\notin \bigcup_{x=1}^r V_x$ in $\Pi_1$ and
$a_j\notin \bigcup_{x=1}^r V_x$ in $\Pi_2$.
Then the distance between $\Pi_1$ and $\Pi_2$ is at least $|j-i|$ in $\Gamma_k(G)$.
\end{lemma}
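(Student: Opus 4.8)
The plan is to track a single real-valued quantity across the chain—essentially the position of the ``gap'' vertex that is not covered by any of the districts $V_1,\ldots,V_r$—and argue that each switch operation can change this quantity by at most one unit. Lemma~\ref{lem:cascade-switch} is the conceptual engine: once we know which single vertex $a_p$ is ``free'' (not in any $V_x$), Observation~\ref{obs:unique-singleton} guarantees this vertex is unique, and Lemma~\ref{lem:cascade-switch} forces every diamond to its left to be switched right and every diamond to its right to be switched left. In other words, the free vertex $a_p$ determines the entire configuration of the chain (up to the internal freedom within each diamond, which does not affect the index $p$). The free index $p$ is therefore a well-defined function of the district map, taking the value $i$ in $\Pi_1$ and $j$ in $\Pi_2$.

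First I would make precise the claim that the free index is well-defined whenever such a free vertex exists. By Observation~\ref{obs:unique-singleton}, at most one of $a_1,\ldots,a_{r+1}$ lies outside $\bigcup_{x=1}^r V_x$, so if one exists it is unique and we may call its index $p(\Pi)$. I would then show that a single switch operation changes $p$ by at most one. The key point is that a switch moves exactly one vertex from one district to an adjacent one. The only way the free index can move is if the currently free vertex $a_p$ is absorbed into an adjacent district $V_{p-1}$ or $V_p$, which simultaneously releases one of the neighboring interior vertices $a_{p-1}$ or $a_{p+1}$ (by the disjointness of districts and the covering requirement of Observation~\ref{obs:cut-containment}\eqref{obs1:cut-containment}, each $V_x$ must always contain an interior vertex on at least one of its two paths). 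Thus after one switch the free vertex is at worst an immediate neighbor in the sequence $a_1,\ldots,a_{r+1}$, so $|p(\Pi')-p(\Pi)|\leq 1$.

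The argument then concludes by a telescoping/Lipschitz estimate: along any sequence of switches taking $\Pi_1$ to $\Pi_2$, the integer-valued function $p$ starts at $i$ and ends at $j$ and changes by at most one per step, so the number of steps is at least $|j-i|$. Since a shortest path in $\Gamma_k(G)$ between $\Pi_1$ and $\Pi_2$ is exactly such a sequence, its length, and hence the distance, is at least $|j-i|$.

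The main obstacle I anticipate is ruling out configurations where no vertex is free during some intermediate district map, or where the ``one switch moves $p$ by at most one'' claim could fail because a switch elsewhere in the chain reorganizes which interior vertices are covered. I would handle this by invoking Observation~\ref{obs:cut-containment}\eqref{obs1:cut-containment} to insist that every $V_x$ always occupies at least one interior vertex of diamond $A_x$; combined with the $5r+1$ vertex count of Observation~\ref{obs:unique-singleton}, this pins down the covering pattern tightly enough that the only degree of freedom affecting $p$ is the local transfer of the free vertex to an adjacent district. A subtle case is when the chain is momentarily in a ``saturated'' state with no free vertex (one district briefly holding two disjoint paths, as in Observation~\ref{obs:cut-containment}\eqref{obs2:path-availability}); here I would argue that such a state is exactly the transitional configuration between $p$ and $p\pm 1$, so it is consistent with the unit-Lipschitz bound and does not break the telescoping count.
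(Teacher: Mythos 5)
Your approach is genuinely different from the paper's, but as written it has a real gap in the handling of what you call the ``saturated'' states. After a switch absorbs the free vertex $a_p$ into $V_{p-1}$ or $V_p$, no vertex of the chain is free (your subsidiary claim that this absorption ``simultaneously releases'' a neighboring interior vertex is false: a single switch moves exactly one vertex). More seriously, your assertion that such a saturated state ``is exactly the transitional configuration between $p$ and $p\pm 1$'' does not hold: in a saturated state exactly one district $V_x$ holds both of its interior vertices $a_x$ and $a_{x+1}$, and this doubled district can then migrate along the chain by single switches --- moving $a_x$ from $V_x$ to $V_{x-1}$ yields a doubled $V_{x-1}$, and so on --- without any free vertex ever reappearing. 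Consequently, two consecutive free states in a switch sequence can have indices differing by an arbitrarily large amount, and the telescoping bound on $p$ alone does not go through. The argument is repairable: extend the potential to saturated states, e.g.\ assign value $x+\tfrac12$ when $V_x$ is the doubled district, and verify that free-to-saturated transitions change the potential by $\tfrac12$ while saturated-to-saturated transitions change it by $1$, so every switch changes the potential by at most $1$ and the distance is still at least $|j-i|$. But that transition analysis is the actual content of your route, and it is missing; the counting argument you gesture at via Observation~\ref{obs:cut-containment} and Observation~\ref{obs:unique-singleton} pins down the covering pattern but does not by itself rule out the migrating-double phenomenon.

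For comparison, the paper avoids intermediate-state analysis altogether. Assuming w.l.o.g.\ $i<j$, Lemma~\ref{lem:cascade-switch} forces the diamonds $A_i,\ldots,A_{j-1}$ to be switched to the right in $\Pi_1$ and to the left in $\Pi_2$; hence for each such index $x$ we have $a_x\notin V_x$ initially and $a_x\in V_x$ finally, so some switch in the sequence must expand $V_x$ into $a_x$. Since each switch expands exactly one district by one vertex, these $j-i$ expansions are pairwise distinct switches, giving the bound directly. This static endpoint-counting is both shorter and immune to the saturated-state subtlety; your potential-function route, once patched as above, would yield the same bound but at the cost of a case analysis over all chain configurations reachable mid-sequence.
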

\begin{proof}
Assume, without loss of generality, that $i<j$. Then the diamonds between $A_i,\ldots, A_{j-1}$ are switched to the right in $\Pi_1$ and to the left in $\Pi_j$ by Lemma~\ref{lem:cascade-switch}. These $j-1$ diamonds must switch. Each diamond requires one switch to expand into its left path (the contraction from the right path not counted, since each switch contracts one district and expands another).
\end{proof}

\subsection{Incontractible}
\label{ssec:LB-incontractible}

In this section, we construct a graph $G$ and two $k$-district maps $\Pi_1$ and $\Pi_2$,
and show that they are at distance distance $\Omega (k^3 + kv)$ in $\Gamma_k(G)$.

\begin{theorem}\label{thm:LB}
For every $k,n\in \mathbb{N}$, $6k\leq n$, there exists a graph $G$ with $n$ vertices and two $k$-district maps, $\Pi_1$ and $\Pi_2$, such that the switch graph $\Gamma_k(G)$ contains a path between $\Pi_1$ and $\Pi_2$, but the length of every such path is $\Omega(k^3+kn)$.
\end{theorem}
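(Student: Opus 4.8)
The plan is to exhibit a single graph $G$ on $n$ vertices whose switch graph has a component of diameter $\Omega(k^3+kn)$, by superimposing two gadgets whose lower bounds add. The $\Omega(kn)$ term I would obtain by reusing the idea behind Theorem~\ref{thm:contractible-LB}: a long ``corridor'' on $\Theta(n)$ vertices along which $\Theta(k)$ district boundaries must each be transported a distance $\Theta(n)$, so that a position monovariant (the total displacement of the boundaries) forces $\Omega(kn)$ switches. The genuinely new ingredient is the $\Omega(k^3)$ term, which must already hold when $n=\Theta(k)$; hence it has to be produced by a gadget on only $\Theta(k)$ vertices that is nonetheless traversed $\Theta(k^2)$ times during \emph{any} reconfiguration.

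For the cubic gadget I would build a nested family of diamond chains from the primitives of Section~\ref{ssec:diamond}. Concretely, take a constant number of diamond chains, each of length $\Theta(k)$ (so $\Theta(k)$ vertices and $\Theta(k)$ districts in total), and couple them so that they behave like the digits of a base-$\Theta(k)$ counter: the carry structure forces the single hole of the low-order chain to sweep from one end to the other (cost $\Theta(k)$ by Lemma~\ref{lem:switches-across-chain}) each time the next chain is to advance by one step, and analogously one level up. The maps $\Pi_1$ and $\Pi_2$ are then the two extreme counter values (all holes at the left end versus all holes having completed a full cascade), together with the two extreme corridor states. Charging the forced hole-sweeps level by level yields $\Theta(k)\cdot\Theta(k)\cdot\Theta(k)=\Theta(k^3)$.

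The steps, in order, would be: (1) describe the combined graph $G$ on exactly $n$ vertices and the maps $\Pi_1,\Pi_2$; (2) show $\Pi_1,\Pi_2$ lie in the same component of $\Gamma_k(G)$ by exhibiting an explicit switching sequence that drives the counter through all its states and slides the corridor boundaries (this sequence is naturally of length $\Theta(k^3+kn)$, which also matches the upper bound); (3) prove the lower bound via a monovariant $\Phi$ that changes by $O(1)$ per switch and differs by $\Omega(k^3+kn)$ between $\Pi_1$ and $\Pi_2$. For the corridor, $\Phi$ is the total boundary displacement; for the counter, $\Phi$ is the total hole-displacement, each unit of which is charged to a distinct switch through Lemma~\ref{lem:switches-across-chain} and Lemma~\ref{lem:cascade-switch}.

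The hard part is the cubic lower bound over \emph{all} switching sequences, not merely the canonical one: I must rule out shortcuts that bypass the counter's carry discipline. The resolution is to verify, using Observation~\ref{obs:cut-containment}\eqref{obs2:path-availability} and Lemma~\ref{lem:cascade-switch}, that the coupling genuinely locks each higher-order chain until the lower-order hole has returned to its extreme position, so that any trajectory realizing the required $\Theta(k)$ advances of the top chain must accumulate total hole-variation $\Omega(k^3)$. Confirming that this lock cannot be defeated, and that the two gadgets do not interfere (so that their lower bounds are additive within one connected component), is where the bulk of the technical work lies.
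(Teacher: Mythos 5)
Your overall strategy is sound in outline --- diamond chains pinned by incontractible districts, Lemmas~\ref{lem:cascade-switch} and~\ref{lem:switches-across-chain} to charge hole motion, and a separate $\Omega(kn)$ corridor --- but the decisive step is missing. The entire cubic bound rests on a ``carry structure'' that locks each higher-order chain until the lower-order hole completes a full sweep, and you never construct it; you only assert it and defer its verification. Nothing in the primitives of Section~\ref{ssec:diamond} provides an \emph{inter-chain} constraint: Lemma~\ref{lem:cascade-switch} and Observation~\ref{obs:cut-containment}\eqref{obs2:path-availability} constrain hole positions \emph{within} one chain, so a diamond in chain two can always flip whenever its own chain's hole is adjacent, regardless of chain one's state, unless the graph is specifically wired to prevent this. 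Designing that wiring is exactly the creative content the proof needs, and in this model it is delicate: the paper does \emph{not} build a counter at all. Instead it couples two chains (of lengths $r$ and $r-1$, with $r=\Theta(k)$) by a ``spiral'' path $S$ whose edges visit the interior vertices $a_1,\ldots,a_{r+1}$ and $b_1,\ldots,b_r$ in an order with quadratically summing index gaps, attaches trees $D_1,D_2$ holding $\ell=\Theta(k)$ mobile singleton districts, and shows (Claims~\ref{cl:shortest-path} and~\ref{cl:singleton-at-a-time}, via Observation~\ref{obs:unique-singleton}) that each mobile district must traverse $S$ alone, paying $\Theta(r^2)$ forced diamond reconfigurations per traversal by Lemma~\ref{lem:switches-across-chain}. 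The factorization is thus $\ell\cdot\Theta(r^2)=\Theta(k^3)$ with no locking mechanism needed, plus $(\ell-1)\cdot 2q=\Theta(kn)$ from paths of length $q=\Theta(n)$ in $D_1,D_2$.

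There is also an internal flaw in your step (3) as stated. A monovariant $\Phi$ that changes by $O(1)$ per switch and ``differs by $\Omega(k^3)$ between $\Pi_1$ and $\Pi_2$'' cannot exist for your counter: with a constant number of chains of length $\Theta(k)$, the total hole displacement between \emph{any} two configurations is $O(k)$, so the endpoint difference of your proposed $\Phi$ is $O(k)$, not $\Omega(k^3)$. The cubic cost can only arise as total \emph{variation} of hole positions along the trajectory, and lower-bounding path variation is precisely what requires the (unconstructed) lock --- it is not a potential-difference argument. The paper sidesteps this by charging each of the $\ell$ forced traversals separately: between consecutive spiral positions $a_i$ and $a_j$ at least $|j-i|$ switches occur in chain $A$ (Lemma~\ref{lem:switches-across-chain}), and the spiral's ordering makes these gaps sum to $\binom{r+1}{2}$ per traversal, with Claim~\ref{cl:singleton-at-a-time} guaranteeing traversals cannot share this work. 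Until you exhibit a concrete coupling graph and prove no switching sequence can defeat it, the $\Omega(k^3)$ term is unestablished; the $\Omega(kn)$ corridor term, by contrast, is fine.
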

\begin{proof}
We construct a graph $G$ in terms of three parameters, $r$, $q$, and $\ell$, and choose their values at the end of the proof.
Create two disjoint chains of diamonds, $A=(A_1,\ldots, A_r)$ and $B=(B_1,\ldots ,B_{r-1})$, of lengths $r$ and $r-1$, respectively, where $r$ is an even integer to be specified later. Denote the interior vertices of the paths in the diamonds by
$a_1,\ldots , a_{r+1}$ and $b_1,\ldots, b_r$, respectively. Insert a (spiral) path $S$ of length $2r$ on these vertices constructed as follows: Connect $a_1$ to $b_{r/2}$. For $i\in \{2,\ldots , r/2 + 1\}$, connect $a_i$ to $b_{r/2+i-1}$ and $b_{r/2-i+1}$. For $i\in \{r/2 + 2,\ldots,  r + 1\}$, connect $a_i$ to $b_{i-1-r/2}$ and $b_{3r/2-i-2}$. Connect $a_{r/2+1}$ to $b_r$.
Finally, create two trees, $D_1$ and $D_2$, each consists of a path of length $q$ and
$\ell$ leaves attached to one endpoint; and identify the other endpoints of the paths
with $a_1$ and $a_{r/2+1}$, respectively.
See Fig.~\ref{fig:fans} for an example.
The total number of vertices is $n=5(2r-1)+2+2\ell+2q-2=10r+2\ell+2q-5$.

\begin{figure}[!htb]
\begin{center}
\includegraphics[width=0.55\textwidth]{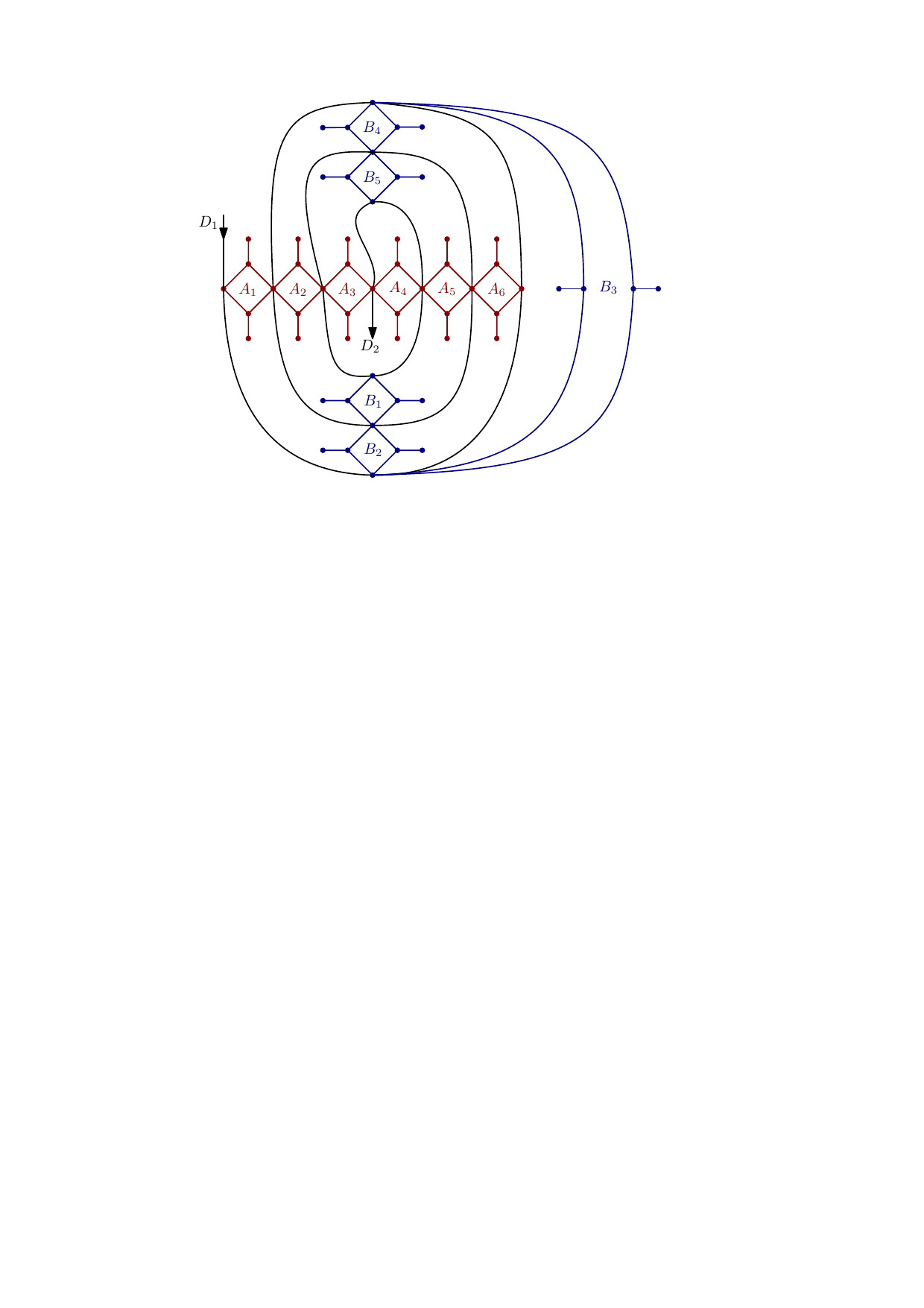}
\caption{Graph $G$ for $r=6$, with two chains of diamonds, and a spiral $S$.}
\label{fig:incont-lower}
\end{center}
\end{figure}

\begin{figure}[!htb]
\begin{center}
\includegraphics[width=0.65\textwidth]{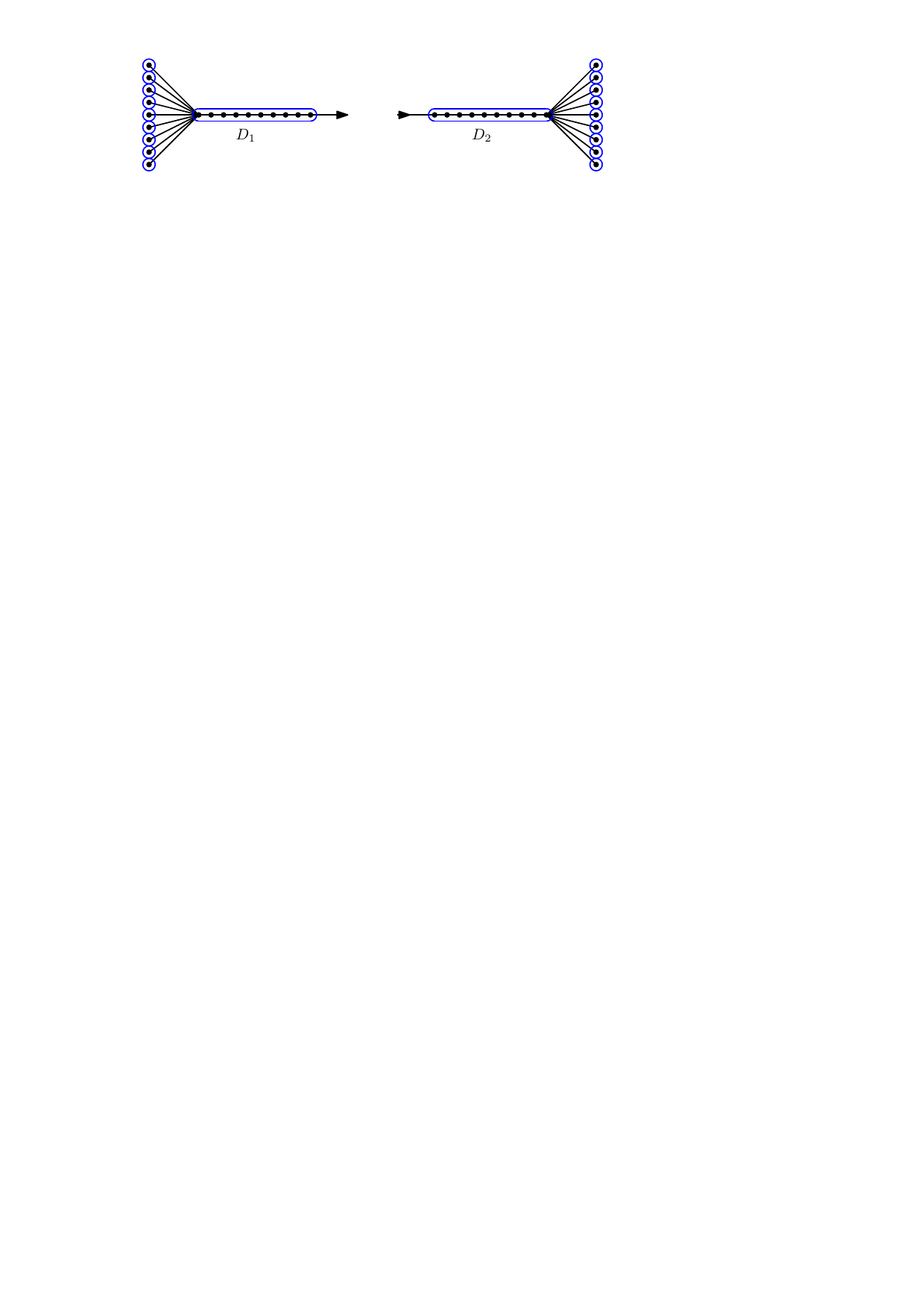}
\caption{$D_1$ in $\Pi_1$, and $D_2$ in $\Pi_2$}
\label{fig:fans}
\end{center}
\end{figure}

We construct two district maps, $\Pi_1$ and $\Pi_2$ on $G$. Both $\Pi_1$ and $\Pi_2$ have $2r - 1$ diamond districts switched to the left (i.e., $a_i\in V_i$ and $b_i\in W_i$), and the last diamond ($A_r$ and $B_{r-1}$) are switched to both left and right.
These districts contain all vertices of the two chains of diamonds.
The district map $\Pi_1$ contains $\ell$ singleton districts at the leaves of $D_1$
and one district for the path in $D_1$, and one district for all vertices of $D_2$;
while $\Pi_1$ contains one district in $D_1$, $\ell$ singleton districts
at the leaves of $D_2$, and one in the path in $D_2$.
The total number of districts is $k=(2r-1)+\ell+2$.

Consider a shortest sequence $\sigma$ of switches that takes $\Pi_1$ to $\Pi_2$
(we later show that such a sequence exists). The sequence $\sigma$ defines
a one-to-one correspondence between the districts in $\Pi_1$ and $\Pi_2$.
Both $\Pi_1$ and $\Pi_2$ contain $2r-1$ identical districts in diamonds,
that each contain two leaves. By Lemma~\ref{lem:con1}, these district
are fixed in the one-to-one correspondence.
We call the remaining $\ell+2$ districts \emph{mobile}, and conclude that
$\sigma$ moves the mobile districts in $\Pi_1$ to mobile districts in $\Pi_2$.
In particular, at least $\ell$ mobile districts move from $D_1$ to $D_2$.

\begin{claim}\label{cl:shortest-path}
If a mobile district moves from $D_1$ to $D_2$, then it moves along the spiral $S$.
\end{claim}
\begin{proof}
By Observation~\ref{obs:unique-singleton}, at most one mobile district may reside in a diamond chain at a time, but cannot travel along the edges in that chain because all adjacent vertices are cut vertices. Thus the only available edges for switches are the edges of the spiral $S$.
\end{proof}
\begin{claim}\label{cl:singleton-at-a-time}
At most one mobile district is outside of $D_1$ and $D_2$ at any time.
\end{claim}
\begin{proof}
Suppose, to the contrary, that two mobile districts are outside of $D_1$ and $D_2$ at the same time.
Then they must be in different chains, by Observation~\ref{obs:unique-singleton}. Neither can move until one of them leaves to $D_1$ or $D_2$, because they prevent the diamonds in the chain from switching between left and right (cf.~Observation~\ref{obs:cut-containment}).
However, then one of the two mobile districts would have to move back to its previous position. Therefore, we could eliminate two switches from $\sigma$, contradicting the assumption that $\sigma$ is a shortest sequence that takes $\Pi_1$ to $\Pi_2$.
\end{proof}
This claim conveniently means that the shortest sequence of switches from $\Pi_1$ to $\Pi_2$
moves each mobile district through the spiral $S$ independently.

Note that there exists a sequence of switches that takes $\Pi_1$ to $\Pi_2$.
Indeed, it is enough to move $\ell$ mobile districts from $D_1$ to $D_2$.
We can move them, one at a time, along the spiral $S$.
Since the spiral $S$ alternates between chain $A$ and chain $B$,
when the mobile district is a singleton in one chain, we can reconfigure
the other chain to make the next vertex of $S$ available.

\smallskip\noindent\textbf{Analysis.}
We analyze the length of the sequence of switches $\sigma$ that takes $\Pi_1$ to $\Pi_2$.
By Claims~\ref{cl:shortest-path} and~\ref{cl:singleton-at-a-time}, we may assume that one mobile district moves from $D_1$ to $D_2$ along $S$, and then multiply by $\ell$, the number of singletons. We further break down the cost by analyzing the number of switches needed to move a mobile district from $a_i$ to $b_{r/2+i}$, for $i < r/2$.

As a mobile district travels through $S$, it visits all vertices in $\{a_1,\ldots, a_{r+1}\}$.
Between a visit to $a_i$ and $a_j$, at least $|j-i|$ diamonds in the chain $A$ must be reconfigured, using at least $|j-i|$ switches.
The summation of gaps between consecutive vertices in in $\{a_1,\ldots, a_{r+1}\}$ is an arithmetic progression that decreases from
$r$ to 1, and sums to $\sum_{i=1}^r i=\binom{r+1}{2}$. Similarly, the summation of gaps between consecutive vertices in in $\{b_1,\ldots, b_{r}\}$ is an arithmetic progression that increases from $1$ to $r-1$, and sums to $\sum_{i=1}^{r} i=\binom{r}{2}$.
The switches that reconfigure diamonds in the chain $A$ do not have any impact in the chain $B$.

Furthermore, at least $\ell-1$ mobile districts move from some leaves of $D_1$ to some leaves of $D_2$.
These mobile district have to travel through the two paths of length $q$ in $D_1$ and $D_2$, respectively.
Any switch that expands a mobile district inside $D_1$ or $D_2$ is distinct from switches that reconfigure diamonds.
Consequently, the number of switches in $\sigma$ is at least
\[
|\sigma|\geq \ell\left(\binom{r+1}{2}+\binom{r}{2}\right)+(\ell-1)2q = \Theta(\ell r^2+\ell q).
\]
Let us choose the parameters so that $q=\Theta(n)$ and $r,\ell=\Theta(k)$, and then
$|\sigma|\geq \Omega(k^3+kn)$, as claimed.
\end{proof}

}{}

\end{document}